\newtheorem{theorem}{Theorem}
\newtheorem{lemma}[theorem]{Lemma}
\newtheorem{corollary}[theorem]{Corollary}
\newtheorem{proposition}[theorem]{Proposition}
\newtheorem{observation}[theorem]{Observation}
\newtheorem{openprob}[theorem]{Open Problem}
\newenvironment{proof}{\par \noindent \textbf{Proof.}
}{\hfill$\Box$\medskip}
\newenvironment{proofof}[1]{\par \noindent \textbf{Proof of #1.}
}{\hfill$\Box$\medskip} 
\newcommand{\set}[1]{\left\{#1\right\}}
\newcommand{\paren}[1]{\left(#1\right)}
\newcommand{\ceil}[1]{\left\lceil #1 \right\rceil}
\newcommand{\fq}{\mathbb{F}_q}
\newcommand{\spk}[1]{SP_{#1}}
\newcommand{\spq}{\spk{q}}
\newcommand{\tr}{Tr(\spq)}
\newcommand{\trk}[1]{Tr(\spk{#1})}
\newcommand{\alphab}{\bar{\alpha}}
\newcommand{\Prop}[1]{P_{#1}}
\newcommand{\vt}[1]{\overrightarrow{#1}}
\newcommand{\resign}[2]{#1^{(#2)}}
\DeclareMathOperator{\sq}{sq}
\DeclareMathOperator{\twin}{atw}
\date{}
\begin{document}

\begin{frontmatter}

\title{Homomorphisms of signed planar graphs\thanksref{grant}} 
\author{Pascal Ochem},
\ead{Pascal.Ochem@lirmm.fr}
\ead[url]{http://www.lirmm.fr/\~{}ochem}
\author{Alexandre Pinlou\thanksref{UPV}}
\ead{Alexandre.Pinlou@lirmm.fr}
\ead[url]{http://www.lirmm.fr/\~{}pinlou}

\thanks[grant]{This work was partially supported by the ANR grant
 EGOS 12-JS02-002-01 and by the PEPS grant HOGRASI.}
\thanks[UPV]{Second affiliation: D\'epartement de Math\'ematiques et Informatique Appliqu\'es,
 Universit\'e Paul-Val\'ery, Montpellier 3, France.}

\address{LIRMM, Universit\'e Montpellier 2, CNRS, France.}

\author{Sagnik Sen}
\ead{Sagnik.Sen@labri.fr}

\address{LaBRI, Universit\'e de Bordeaux, CNRS, France.}

\begin{abstract}
  Signed graphs are studied since the middle of the last century.
  Recently, the notion of homomorphism of signed graphs has been
  introduced since this notion captures a number of well known
  conjectures which can be reformulated using the definitions of
  signed homomorphism.
  
  In this paper, we introduce and study the properties of some target
  graphs for signed homomorphism. Using these properties, we
  obtain upper bounds on the signed chromatic numbers of graphs with
  bounded acyclic chromatic number and of signed planar graphs with
  given girth.
\end{abstract}

\begin{keyword}
Signed graphs, Homomorphisms, Discharging method.
\end{keyword}

\end{frontmatter}

\journal{}

\section{Introduction}

The class of signed graphs is a natural graph class where the edges
are either positive or negative. They were first introduced to handle
problems in social psychology: positive edges link friends whereas
negative ones link enemies. 

In the area of graph theory, they have been used as a
way of extending classical results in graph coloring such as
Hadwiger’s conjecture. Guenin~\cite{gue05} introduced the notion of
signed homomorphism for its relation with a well known conjecture of
Seymour. In 2012, this notion has been further
developed by Naserasr et al.~\cite{nrs12} as this theory captures a
number of well known conjectures which can be reformulated using the
definitions of signed homomorphism. In this paper, we study signed
homomorphisms for themselves. 

\bigskip

A \emph{signified graph} $(G,\Sigma)$ is a graph $G$ with an
assignment of positive ($+1$) and negative ($-1$) signs to its edges
where $\Sigma $ is the set of negative edges. In all the figures,
negative edges are drawn with dashed edges. Figure~\ref{fig:example}
gives an example of signified graph. \emph{Resigning} a vertex $v$ of
a signified graph $(G,\Sigma)$ corresponds to give the opposite sign
to the edges incident to $v$. Given a signified graph $(G,\Sigma)$ and
a set of vertices $X\subseteq V(G)$, the graph obtained from
$(G,\Sigma)$ by resigning every vertex of $X$ is denoted by
$(G,\resign{\Sigma}{X})$.

\begin{figure}
  \begin{center}
    \subfigure[\label{fig:example}]{\includegraphics[scale=0.8]{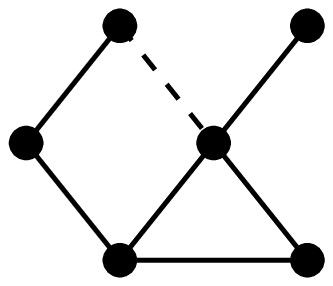}} $\qquad\qquad$
    \subfigure[]{\includegraphics[scale=0.8]{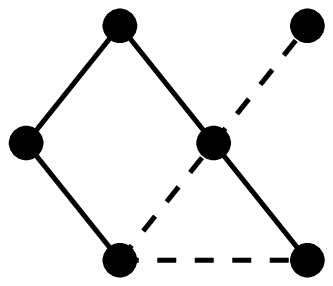}}
  \caption{Two equivalent signified graphs.\label{fig:equivalent}}
  \end{center}
\end{figure}

Two signified graphs $(G,\Sigma_1)$ and $(G,\Sigma_2)$ are said to be
\emph{equivalent} if we can obtain $(G,\Sigma_1)$ from $(G,\Sigma_2)$
by resigning some vertices of $(G,\Sigma_2)$, i.e.,
$\Sigma_2=\resign{\Sigma_1}{X}$ for some $X\subseteq V(G)$; in such a
case, we use the notation $(G, \Sigma_1)\sim(G,\Sigma_2)$ (see
Figure~\ref{fig:equivalent} for an example of equivalent signified
graphs).  Each equivalence class defined by the resigning process is
called a \emph{signed graph} and can be denoted by any member of its
class.  We might simply use $(G)$ for a signified/signed graph when
its set of negative edges is clear from the context, while $G$ refers
to its underlying unsigned graph.

An \emph{$m$-edge-colored graph} $G$ is a graph where the vertices are
linked by edges $E(G)$ of $m$ types. In other words, there is a
partition $E(G)=E_{1}(G)\cup \ldots \cup E_{m}(G)$ of the edges of
$G$, where $E_{j}(G)$ contains all edges of type $j$.  Since signified
graphs are defined with two types of edges (i.e.  positive and
negative edges), they correspond to $2$-edge-colored graphs.  In this
paper, known and new results on $2$-edge-colored graphs are stated in
terms of signified graphs.
 
Given two graphs $(G,\Sigma)$ and $(H,\Lambda)$, $\varphi$ is a
\emph{signified homomorphism} of $(G,\Sigma)$ to $(H,\Lambda)$ if
$\varphi : V(G) \longrightarrow V(H)$ is a mapping such that every
edge of $(G,\Sigma)$ is mapped to an edge of the same sign of $(H,
\Lambda)$. Given two graphs $(G,\Sigma_1)$ and $(H,\Lambda_1)$, we
say that there is a \emph{signed homomorphism} $\varphi$ of $(G,\Sigma_1)$
to $(H,\Lambda_1)$ if there exists $(G,\Sigma_2)\sim (G,\Sigma_1)$ and
$(H,\Lambda_2) \sim (H,\Lambda_1)$ such that $\varphi$ is a signified
homomorphism of $(G,\Sigma_2)$ to $(H,\Lambda_2)$.
 
\begin{lemma}\label{altdef}
  If $(G,\Sigma)$ admits a signed homomorphism to $(H,\Lambda)$, then
  there exists $(G,\Sigma')\sim (G,\Sigma)$ such that $(G,\Sigma')$
  admits a signified homomorphism to $(H,\Lambda)$.
\end{lemma}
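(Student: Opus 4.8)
The plan is to exploit the one degree of freedom we do not yet control --- the choice of signature on $H$ --- and transfer it to $G$ by pulling the resigning back along the homomorphism. First I would unpack the hypothesis: by the definition of signed homomorphism there exist equivalent signifieds $(G,\Sigma_2)\sim(G,\Sigma)$ and $(H,\Lambda_2)\sim(H,\Lambda)$ together with a map $\varphi\colon V(G)\to V(H)$ that is a \emph{signified} homomorphism of $(G,\Sigma_2)$ to $(H,\Lambda_2)$. Since $(H,\Lambda_2)\sim(H,\Lambda)$, fix a set $Y\subseteq V(H)$ with $\Lambda=\resign{\Lambda_2}{Y}$.

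Next I would define the candidate signature on $G$ by $X=\varphi^{-1}(Y)$ and $\Sigma'=\resign{\Sigma_2}{X}$. Then $(G,\Sigma')\sim(G,\Sigma_2)\sim(G,\Sigma)$, so it remains only to check that $\varphi$ is a signified homomorphism of $(G,\Sigma')$ to $(H,\Lambda)$. The verification is a sign-bookkeeping argument, edge by edge, based on the elementary observation that resigning a vertex set $Z$ flips the sign of an edge exactly when that edge has precisely one endpoint in $Z$ (an edge with both endpoints in $Z$ is flipped twice). Let $uv\in E(G)$; since $\varphi$ maps $E(G)$ into $E(H)$ we have $\varphi(u)\neq\varphi(v)$, so ``exactly one of $u,v$ lies in $X=\varphi^{-1}(Y)$'' is equivalent to ``exactly one of $\varphi(u),\varphi(v)$ lies in $Y$''. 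Consequently the sign of $uv$ differs between $\Sigma_2$ and $\Sigma'$ if and only if the sign of $\varphi(u)\varphi(v)$ differs between $\Lambda_2$ and $\Lambda$. As $\varphi$ is a signified homomorphism of $(G,\Sigma_2)$ to $(H,\Lambda_2)$, the $\Sigma_2$-sign of $uv$ equals the $\Lambda_2$-sign of $\varphi(u)\varphi(v)$; applying these two synchronized flips, the $\Sigma'$-sign of $uv$ equals the $\Lambda$-sign of $\varphi(u)\varphi(v)$, which is exactly what is needed.

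I do not expect a serious obstacle here: the whole content is the realization that resigning $X=\varphi^{-1}(Y)$ in $G$ mirrors, along every edge, the resigning of $Y$ in $H$. The only point requiring a little care is that $\varphi$ need not be injective, but this is harmless since we only ever compare the two (necessarily distinct) images of the endpoints of a single edge, never images of non-adjacent vertices.
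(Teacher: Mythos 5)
Your proposal is correct and follows exactly the paper's own argument: pull the resigning set on $H$ back along $\varphi$ to a resigning set on $G$ and check signs edge by edge. The only difference is that you spell out the sign-bookkeeping that the paper dismisses as ``clear,'' which is a harmless elaboration.
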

 
\begin{proof}
  Since $(G,\Sigma)$ admits a signed homomorphism to $(H,\Lambda)$,
  this implies that there exist $(G,\Sigma'')\sim(G,\Sigma)$,
  $(H,\Lambda')\sim(H,\Lambda)$ and a signified homomorphism $\varphi$
  of $(G,\Sigma'')$ to $(H,\Lambda')$. Let $X\subseteq V(H)$ be the
  subset of vertices of $H$ such that
  $(H,\Lambda)=(H,\resign{\Lambda'}{X})$. Now let $Y=\set{v\in V(G)
    \mid \varphi(v)\in X}$. Let $(G,\Sigma') =
  (G,\resign{\Sigma''}{Y})$; it is clear that $\varphi$ is a signified
  homomorphism of $(G,\Sigma')$ to $(H,\Lambda)$.
\end{proof} 
 
As a consequence of the above lemma, when dealing with signed
homomorphisms, we will not need to resign the target graph.

The \emph{signified chromatic number} $\chi_2(G,\Sigma)$ of the graph
$(G,\Sigma)$ is the minimum order (number of vertices) of a graph
$(H,\Lambda)$ such that $(G,\Sigma)$ admits a signified homomorphism
to $(H,\Lambda)$. Similarly, the \emph{signed chromatic number}
$\chi_s(G,\Sigma)$ of the graph $(G,\Sigma)$ is the minimum order of a
graph $(H,\Lambda)$ such that $(G,\Sigma)$ admits a signed
homomorphism to $(H,\Lambda)$; equivalently,
$\chi_s(G,\Sigma)=\min\set{\chi_2(G,\Sigma') \mid
  (G,\Sigma')\sim(G,\Sigma)}$.

The \emph{signified chromatic number $\chi_2(G)$} of a graph $G$ is
defined as $\chi_2(G)=\max\{\chi_2(G,\Sigma) \mid \Sigma\subseteq E(G)
\}$. The \emph{signified chromatic number $\chi_2(\mathcal{F})$} of a
graph class $\mathcal{F}$ is defined as
$\chi_2(\mathcal{F})=\max\{\chi_2(G)\mid G\in\mathcal{F}\}$.  The
\emph{signed chromatic numbers} of a graph and a graph class are
defined similarly.

Another equivalent definition of the signified chromatic numbers can
be given by defining the signified coloring. A \emph{signified
  coloring} of a signified graph $(G)$ is a proper vertex-coloring
$\varphi$ of $G$ such that if there exist two edges $uv$ and $xy$ with
$\varphi(u)=\varphi(x)$ and $\varphi(v)=\varphi(y)$, then these two
edges have the same sign. Hence, the \emph{signified chromatic number}
of the signified graph $(G)$ is the minimum number of colors needed
for a signified coloring of $(G)$.

In this paper, we studied signified and signed homomorphisms of
outerplanar and planar graphs of given girth. The paper is organized
as follows. We introduce the notation in Section~\ref{sec:notation}.
Section~\ref{sec:target} is devoted to introduce and study the
properties of several families of target graphs, namely the
\emph{Anti-twinned graph $AT(G,\Sigma)$}, the \emph{signified Zielonka
  graph $ZS_k$}, the \emph{signified Paley graph $\spq$}, and the
\emph{signified Tromp Paley graph $\tr$}. We study signified
homomorphisms of planar graphs (resp. outerplanar graphs) in
Section~\ref{sec:signified} and we provide lower and upper bounds on
the signified chromatic number. We get upper bounds on the signed
chromatic number of planar graphs (resp. outerplanar graphs) of given
girth in Section~\ref{sec:signed}. We finally conclude in
Section~\ref{sec:conclusion}.

\section{Notations}\label{sec:notation}

For a vertex $v$ of a signified graph $(G)$, $d_{(G)}(v)$ denotes the
degree of $v$. The set of positive neighbors of $v$ is denoted by
$N^+_{(G)}(v)$ and the set of negative neighbors of $v$ is denoted by
$N^-_{(G)}(v)$. Thus, the set of neighbors of $v$, denoted by
$N_{(G)}(v)$, is $N_{(G)}(v)=N^+_{(G)}(v)\cup N^-_{(G)}(v)$. A vertex
of degree $k$ (resp. at least $k$, at most $k$) is called a $k$-vertex
(resp. $^\ge k$-vertex, $^\le k$-vertex).  If a vertex $u$ is adjacent
to a $k$-vertex ($^\ge k$-vertex, $^\le k$-vertex) $v$, then $v$ is a
$k$-neighbor (resp. $^\ge k$-neighbor, $^\le k$-neighbor) of $u$. A
path of length $k$ (i.e. formed by $k$ edges) is called a $k$-path.
Given a planar graph $G$ with its embedding in the plane and a vertex
$v$ of $G$, we say that a sequence $(u_1, u_2,\cdots, u_k)$ of
neighbors of $v$ are \emph{consecutive} if $u_1, u_2,\cdots, u_k$
appear consecutively around $v$ in $G$ (clockwise or
counterclockwise).

\section{Target graphs}\label{sec:target}

In this section, our goal is not only to find target graphs that will
give the required upper bounds of our results of
Sections~\ref{sec:signified} and~\ref{sec:signed}. We aim at
describing several families of target graphs that may be useful for
signified and signed homomorphisms and we determine their properties.
To this end, we describe below the \emph{Anti-twinned graph}
construction, the \emph{signified Zielonka graph $ZS_k$}, the
\emph{signified Paley graph} and the \emph{Tromp signified Paley
  graph}.

\subsection{The Anti-twinned graph}

Let $(G,\Sigma)$ be a signified graph and let $(G^0,\Sigma^0)$ and
$(G^1,\Sigma^1)$ be two isomorphic copies of $(G,\Sigma)$. In the
following, given a vertex $u\in V(G)$, we denote $u_i$ the
corresponding vertex of $u$ in the isomorphic copy $(G^i,\Sigma^i)$
of $(G,\Sigma)$. We define the \emph{anti-twinned graph} $AT(G,\Sigma)=
(H,\Lambda)$ on $2|V(G)|$ vertices as follows:

\begin{itemize}
\item $V(H)=V(G^0) \cup V(G^1)$
\item $E(H)=E(G^0) \cup E(G^1) \cup \set{u_iv_{1-i}: uv\in E(G)}$
\item $\Lambda=\Sigma^0 \cup \Sigma^1 \cup \set{u_iv_{1-i}: uv\in E(G)
    \setminus \Sigma}$
\end{itemize}

\begin{figure}
 \begin{center} 
 \includegraphics[scale=0.7]{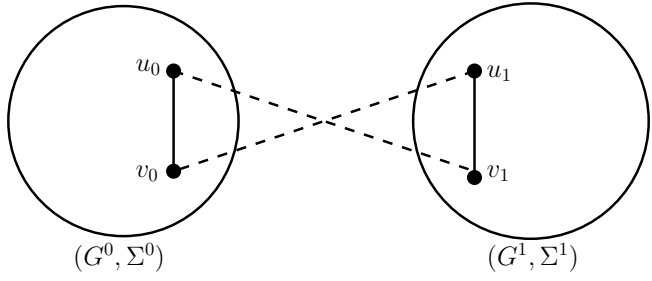}
 \caption{The anti-twinned graph $AT(G,\Sigma)$.\label{fig:doubled_graph}}
 \end{center}
\end{figure}
 
Figure~\ref{fig:doubled_graph} illustrates the construction of
$AT(G,\Sigma)$. We can observe that for every $u_i\in V(G^i)$,
there is no edge between $u_i$ and $u_{1-i}$. By
construction we have the following property: 
\begin{linenomath}
 $$\forall u_i\in AT(G,\Sigma) : N^+(u_i)=N^-(u_{1-i}) \mbox{ and } N^-(u_i)=N^+(u_{1-i})$$
\end{linenomath}
Such pairs of vertices
are called \emph{anti-twin vertices}, and for any $u\in AT(G,\Sigma)$
we denote by $\twin(u)$ the anti-twin vertex of $u$. Remark that
$\twin(\twin(u))=u$. This notion can be extended to sets in a
standard way: for a given $W\subseteq V(G^i)$,
$W=\{v_1,v_2,\ldots,v_k\}$, then $\twin(W)=
\set{\twin(v_1),\twin(v_2),\ldots,\twin(v_k)}$.

We say that a signified graph is \emph{anti-twinned} if it is the
anti-twinned graph of some signified graph.

\begin{observation}\label{obs:doubled_twin}
A signified graph is anti-twinned if and only if each of its vertices has a
unique anti-twin.
\end{observation}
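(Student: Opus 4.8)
The plan is to prove the two implications separately; the first is almost immediate and the second is where the work lies. \emph{Necessity.} Suppose $(H,\Lambda)=AT(G,\Sigma)$. The displayed identity stated just above the observation already exhibits, for every vertex $u_i$ of $H$, the vertex $u_{1-i}$ as an anti-twin of $u_i$, so I only have to rule out a second one. Given any anti-twin $w$ of $u_i$, the defining equalities give $N^+(w)=N^-(u_i)=N^+(u_{1-i})$ and $N^-(w)=N^+(u_i)=N^-(u_{1-i})$, so $w$ and $u_{1-i}$ have the same positive neighbourhood and the same negative neighbourhood; it then remains to see that in $AT(G,\Sigma)$ no two distinct vertices can share this data, which is the delicate point and is where one appeals to the way the cross-edges glue the two copies together.

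\emph{Sufficiency.} Assume every vertex of $(H,\Lambda)$ has a unique anti-twin, and write $\twin(u)$ for it. First I record that $\twin$ is a fixed-point-free involution: the anti-twin relation is symmetric, so $u$ is an anti-twin of $\twin(u)$ and uniqueness forces $\twin(\twin(u))=u$; and an edge between $u$ and $\twin(u)$ is impossible, since (using $N^+(\twin(u))=N^-(u)$ and $N^-(\twin(u))=N^+(u)$) it would place $u$ in its own neighbourhood, while a vertex equal to its own anti-twin must be isolated, a degenerate case one treats apart. Consequently $V(H)$ partitions into pairs $\set{u,\twin(u)}$; I choose a transversal $R$ and set $(G,\Sigma):=(H[R],\Lambda\cap E(H[R]))$, then identify $R$ with $V(G^0)$ via $u\mapsto u_0$ and $\twin(R)=V(H)\setminus R$ with $V(G^1)$ via $\twin(u)\mapsto u_1$.

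It then remains to check that under these identifications $H=AT(G,\Sigma)$, via three verifications, each a short computation from the two identities above: (i) $\twin$ is a sign-preserving isomorphism from $H[R]$ onto $H[\twin(R)]$, so the latter is a genuine copy $(G^1,\Sigma^1)$ of $(G,\Sigma)$; (ii) for $u,v\in R$ there is an edge $u\twin(v)$ in $H$ exactly when $uv\in E(G)$, and such a cross-edge is negative exactly when $uv\in E(G)\setminus\Sigma$, which is precisely the rule defining $\Lambda$ in $AT(G,\Sigma)$; (iii) there is no edge between $u$ and $\twin(u)$, so $E(H)$ decomposes as $E(G^0)\cup E(G^1)\cup\set{u_iv_{1-i}:uv\in E(G)}$ with no leftover edges. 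Since $|V(H)|=2|R|=2|V(G)|$, this yields $H=AT(G,\Sigma)$. The step I expect to be the main obstacle is the sign bookkeeping in (i) and (ii): one must compose the anti-twin identity $N^{\pm}(\twin(u))=N^{\mp}(u)$ with the symmetry of adjacency in exactly the right order so that the signs on the second copy $(G^1,\Sigma^1)$ and on the cross-edges come out agreeing with the definition of $AT$ rather than flipped, and the analogous care is what makes the uniqueness claim in the necessity direction nontrivial.
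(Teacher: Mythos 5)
Your sufficiency argument is correct and complete in all essentials: uniqueness plus symmetry of the anti-twin relation makes $\twin$ an involution, the absence of an edge $u\,\twin(u)$ follows as you say (such an edge would put $u\in N^{\pm}(u)$), the vertex set splits into pairs, and the deferred sign computations do work out -- applying $N^{\pm}(\twin(x))=N^{\mp}(x)$ at both endpoints flips the sign twice on $H[\twin(R)]$ and once on a cross-edge $u\,\twin(v)$, which is exactly the sign rule in the definition of $AT(G,\Sigma)$. The paper offers no proof of this observation at all, and where it is invoked (Proposition~\ref{prop:ZSk}) only this direction is used, so this half of your write-up is the substantive content.

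The necessity direction, however, has a genuine gap at precisely the point you flag as ``delicate'', and that gap cannot be closed: the claim that no two distinct vertices of $AT(G,\Sigma)$ share the same positive and the same negative neighbourhood is false in general. If $G$ contains two vertices $u,v$ with $N^+_G(u)=N^+_G(v)$ and $N^-_G(u)=N^-_G(v)$ -- take $G$ to be the all-positive path $u\,w\,v$ with $\Sigma=\emptyset$ -- then in $AT(G,\Sigma)$ one computes $N^+(v_1)=\set{w_1}=N^-(u_0)$ and $N^-(v_1)=\set{w_0}=N^+(u_0)$, so $u_0$ has the two distinct anti-twins $u_1$ and $v_1$. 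Thus the ``only if'' half of the observation as literally stated fails without an additional hypothesis (no two vertices of $G$ with identical signed neighbourhoods), and your reduction of uniqueness to the separation of neighbourhood data, while the right way to attack it, lands on a false statement rather than on a provable one. Since the paper never uses the uniqueness assertion -- its one application even drops the word ``unique'' -- nothing downstream is affected, but as a proof of the stated equivalence your necessity half is irreparably incomplete; the honest fix is to weaken the observation to ``anti-twinned if and only if the vertices can be partitioned into anti-twin pairs'' (or ``if each vertex has an anti-twin and these can be chosen to form a perfect matching''), for which your sufficiency argument goes through verbatim.
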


\begin{lemma}\label{lem:signed-signified}
  A graph $(G,\Sigma)$ admits a signed homomorphism to $(H,\Lambda)$ if
  and only if $(G,\Sigma)$ admits a signified homomorphism
  to $AT(H,\Lambda)$.
\end{lemma}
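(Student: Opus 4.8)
The plan is to prove both directions of the equivalence, exploiting the defining property of the anti-twinned graph, namely that for each vertex $u_i$ of $AT(H,\Lambda)$ we have $N^+(u_i)=N^-(\twin(u_i))$ and $N^-(u_i)=N^+(\twin(u_i))$. The key intuition is that ``choosing whether to resign a vertex $v$ of $(G,\Sigma)$'' corresponds exactly to ``choosing between the vertex $\varphi(v)$ and its anti-twin $\twin(\varphi(v))$ as the image of $v$''; resigning flips the sign requirement on all edges at $v$, and switching to the anti-twin flips which of $N^+$, $N^-$ a given image-edge must lie in.

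For the forward direction, suppose $(G,\Sigma)$ admits a signed homomorphism to $(H,\Lambda)$. By Lemma~\ref{altdef} there is $(G,\Sigma')\sim(G,\Sigma)$ admitting a signified homomorphism $\varphi$ to $(H,\Lambda)$; let $Y\subseteq V(G)$ be such that $(G,\Sigma')=(G,\resign{\Sigma}{Y})$. I would define $\psi:V(G)\to V(AT(H,\Lambda))$ by $\psi(v)=\varphi(v)_0$ if $v\notin Y$ and $\psi(v)=\twin(\varphi(v)_0)=\varphi(v)_1$ if $v\in Y$ (identifying $V(H)$ with $V(H^0)$). Then for an edge $uv\in E(G)$ one checks by cases on $|\{u,v\}\cap Y|$ that $\psi(u)\psi(v)$ is an edge of $AT(H,\Lambda)$ of the correct sign: when neither endpoint is in $Y$ the edge and its sign are inherited from the $H^0$-copy and agree with the sign of $uv$ in $(G,\Sigma')$, which equals the sign of $uv$ in $(G,\Sigma)$ (since $uv\notin\resign{\Sigma}{Y}$-flipping affects $uv$ only when exactly one endpoint is in $Y$); when exactly one endpoint, say $u$, is in $Y$, the edge $\psi(u)\psi(v)=\varphi(u)_1\varphi(v)_0$ is one of the crossing edges $u_iw_{1-i}$, whose sign in $\Lambda$ is the opposite of the sign of $\varphi(u)\varphi(v)$ in $(H,\Lambda)$, matching the fact that resigning exactly $u$ flips the sign of $uv$; when both endpoints are in $Y$ we fall back to the $H^1$-copy, whose signs match $\Sigma^1=\Sigma$ and the sign of $uv$ is unchanged under resigning both endpoints. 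Hence $\psi$ is a signified homomorphism of $(G,\Sigma)$ to $AT(H,\Lambda)$.

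For the reverse direction, suppose $\psi$ is a signified homomorphism of $(G,\Sigma)$ to $AT(H,\Lambda)$. Let $Y=\{v\in V(G)\mid \psi(v)\in V(H^1)\}$, and define $\varphi:V(G)\to V(H)$ by $\varphi(v)=\psi(v)$ if $v\notin Y$ and $\varphi(v)=\twin(\psi(v))$ if $v\in Y$, so that $\varphi(v)\in V(H^0)\cong V(H)$ always. A symmetric case analysis shows that $\varphi$ is a signified homomorphism of $(G,\resign{\Sigma}{Y})$ to $(H,\Lambda)$, hence a signed homomorphism of $(G,\Sigma)$ to $(H,\Lambda)$. Both directions hinge on the same bookkeeping lemma about how resigning a vertex interacts with passing to the anti-twin, so the main obstacle — really the only nontrivial point — is organizing the sign case-check cleanly; it is elementary but must be done carefully, in particular keeping straight that a crossing edge $u_iv_{1-i}$ lies in $\Lambda$ precisely when $uv\notin\Sigma$ in the base graph $(H,\Lambda)$, i.e. the crossing edges carry the \emph{opposite} sign to the corresponding edges inside a copy.
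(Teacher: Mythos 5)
Your proposal is correct and follows essentially the same route as the paper: for the forward direction, resign via Lemma~\ref{altdef}, then send the resigned vertices to the anti-twins of their images; for the reverse direction, pull the vertices mapped into $H^1$ back to $H^0$ via $\twin$ while resigning the corresponding preimages. The only difference is that you spell out the three-case sign check that the paper compresses into the single observation that replacing an endpoint by its anti-twin flips the edge sign; this is harmless (and your bookkeeping is accurate, modulo writing $\Sigma$ where you mean $\Lambda$ for the base graph at the very end).
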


\begin{proof}
 Let $\varphi$ be a signed homomorphism of $(G,\Sigma)$ to $(H,\Lambda)$.
 This implies that $\varphi$ is a signified homomorphism of $(G,\Sigma_1)$
 to $(H,\Lambda)$, where $(G,\Sigma_1) \sim (G,\Sigma)$. 
 
 Let $X\subseteq V(G)$ be the subset of vertices of $G$ such that
 $(G,\Sigma)=(G,\resign{\Sigma_1}{X})$. By definition of
 the resigning process, only the edges of the edge-cut
 between $V(G)\setminus X$ and $X$ get their sign changed.

 Let $\varphi' : V(G) \to V(AT(H))$ be defined as follows:
 \begin{equation}
  \varphi'(u)=\begin{cases}
   \twin(\varphi(u)), & \text{if $u\in X$},\\
   \varphi(u), & \text{otherwise}.
  \end{cases}\nonumber
 \end{equation}

 By construction of $AT(H,\Lambda)$, if $u$ and $v$ induce an
 edge of sign $s$, then $u$ and $\twin(v)$ induce an edge of
 sign $-s$. Therefore, it is easy to see that $\varphi'$ is a
 signified homomorphism of $(G,\Sigma)$ to $AT(H,\Lambda)$.
 This proves the only if part.
 
 For the if part, suppose that $(G,\Sigma)$ admits a signified
 homomorphism $\psi$ to $AT(H,\Lambda)$. The signified graph
 $AT(H,\Lambda)$ is obtained from two isomorphic copies $(H^0,\Lambda^0)$
 and $(H^1,\Lambda^1)$ of $(H,\Lambda)$. Let $Y\subseteq V(G)$ be the subset of
 vertices of $G$ such that, for all $y\in Y$, $\psi(y)$ is a vertex of
 $H^1$. Let $(G,\Sigma_1)=(G,\resign{\Sigma}{Y})$. 
 Let $\psi' : V(G) \to V(AT(H))$ be defined as follows:
 \begin{equation}
  \psi'(u)=\begin{cases}
   \twin(\psi(u)), & \text{if $u\in Y$},\\
   \psi(u), & \text{otherwise}.
  \end{cases}\nonumber
 \end{equation}
 It is easy to see that $\psi'$ is a signified homomorphism of
 $(G,\Sigma_1)$ to $AT(H,\Lambda)$ such that every vertex maps to a vertex
 of $H^0$. Then $\psi'$ is a signified homomorphism from
 $(G,\Sigma_1)$ to $(H,\Lambda)$ and thus $(G,\Sigma)$ admits a signed
 homomorphism to $(H,\Lambda)$.
\end{proof}

\begin{corollary}\label{cor:bound_signed}
  If $(G,\Sigma)$ admits a signified homomorphism to an anti-twinned
  graph $T$, then we have:
 \begin{enumerate}
 \item\label{cor:bound_signed-1} $\chi_s(G)\le\frac{|V(T)|}{2}$.
 \item\label{cor:bound_signed-2} $(G,\Sigma')$ admits a signified homomorphism to $T$ for every $(G,\Sigma') \sim (G,\Sigma)$.
 \end{enumerate}
\end{corollary}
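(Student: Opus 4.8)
The plan is to read off both items from Lemma~\ref{lem:signed-signified}; the only preparatory step is to unfold the hypothesis that $T$ is anti-twinned. By definition of the anti-twinned graph (equivalently, by Observation~\ref{obs:doubled_twin}), $T=AT(H,\Lambda)$ for some signified graph $(H,\Lambda)$, and by construction $V(T)=V(H^0)\cup V(H^1)$ with $|V(H^0)|=|V(H^1)|=|V(H)|$, so $|V(H)|=\tfrac{|V(T)|}{2}$.

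For item~\ref{cor:bound_signed-1}: the hypothesis gives a signified homomorphism of $(G,\Sigma)$ to $AT(H,\Lambda)$, so by Lemma~\ref{lem:signed-signified} (the direction turning a signified homomorphism to $AT(H,\Lambda)$ into a signed one to $(H,\Lambda)$), $(G,\Sigma)$ admits a signed homomorphism to $(H,\Lambda)$; as $(H,\Lambda)$ has $\tfrac{|V(T)|}{2}$ vertices, $\chi_s(G,\Sigma)\le\tfrac{|V(T)|}{2}$, and running this argument for every signature of $G$ to which the hypothesis applies yields $\chi_s(G)\le\tfrac{|V(T)|}{2}$. For item~\ref{cor:bound_signed-2}: keep the signed homomorphism of $(G,\Sigma)$ to $(H,\Lambda)$ just produced. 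Whether a graph admits a signed homomorphism to a fixed target depends only on its equivalence class under resigning --- this is immediate from the definition, which already allows resigning the source --- so $(G,\Sigma')$ admits a signed homomorphism to $(H,\Lambda)$ for every $(G,\Sigma')\sim(G,\Sigma)$. Applying Lemma~\ref{lem:signed-signified} once more, now in the direction turning a signed homomorphism into a signified homomorphism to $AT(H,\Lambda)$, gives the desired signified homomorphism of $(G,\Sigma')$ to $AT(H,\Lambda)=T$.

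I do not expect a real obstacle: the content is entirely in Lemma~\ref{lem:signed-signified}, and everything else is bookkeeping. The two things to keep straight are, first, that in item~\ref{cor:bound_signed-1} the target supplied by the lemma is $(H,\Lambda)$ rather than $T$, so the relevant order is $\tfrac{|V(T)|}{2}$ and the conclusion is about $\chi_s$, not $\chi_2$; and second, that item~\ref{cor:bound_signed-2} genuinely needs the invariance of ``admits a signed homomorphism to $(H,\Lambda)$'' under the resigning equivalence $\sim$ before the lemma can be re-applied. It is precisely this point that forces item~\ref{cor:bound_signed-2} to be restricted to signatures $(G,\Sigma')\sim(G,\Sigma)$: an anti-twinned graph $AT(H,\Lambda)$ need not admit a signified homomorphism from signatures of $G$ lying outside the equivalence class of $(G,\Sigma)$.
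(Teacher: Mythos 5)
Your proposal is correct and follows exactly the route the paper intends: the corollary is stated without proof as an immediate consequence of Lemma~\ref{lem:signed-signified}, and your derivation (unfold $T=AT(H,\Lambda)$ with $|V(H)|=\tfrac{|V(T)|}{2}$, apply the lemma in one direction for item~(1), use invariance of signed homomorphism under resigning the source and apply the lemma in the other direction for item~(2)) is precisely that argument. Your closing remark correctly identifies the only delicate point, namely that the conclusion about $\chi_s(G)$ is really a statement about $\chi_s(G,\Sigma)$ and the signatures equivalent to $\Sigma$, which suffices for all the applications in the paper since the hypothesis there holds for every signature.
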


\subsection{The signified Zielonka graph $ZS_k$}\label{subsec:ask}

The Zielonka graph $Z_k$ was introduced by Zielonka~\cite{z90} in the
theory of bounded timestamp systems. Alon and Marshall~\cite{am98}
adapted this construction to signified graphs to obtain the
\emph{signified Zielonka graph} $ZS_k$. They used this graph to get
bounds on the signified chromatic number of graphs that admits an
acyclic $k$-coloring.

Let us describe the construction of the signified Zielonka graph
$ZS_k$.  Every vertex is of the form
$(i;\alpha_1,\alpha_2,\ldots,\alpha_k)$ where $1\le i\le k$,
$\alpha_j\in\set{+1,-1}$ for $j\ne i$ and $\alpha_i=0$. There are
clearly $k\cdot 2^{k-1}$ vertices in this graph. For $i\ne j$, there
is an edge between the vertices
$(i;\alpha_1,\alpha_2,\ldots,\alpha_k)$ and
$(j;\beta_1,\beta_2,\ldots,\beta_k)$ and the sign of this edge is
given by the product $\alpha_j\times \beta_i$.

\begin{proposition}\label{prop:ZSk}
 The graph $ZS_k$ is anti-twinned.
\end{proposition}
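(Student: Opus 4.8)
The plan is to invoke Observation~\ref{obs:doubled_twin}: it suffices to exhibit, for every vertex of $ZS_k$, a unique anti-twin. Given $v=(i;\alpha_1,\alpha_2,\ldots,\alpha_k)$, the natural candidate is the vertex obtained by flipping all of its nonzero coordinates, namely $w=(i;\alpha'_1,\alpha'_2,\ldots,\alpha'_k)$ with $\alpha'_j=-\alpha_j$ for $j\neq i$ and $\alpha'_i=0$. This is again a legitimate vertex of $ZS_k$, and it has the same first coordinate $i$ as $v$; since the edges of $ZS_k$ only join vertices whose first coordinates differ, $v$ and $w$ are non-adjacent, and they have precisely the same neighborhood, the set of all vertices with first coordinate different from $i$.

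Next I would check that $w$ really is an anti-twin of $v$. Let $x=(j;\beta_1,\beta_2,\ldots,\beta_k)$ be any vertex with $j\neq i$. By definition the edge $vx$ has sign $\alpha_j\times\beta_i$ while the edge $wx$ has sign $\alpha'_j\times\beta_i=-\alpha_j\times\beta_i$. Hence $x$ is a positive neighbor of $v$ if and only if it is a negative neighbor of $w$, and symmetrically; since $v$ and $w$ share the same neighborhood, this yields $N^+(v)=N^-(w)$ and $N^-(v)=N^+(w)$, i.e. $w$ is an anti-twin of $v$.

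For uniqueness, suppose $w'=(j;\gamma_1,\gamma_2,\ldots,\gamma_k)$ is an anti-twin of $v$. First note that an anti-twin is never adjacent to its partner: if $vw'$ were an edge of sign $s$, then $w'\in N^{s}(v)=N^{-s}(w')$ would make $w'$ adjacent to itself, which is impossible; hence $j=i$ (and $\gamma_i=0$). Now for each $\ell\neq i$ choose a vertex $x=(\ell;\delta_1,\ldots,\delta_k)$ with $\delta_i=+1$, which exists because $\ell\neq i$; comparing the sign $\alpha_\ell\times\delta_i=\alpha_\ell$ of $vx$ with the sign $\gamma_\ell\times\delta_i=\gamma_\ell$ of $w'x$, and using that these must be opposite, forces $\gamma_\ell=-\alpha_\ell$ for every $\ell\neq i$. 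Thus $w'=w$, so the anti-twin is unique and $ZS_k$ is anti-twinned. The whole argument is just bookkeeping with the product rule defining the edge signs; the one step deserving care is the remark that an anti-twin cannot be adjacent to its partner, which is what pins down that it must share the first coordinate. (For $k=1$ the graph is a single vertex and the statement is degenerate; we assume $k\geq 2$.)
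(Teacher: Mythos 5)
Your proof is correct and follows essentially the same route as the paper: both identify the anti-twin of $(i;\alpha_1,\ldots,\alpha_k)$ as $(i;-\alpha_1,\ldots,-\alpha_k)$, note non-adjacency, and verify the sign flip via the product rule, then invoke Observation~\ref{obs:doubled_twin}. Your explicit verification of uniqueness is a small extra precaution that the paper leaves implicit, but it does not change the argument.
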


\begin{proof}
 By Observation~\ref{obs:doubled_twin}, we have to show that every
 vertex has an anti-twin.
 We claim that the anti-twin of the vertex $v=(i;\alpha_1,\alpha_2,\ldots,\alpha_k)$ is
 the vertex $v'=(i;-\alpha_1,-\alpha_2,\ldots,-\alpha_k)$. Indeed,
 $v$ and $v'$ are not adjacent and it is easy to check that for every
 edge $uv$, the edge $uv'$ exists, and that $uv$ and $uv'$ have opposite signs.
\end{proof}

\subsection{The signified Paley graph $\spq$}

\begin{figure}
\begin{center}
 \includegraphics{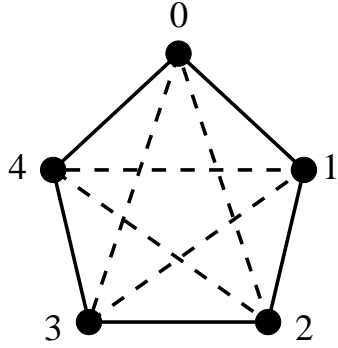}
 \caption{The signified graph $\spk{5}$.\label{fig:SP5}}
\end{center}
\end{figure}

In the remaining, $q$ is any prime power such that $q\equiv 1\pmod 4$.
There is a unique (up to isomorphism) finite field $\fq$ of order $q$.
Let $g$ be a generator of the field $\fq^*$. For every $v\in\fq^*$,
let $\sq : \fq^* \to \set{-1,+1}$ be the function \emph{square}
defined as $\sq(v)=+1$ if $v$ is a square and $\sq(v)=-1$ if $v$ is a
non-square. Note that $\sq(g^t)=(-1)^t$ since $g$ is necessarily a non-square.

The \emph{Paley graph} $P_q$ is the undirected graph with vertex set
$V(P_q)=\fq$ and edge set $E(P_q)=\lbrace xy\mid\sq(y-x)=+1\}$. Since
$-1$ is a square in $\fq$, $\sq(x-y)=\sq(y-x)$ and therefore the
definition of an edge is consistent. We also know that a Paley graph
is self-complementary~\cite{s62} and edge-transitive.

A $k$-regular graph $G$ with $n$ vertices is said to be \emph{strongly regular}
if (1) every two adjacent vertices have $\lambda$ common neighbors and
(2) every two non-adjacent vertices have $\mu$ common neighbors.
Such a graph is said to be a strongly regular graph with parameters $(n,k,\lambda,\mu)$.
Paley graphs $P_q$ are known to be strongly regular graphs with parameters $(q,\frac{q-1}{2}, \frac{q-5}{4},\frac{q-1}{4})$.

For any prime power $q \equiv 1 \pmod 4$, we define the
\emph{signified Paley graph} $\spq=(K_q,\Sigma)$ as the complete graph
on $q$ vertices with $V(\spq)=\fq$ and
$\Sigma=\set{xy\mid\sq(y-x)=-1}$. That is, $\spq$ is obtained from
the Paley graph $P_q$ by replacing the non-edges by negative edges.
Figure~\ref{fig:SP5} represents the signified Paley graph $\spk{5}$.
Since $P_q$ is edge-transitive and self-complementary, $\spq$ is
clearly edge-transitive.

\subsection{The Tromp signified Paley graph $Tr(\spq)$}

Given an oriented graph $\vt{G}$, Tromp~\cite{t} proposed a
construction of an oriented graph $Tr(G)$ called Tromp graph. We adapt
this construction to signified graphs as follows. 

For a given graph $(G,\Sigma)$, let us denote $(G^+,\Sigma)$ the graph
obtained from $(G,\Sigma)$ by adding a universal vertex positively
linked to all the vertices of $(G,\Sigma)$. 

\begin{figure}
\begin{center}
 \includegraphics[scale=2]{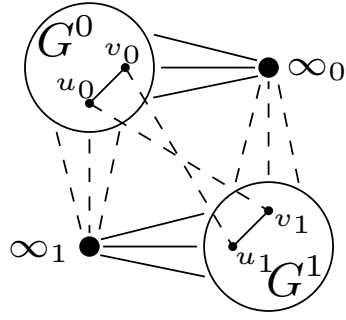}
 \caption{The signified Tromp graph $Tr(G,\Sigma)$.\label{fig:tromp}}
\end{center}
\end{figure}

Then, the \emph{Tromp signified graph} $Tr(G,\Sigma)$ of
$(G,\Sigma)$ is defined to be the anti-twinned graph of $(G^+,\Sigma)$,
that is $Tr(G,\Sigma) \cong AT(G^+,\Sigma)$. Figure~\ref{fig:tromp}
illustrates the construction of $Tr(G,\Sigma)$.

By construction, $Tr(G,\Sigma)$ is obtained from two isomorphic copies
$(G^0,\Sigma^0)$ and $(G^1,\Sigma^1)$ of $(G,\Sigma)$ plus $2$
vertices $\infty_0$ and $\infty_1$.  

In the remainder, we focus on the specific graph family obtained by applying the Tromp's construction
to the signified Paley graph $\spq$. 

We consider the Tromp signified Paley graph $\tr$ on $2q+2$ vertices
obtained from $\spq$.  In the remainder of this paper, the vertex set
of $\tr$ is
$V(\tr)=\{0_0,1_0,\ldots,q-1_0,\infty_0,0_1,1_1,\ldots,q-1_1,\infty_1\}$
where $\{0_i,1_i,\ldots,$ $q-1_i\}$ is the vertex set of the isomorphic
copy $\spq^i$ of $\spq$ ($i\in\set{0,1}$); thus, for every
$u_i\in\{0_i,1_i,\ldots,q-1_i,\infty_i\}$, we have $\twin(u_i)=u_{1-i}$.
In addition, for every $u\in V(\tr)$, we have by construction
$|N^+_{\tr}(u)|=|N^-_{\tr}(u)|=q$.

Let $i,j\in\set{0,1}$ and $u,v\in\fq$. If $i=j$, then $u_i$ and $v_j$
are in the same isomorphic copy of $\spq$ in $\tr$; in this case, the
sign of the edge $u_iv_j$ is $\sq(u-v)$ by definition of $\spq$.
If $i\ne j$, then $u_i$ and $v_j$ are in distinct isomorphic copies of $\spq$ in $\tr$;
in this case, the sign of the edge $u_iv_j$ is $-\sq(u-v)$.
Therefore, in both cases, the sign of the edge $u_iv_j$ is

\begin{equation}\label{eq:sign}
\sq(u-v)\times (-1)^{i+j}.
\end{equation}

Let $i,j\in\set{0,1}$ and $v\in\fq$. If $i=j$, then the sign of the edge
$\infty_iv_j$ is $+1$, while it is $-1$ when $i\ne j$. Therefore, in both
case, the sign of the edge $\infty_iv_j$ is

\begin{equation}\label{eq:signinfty}
(-1)^{i+j}.
\end{equation}

The graph $\tr$ has remarkable symmetry and some useful properties given below.

\begin{lemma}\label{lem:transitivity}
The signified graph $\tr$ is vertex-transitive.
\end{lemma}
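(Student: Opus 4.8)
The plan is to show that for every vertex $x$ of $\tr$ there is a sign-preserving automorphism of $\tr$ sending $x$ to the fixed vertex $0_0$; this is equivalent to vertex-transitivity. Three families of automorphisms will suffice. First, for $a\in\fq$ let $\tau_a$ fix $\infty_0$ and $\infty_1$ and send $u_i\mapsto(u+a)_i$ for $u\in\fq$. Since neither $u-v$ nor the parity $i+j$ is changed, the sign formulas~\eqref{eq:sign} and~\eqref{eq:signinfty} show that $\tau_a$ preserves the sign of every edge, so $\tau_a\in\mathrm{Aut}(\tr)$; moreover the maps $\tau_a$ act transitively on $\{0_i,1_i,\dots,(q-1)_i\}$ for each fixed $i$. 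Second, the anti-twin map $\twin\colon u_i\mapsto u_{1-i}$, $\infty_i\mapsto\infty_{1-i}$ also preserves all signs, since replacing $(i,j)$ by $(1-i,1-j)$ leaves $(-1)^{i+j}$ unchanged; hence $\twin\in\mathrm{Aut}(\tr)$, and it exchanges the two copies of $\spq$ (together with their $\infty$'s).

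The crucial third automorphism merges $\infty_0$ with the finite vertices. For $u\in\fq^*$ set $s(u)=0$ if $u$ is a square and $s(u)=1$ otherwise, so that $\sq(u)=(-1)^{s(u)}$, and define $\psi$ by $\psi(\infty_i)=0_i$, $\psi(0_i)=\infty_i$, and $\psi(u_i)=(u^{-1})_{i+s(u)}$ for $u\in\fq^*$, with copy-indices read modulo $2$. Since inversion is an involution of $\fq\cup\{\infty\}$ and $s(u^{-1})=s(u)$, the map $\psi$ is a bijection; it also sends each anti-twin pair to an anti-twin pair, hence permutes the non-edges of $\tr$ (recall that each vertex of $\tr$ has degree $2q$, so its unique non-neighbour is its anti-twin). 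It then remains to verify that $\psi$ preserves the sign of every edge, which I would check case by case with~\eqref{eq:sign},~\eqref{eq:signinfty}, multiplicativity of $\sq$, and $\sq(-1)=1$ (valid since $q\equiv 1\pmod 4$). The computations are: for distinct $u,v\in\fq^*$, the edge $u_iv_j$ of sign $\sq(u-v)(-1)^{i+j}$ maps to $(u^{-1})_{i+s(u)}\,(v^{-1})_{j+s(v)}$, whose sign is $\sq\!\big(\tfrac{v-u}{uv}\big)(-1)^{i+j+s(u)+s(v)}=\sq(u-v)\sq(u)\sq(v)\cdot(-1)^{i+j}\sq(u)\sq(v)=\sq(u-v)(-1)^{i+j}$; the edge $u_i0_j$ of sign $\sq(u)(-1)^{i+j}$ maps to the edge joining $(u^{-1})_{i+s(u)}$ to $\infty_j$, of sign $(-1)^{i+j+s(u)}=\sq(u)(-1)^{i+j}$; the edge $\infty_iv_j$ with $v\in\fq^*$, of sign $(-1)^{i+j}$, maps to the edge joining $0_i$ to $(v^{-1})_{j+s(v)}$, of sign $\sq(v^{-1})(-1)^{i+j+s(v)}=\sq(v)(-1)^{i+j}\sq(v)=(-1)^{i+j}$; and the edge $\infty_i0_j$ of sign $(-1)^{i+j}$ maps to $0_i\infty_j$, again of sign $(-1)^{i+j}$. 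Hence $\psi\in\mathrm{Aut}(\tr)$ and $\psi(\infty_0)=0_0$.

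Combining the three automorphisms finishes the argument: $u_0\xrightarrow{\tau_{-u}}0_0$, $\infty_0\xrightarrow{\psi}0_0$, $u_1\xrightarrow{\twin}u_0\xrightarrow{\tau_{-u}}0_0$, and $\infty_1\xrightarrow{\twin}\infty_0\xrightarrow{\psi}0_0$, so $\mathrm{Aut}(\tr)$ acts transitively on $V(\tr)$. I expect the design and verification of $\psi$ to be the main obstacle: plain inversion distorts edge signs by the factor $\sq(uv)$ arising from $\sq(u^{-1}-v^{-1})=\sq(u-v)\sq(uv)$, and the twist $s(u)$ in the copy index is precisely what cancels this factor; keeping the boundary cases involving $0$ and $\infty$ consistent is the only delicate point.
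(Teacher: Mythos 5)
Your proposal is correct and follows essentially the same route as the paper's proof: the anti-twin swap is the paper's $\gamma_1$, your translations $\tau_a$ are a concrete instance of the paper's $\gamma_2$ (an automorphism of $\spq$ lifted to $\tr$), and your map $\psi$ is exactly the paper's $\gamma_3$, with the same copy-index twist by the quadratic character of $u$ and the same sign verifications. The only cosmetic difference is that you work with explicit translations and the character exponent $s(u)$ rather than writing field elements as powers $g^t$ of a generator.
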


\begin{proof}
The mapping $\gamma_1 : V(\tr) \to V(\tr)$ defined as $\gamma_1(u_i)=u_{1-i}$ is clearly an automorphism of $\tr$.

Recall first that $\spq$ is edge-transitive and so vertex-transitive.
If $\varphi$ is an automorphism of $\spq$, we can define the corresponding automorphism $\gamma_2$ of $\tr$ as:
$$\gamma_2 : u_i \to 
 \left\{
  \begin{array}{ll} 
    u_i & \mbox{if $u=\infty$} \\
    (\varphi(u))_i & \mbox{if $u\ne\infty$}
  \end{array} 
 \right.$$

We eventually define the mapping $\gamma_3 : V(\tr) \to V(\tr)$ as:
$$\gamma_3 : u_i \to 
 \left\{
   \begin{array}{ll} 
    \infty_i & \mbox{if $u=0$} \\
     0_i & \mbox{if $u=\infty$} \\
     (u^{-1})_i & \mbox{if $u$ is a non-zero square} \\
     (u^{-1})_{1-i} & \mbox{if $u$ is a non square} \\
   \end{array} 
 \right.$$

Let $S$ be the sign of the edge $u_iv_j$. To prove that $\gamma_3$ is an automorphism
of $\tr$, we will show that $\gamma_3$ maps $u_iv_j$ to an edge of sign $S'=S$.

Let $g$ be a generator of the field $\fq$. Any vertex $v$ of $\spq$ is
an element of $\fq$ and therefore $v=g^t$ for some $t$ when $v\ne 0$.
Recall that $\sq(g^t)=(-1)^t$.

\begin{itemize}
\item When $u,v\in V(\tr)$ are neither $0$ nor $\infty$, we have
$u_iv_j=(g^t)_i(g^{t'})_j$ for some $t$ and $t'$. If $t$ is even, that
is $g^t$ is a square, then $\gamma_3((g^t)_i)=(g^{-t})_i$; otherwise,
$t$ is odd and thus $\gamma_3((g^t)_i)=(g^{-t})_{1-i}$. Let $i'=i+t
\pmod 2$ and $j'=j+t'\pmod 2$. Clearly, if $t$ is even (resp. $t'$)
is even, then $i'=i$ (resp. $j'=j$); otherwise we have $i'=1-i$
(resp. $j'=1-j$). Therefore, the function $\gamma_3$ maps the edge
$(g^t)_i(g^{t'})_j$ to the edge $(g^{-t})_{i'}(g^{-t'})_{j'}$. Now,
let us check that the sign of these two edges is the same.

By Equation~(\ref{eq:sign}), the sign of the edge $u_iv_j$ is
$S=\sq(g^t-g^{t'})\times (-1)^{i+j}$ and the sign of the edge
$(g^{-t})_{i'}(g^{-t'})_{j'}$ is $S'=\sq(g^{-t}-g^{-t'})\times
(-1)^{i'+j'}$. We then have: 

\begin{align*}
S &=\sq(g^t-g^{t'})\times (-1)^{i+j}\\
&=\sq(g^{t'}-g^{t})\times (-1)^{i+j}\\
&=\sq((g^{-t}-g^{-t'})\times g^{t+t'})\times (-1)^{i+j}\\
&=\sq(g^{-t}-g^{-t'})\times \sq(g^{t+t'})\times (-1)^{i+j}\\
&=\sq(g^{-t}-g^{-t'})\times (-1)^{t+t'}\times (-1)^{i+j}\\
&=\sq(g^{-t}-g^{-t'})\times (-1)^{i+t+j+t'}\\
&=\sq(g^{-t}-g^{-t'})\times (-1)^{i'+j'}=S'\\
\end{align*}

\item If $u=0$ and $v=\infty$, it is clear that the edge $u_iv_j$ maps to an
edge of the same sign.

\item Consider now the case $u=0$ and $v\not\in\set{0,\infty}$.
Let $v=g^t$ for some $t$. The function $\gamma_3$ maps the edge
$0_i(g^t)_j$ on $\infty_i(g^{-t})_{j'}$ where $j'=j+t\pmod 2$.
By Equations~\ref{eq:sign} and~\ref{eq:signinfty}, the sign of the edge $0_i(g^t)_j$ is $S=\sq(g^t)\times (-1)^{i+j}$
and the sign of the edge $\infty_i(g^-t)_{j'}$ is $S'=(-1)^{i+j'}$. We then have:

$$ S=\sq(g^t)\times(-1)^{i+j}=(-1)^t\times(-1)^{i+j}=(-1)^{i+j+t}=(-1)^{i+j'}=S' $$

\item The case $u=\infty$ and $v\not\in\set{0,\infty}$ is similar to
  the previous case.
\end{itemize}

Therefore, the function $\gamma_3$ maps any edge to an edge of the same sign.

Combining the automorphisms $\gamma_1$, $\gamma_2$ and $\gamma_3$
easily proves that $\tr$ is vertex-transitive.
\end{proof}

We define an \emph{anti-automorphism} of a signified graph $(G,\Sigma)$
as a permutation $\rho$ of the vertex set $V(G)$ such that $uv$ is
a positive (resp. negative) edge if and only if $\rho(u)\rho(v)$ is a
negative (resp. positive) edge.

\begin{lemma}\label{lem:antiauto}
The graph $\tr$ admits an anti-automorphism.
\end{lemma}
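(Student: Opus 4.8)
The plan is to exhibit an explicit anti-automorphism of $\tr$, mimicking the structure of the automorphism $\gamma_3$ from Lemma~\ref{lem:transitivity}: a map that acts on the $\fq$-coordinate by multiplication by a suitably chosen constant and possibly an inversion, while toggling the copy-index $i$ in a controlled way. The natural candidate is to pick a non-square element $c\in\fq^*$ and consider the map $\rho$ sending $u_i \mapsto (cu)_i$ for $u\in\fq$; the point of choosing $c$ a non-square is that for any $u,v\in\fq$ we get $\sq(cu-cv)=\sq(c)\cdot\sq(u-v)=-\sq(u-v)$, so by Equation~(\ref{eq:sign}) every edge $u_iv_j$ between finite vertices has its sign flipped. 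This deals with all edges not involving $\infty_0$ or $\infty_1$ in one stroke.

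The obstacle is the infinity vertices: a pure multiplication map fixes $0$ and is not even defined on $\infty_i$ unless we extend it, and in any case the edges $\infty_i v_j$ have sign $(-1)^{i+j}$ by Equation~(\ref{eq:signinfty}), which does not depend on $v$, so no multiplicative tweak on the $\fq$-part can flip them. The fix is to combine the multiplication with the copy-swap $\gamma_1 : u_i\mapsto u_{1-i}$ of Lemma~\ref{lem:transitivity}: composing $\rho$ with $\gamma_1$ replaces $(-1)^{i+j}$ by $(-1)^{(1-i)+(1-j)}=(-1)^{i+j}$ on the $\infty$-edges — which is \emph{not} a flip, so that is still wrong. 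Instead I would swap the index on only \emph{one} side. So the corrected definition is $\rho : u_i \mapsto (cu)_i$ for $u\in\fq$ and $\rho : \infty_i \mapsto \infty_{1-i}$, where $c$ is a fixed non-square. Then an edge $\infty_i v_j$ (with $v\in\fq$) maps to $\infty_{1-i}(cv)_j$, whose sign is $(-1)^{(1-i)+j}=-(-1)^{i+j}$ — flipped, as required. The only remaining edges are $\infty_i\infty_j$, but there are none ($\infty_0,\infty_1$ are anti-twins, hence non-adjacent), so there is nothing to check there.

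So the key steps, in order, are: (i) fix a non-square $c\in\fq^*$ (exists since $q\equiv 1\pmod 4$ and $|\fq^*|$ is even, or simply take $c=g$); (ii) define $\rho$ as above and observe it is a bijection of $V(\tr)$ (it is clearly injective on the finite vertices since multiplication by $c$ is a bijection of $\fq$, and it swaps the two infinity vertices); (iii) verify that $\rho$ flips the sign of every finite–finite edge using $\sq(cu-cv)=-\sq(u-v)$ together with Equation~(\ref{eq:sign}); (iv) verify that $\rho$ flips the sign of every $\infty$–finite edge using Equation~(\ref{eq:signinfty}) and the parity change $(1-i)+j \equiv i+j+1 \pmod 2$; (v) note there are no $\infty$–$\infty$ edges. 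This shows $\rho$ is an anti-automorphism.

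The main obstacle, as indicated, is handling the $\infty$ vertices correctly: the sign of an $\infty$-edge depends only on the copy-indices, so the anti-automorphism must change the parity $i+j$ on exactly those edges while the multiplicative part already takes care of the finite edges. Getting these two requirements to coexist is the whole content of the proof — it forces the asymmetric treatment where the $\fq$-part is scaled by a non-square but the index is toggled only at infinity. Once the definition is pinned down, steps (iii) and (iv) are short computations in the spirit of (but much shorter than) the $\gamma_3$ verification already carried out in Lemma~\ref{lem:transitivity}. One should also remark that such a $\rho$ cannot exist unless both $|N^+(u)|=|N^-(u)|$ for all $u$ (which holds here, as noted: $|N^+_{\tr}(u)|=|N^-_{\tr}(u)|=q$) — this is a useful sanity check rather than a step of the proof.
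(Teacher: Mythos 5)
Your proposal is correct and is essentially the paper's proof: the paper's anti-automorphism $\gamma_n$ sends $u_i\mapsto(nu)_{1-i}$ for finite $u$ and fixes $\infty_i$, which is exactly your map $\rho$ composed with the index-swap automorphism $\gamma_1$, and both rest on the same two computations (multiplicativity of $\sq$ applied to a non-square scalar for finite--finite edges, and a parity flip of $i+j$ for the $\infty$--finite edges). The verification steps you outline match the paper's line for line up to that harmless composition.
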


\begin{proof}
 Let $n$ be any non-square of $\fq$. We define the mapping $\gamma_n
 : V(\tr) \to V(\tr)$ as:
 $$\gamma_n : u_i \to\left\{
  \begin{array}{ll} 
   u_i & \mbox{if } u=\infty \\
   (n\times u)_{1-i} & \mbox{if } u\ne\infty \\
  \end{array} 
 \right.$$
 Let us check that $\gamma_n$ maps every edge $u_iv_j\in E(\tr)$ to an
 edge of opposite sign. 
 
 Let $u,v\ne\infty$. By definition, $\gamma_n$ maps $u_iv_j$ to
 $(n\times u)_{1-i}(n\times v)_{1-j}$. By Equations~\ref{eq:sign}
 and~\ref{eq:signinfty}, the sign of the edge $u_iv_j$ is
 $S=\sq(v-u)\times (-1)^{i+j}$ and the sign of the edge $(n\times
 u)_{1-i}(n\times v)_{1-j}$ is $S'=\sq(n\times(v-u))\times
 (-1)^{1-i+1-j}$.
 \begin{align*}
  S' &=\sq(n\times(v-u))\times (-1)^{1-i+1-j}\\
  &=\sq(n)\times\sq(v-u)\times (-1)^{i+j}\\
  &=-\sq(v-u)\times(-1)^{i+j}=-S\\
 \end{align*}
 Now, let $u=\infty$ and $v\ne\infty$. The mapping $\gamma_n$ maps
 $\infty_iv_j$ to $\infty_{i}(n\times v)_{1-j}$.
 By Equation~\ref{eq:signinfty}, the sign of the edge $\infty_iv_j$ is $S=(-1)^{i+j}$
 and the sign of the edge $\infty_{i}(n\times v)_{1-j}$ is $S'=(-1)^{i+1-j}=-S$.
\end{proof}

\begin{lemma}\label{lem:triangle-transitif}
 If there exists an isomorphism $\psi$ that maps the triangle
 $(u_i,v_j,w_k)$ to the triangle $(u'_{i'},v'_{j'},w'_{k'})$, then $\psi$
 can be extended to an automorphism of $\tr$.
\end{lemma}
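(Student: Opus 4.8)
The plan is to reduce the statement to a transitivity property of the automorphism group of $\tr$ and then to exploit the structure of the neighbourhood of the vertex $\infty_0$, together with the automorphisms $\gamma_2$ built in the proof of Lemma~\ref{lem:transitivity}. First note that if $\psi$ maps the triangle $(u_i,v_j,w_k)$ isomorphically to $(u'_{i'},v'_{j'},w'_{k'})$, then these two ordered triples of vertices carry the same sequence of three edge-signs. Hence it suffices to show that $\mathrm{Aut}(\tr)$ acts transitively on the ordered triangles of $\tr$ sharing any fixed sequence of edge-signs: an automorphism realising the prescribed images of the three vertices then restricts exactly to $\psi$ on $\set{u_i,v_j,w_k}$. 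By Lemma~\ref{lem:transitivity}, I would pick automorphisms $\alpha,\beta$ of $\tr$ with $\alpha(u_i)=\beta(u'_{i'})=\infty_0$; replacing the two triangles by their images under $\alpha$ and $\beta$ (which preserve all edge-signs), this reduces the problem to the case where both triangles have $\infty_0$ as their first vertex, and it remains to produce an automorphism fixing $\infty_0$ and sending the second and third vertices of the first triangle onto those of the second.

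So assume the two triangles are $(\infty_0,p_a,r_b)$ and $(\infty_0,p'_{a'},r'_{b'})$ with the same sequence of edge-signs. The last two vertices of each are neighbours of $\infty_0$; since $\infty_1=\twin(\infty_0)$ is the only vertex not adjacent to $\infty_0$, they lie in $V(\spq^0)\cup V(\spq^1)$, so indeed $p,r,p',r'\in\fq$ and $a,b,a',b'\in\set{0,1}$. By Equation~(\ref{eq:signinfty}) the sign of $\infty_0 p_a$ is $(-1)^{a}$ and that of $\infty_0 p'_{a'}$ is $(-1)^{a'}$; equality of the two sign sequences forces $a=a'$, and likewise $b=b'$ (equivalently, $N^+_{\tr}(\infty_0)=V(\spq^0)$ and $N^-_{\tr}(\infty_0)=V(\spq^1)$ are each preserved by any automorphism fixing $\infty_0$). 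Since $p_a$ and $r_b$ are adjacent we also have $p\ne r$, and similarly $p'\ne r'$. Finally, by Equation~(\ref{eq:sign}) the sign of $p_a r_b$ is $\sq(p-r)\cdot(-1)^{a+b}$ and the sign of $p'_{a}r'_{b}$ is $\sq(p'-r')\cdot(-1)^{a+b}$, so the two sign sequences agree exactly when $\sq(p-r)=\sq(p'-r')$.

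To finish I would set $\lambda=(p'-r')/(p-r)$ and $\mu=p'-\lambda p$. Then $\sq(\lambda)=\sq(p'-r')\cdot\sq(p-r)=\sq(p-r)^2=+1$, so $\lambda$ is a non-zero square and the affine map $\varphi:x\mapsto\lambda x+\mu$ satisfies $\sq(\varphi(y)-\varphi(x))=\sq(\lambda(y-x))=\sq(\lambda)\sq(y-x)=\sq(y-x)$ for all $x,y$; thus $\varphi$ is an automorphism of $\spq$ with $\varphi(p)=p'$ and $\varphi(r)=r'$. The associated automorphism $\gamma_2$ of $\tr$ (fixing both $\infty$-vertices and acting as $u_i\mapsto(\varphi(u))_i$ on the remaining vertices) fixes $\infty_0$ and sends $p_a\mapsto p'_a$ and $r_b\mapsto r'_b$. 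Composing $\beta^{-1}$, $\gamma_2$ and $\alpha$ then yields an automorphism of $\tr$ that maps $(u_i,v_j,w_k)$ to $(u'_{i'},v'_{j'},w'_{k'})$, i.e.\ an automorphism of $\tr$ extending $\psi$.

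I do not expect a genuine obstacle here: the whole argument rests on the description of the neighbourhood of $\infty_0$, and the only point needing care is the sign bookkeeping via Equations~(\ref{eq:sign}) and~(\ref{eq:signinfty}), which shows that once $\infty_0$ is fixed the positions of the other two triangle vertices are determined solely by $\sq(p-r)$ — and that this single datum is precisely what an affine-square automorphism of $\spq$ can be made to match. (One should also keep in mind the degenerate checks, e.g.\ that neither of the other two vertices can equal $\infty_1$, and that $p\ne r$, both of which follow from adjacency inside the triangle.)
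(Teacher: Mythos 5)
Your proof is correct, but it follows a genuinely different route from the paper's. The paper splits into three cases according to the sign pattern of the triangle: for an all-positive triangle it uses vertex-transitivity to send one vertex to $\infty_0$, observes that the other two vertices then lie in $\spq^0$, and invokes edge-transitivity of $\spq$ to reach the canonical triangle $(0_0,1_0,\infty_0)$; the all-negative case is reduced to the all-positive one via the anti-automorphism of Lemma~\ref{lem:antiauto}; and the mixed case is reduced to the monochromatic ones by replacing the apex of the two equal-signed edges with its anti-twin. You instead give a single uniform argument: after stabilising $\infty_0$ as the first vertex of both triangles, you read off from Equations~(\ref{eq:sign}) and~(\ref{eq:signinfty}) that the copy indices $a,b$ of the remaining vertices and the quantity $\sq(p-r)$ are forced by the sign sequence, and then you realise any admissible target pair by the affine map $x\mapsto\lambda x+\mu$ with square multiplier $\lambda=(p'-r')/(p-r)$, lifted to $\tr$ via the $\gamma_2$-construction. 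What your approach buys is the elimination of the case analysis, of the anti-automorphism, and of the anti-twin trick, and it also settles a point the paper glosses over: you get transitivity on \emph{ordered} pairs within each sign class (arc-transitivity), whereas the paper appeals to edge-transitivity of $\spq$ for an ordered pair without comment. The cost is that you must make the automorphism group of $\spq$ explicit rather than citing edge-transitivity as a black box, but that computation is short and you carry out the sign bookkeeping (including the degenerate checks that neither vertex is $\infty_1$ and that $p\ne r$) correctly.
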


\begin{proof}
  There exists four types of triangles depending on the sign of their
  edges.

 \begin{itemize}
 \item Let us first consider the triangles $(u_i,v_j,w_k)$ and
   $(u'_{i'},v'_{j'},w'_{k'})$ with 3 positive edges. To prove that
   $\psi$ can be extended to an automorphism of $\tr$, it suffices to
   prove that for every triangle $(u_i,v_j,w_k)$, there exists an
   automorphism $\psi'$ that maps $(u_i,v_j,w_k)$ to
   $(0_0,1_0,\infty_0)$.  Using the vertex transitivity of $\tr$
   (Lemma~\ref{lem:transitivity}), there exists an automorphism
   $\varphi$ that maps $w_k$ to $\infty_0$.  Then, since all positive
   edges incident to $\infty_0$ have their extremities in $\spq^0$,
   $\varphi$ necessarily maps the edge $u_iv_j$ to an edge $u'_0v'_0$
   in $\spq^0$.  Since $\spq$ is edge-transitive, we can finally map
   $u'_0v'_0$ to $0_01_0$.

 \item Consider now the triangles $(u_j,v_j,w_k)$ and
  $(u'_{j'},v'_{j'},w'_{k'})$ with 3 negative edges. 
  Let $\overline\tr$ be the signified graph obtained from $\tr$ by
  changing the sign of every edge. By Lemma~\ref{lem:antiauto},
  $\overline\tr$ is isomorphic to $\tr$. By the previous
  item, there exists an automorphism that maps $(u_j,v_j,w_k)$ to
  $(u'_{j'},v'_{j'},w'_{k'})$  in $\overline\tr$, and thus in $\tr$.
    
 \item Finally consider the triangles $(u_i,v_j,w_k)$ and
  $(u'_{i'},v'_{j'},w'_{k'})$ with one edge of sign $S$ and 2 edges
  of sign $-S$. Let $u_i$ and $u'_{i'}$ be the vertex incident to
  the edges of sign $-S$. Consider the triangles
  $(\twin(u_i),v_j,w_k)$ and $(\twin(u'_{i'}),v'_{j'},$ $w'_{k'})$; they
  have 3 edges of sign $S$. By the two previous cases, there exists
  an automorphism $\psi$ that maps $(\twin(u_j),v_j,w_k)$ to
  $(\twin(u'_{j'}),v'_{j'},w'_{k'})$. Since $\psi$ preserves 
  anti-twinning, $\psi$ also maps $u_i$ to $u'_{i'}$.
 \end{itemize}
\end{proof}

\subsection{Coloring properties of target graphs}

A \emph{signed vector} of size $k$ is a $k$-tuple
$\alpha=(\alpha_1,\alpha_2,\ldots,\alpha_k)\in \{+1,-1\}^k$.
For a given signed vector $\alpha$, its \emph{conjugate} is the $k$-tuple $\alphab=(-\alpha_1,-\alpha_2,\ldots,-\alpha_k)$.

Given a sequence of $k$ distinct vertices $X_k=(v_1,v_2,\ldots,v_k)$
of a signified graph $(G,\Sigma)$ that induces a clique, a vertex $u\in V(G)$
is an \emph{$\alpha$-successor} of $X_k$ if, for every $i\in\set{1,2,\ldots,k}$,
the sign of the edge $uv_i$ is $\alpha_i$. 
The set of $\alpha$-successors of $X_k$ is denoted by $S^\alpha (X_k)$.

Consider the signified graph $\spk{5}$ depicted in Figure~\ref{fig:SP5}.
For example, given $\alpha=(+1,-1)$ and $X=(0,3)$, the vertex $1$ is an $\alpha$-successor
of $X$, the vertex $2$ is an $\alphab$-successor of $X$, we have
$S^\alpha(X)=\set{1}$ and $S^{\alphab}(X)=\set{2}$. 

A signified graph $(G)$ has property $\Prop{k,l}$ 
if $|S^\alpha (X_k)|\ge l$ 
for any sequence $X_k$ of $k$ distinct vertices inducing a clique of $G$
and for any signed-vector $\alpha$ of size~$k$. 

\begin{lemma}\label{spq2tr}
 If $\spq$ has property $\Prop{n-1,k}$, then $\tr$ has property $\Prop{n,k}$.
\end{lemma}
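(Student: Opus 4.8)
The plan is to take an arbitrary clique $X_n=(z_1,\dots,z_n)$ in $\tr$ and an arbitrary signed vector $\alpha=(\alpha_1,\dots,\alpha_n)$, and to produce at least $k$ common $\alpha$-successors. The key reduction is that, by vertex-transitivity of $\tr$ (Lemma~\ref{lem:transitivity}), I may assume $z_n=\infty_0$. Since every positive edge incident to $\infty_0$ has its other endpoint in $\spq^0$ and every negative edge incident to $\infty_0$ has its other endpoint in $\spq^1$, the remaining vertices $z_1,\dots,z_{n-1}$ are forced into a prescribed copy: write $z_t = (x_t)_{i_t}$ where $i_t=0$ if $\alpha_t'=+1$ and $i_t=1$ if $\alpha_t'=-1$, with $\alpha_t'$ the sign of the edge from $z_t$ to $\infty_0$ inside the clique (these signs are determined by the clique, not by $\alpha$; I only use them to locate the $z_t$). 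A common successor $u$ of $X_n$ must in particular have the right sign on the edge to $\infty_0$: if $\alpha_n=+1$ then $u\in\{(y)_0 : y\in\fq\}$, and if $\alpha_n=-1$ then $u\in\{(y)_1 : y\in\fq\}$. By symmetry (apply $\gamma_1$, which swaps the two copies and flips all signs to the $\infty$-vertices, cf.\ Lemma~\ref{lem:antiauto}), it suffices to treat $\alpha_n=+1$, so every candidate successor lies in $\spq^0$.

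The heart of the argument is then to transfer the condition ``$u=(y)_0$ is an $\alpha$-successor of $(z_1,\dots,z_{n-1})$ in $\tr$'' into a condition in $\spq$. By Equation~(\ref{eq:sign}), the sign of the edge $(y)_0(x_t)_{i_t}$ is $\sq(y-x_t)\times(-1)^{i_t}$. So requiring this to equal $\alpha_t$ is the same as requiring $\sq(y-x_t) = \alpha_t\cdot(-1)^{i_t} =: \beta_t$. The points $x_1,\dots,x_{n-1}$ are pairwise distinct elements of $\fq$ — indeed $z_1,\dots,z_{n-1}$ are distinct, and two of them cannot be anti-twins $(x)_0,(x)_1$ because anti-twin vertices are non-adjacent while $X_n$ is a clique — so they induce a clique in $\spq$. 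Hence $(x_1,\dots,x_{n-1})$ is a genuine sequence of $n-1$ distinct clique-vertices of $\spq$, and $\beta=(\beta_1,\dots,\beta_{n-1})$ is a signed vector of size $n-1$. By the hypothesis that $\spq$ has property $\Prop{n-1,k}$, the set $S^\beta(x_1,\dots,x_{n-1})$ in $\spq$ has at least $k$ elements; each such $y$ gives a distinct vertex $(y)_0\in V(\tr)$, and by the computation above each $(y)_0$ is an $\alpha$-successor of $X_n$ in $\tr$. This produces the required $k$ successors.

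A couple of routine points must be checked to make the reduction airtight. First, the candidate $u=(y)_0$ obtained from $S^\beta$ is automatically distinct from all the $z_t$ and from $\infty_0$: distinct from $\infty_0$ since $u\in\spq^0$, and distinct from $z_t=(x_t)_{i_t}$ since if $i_t=0$ then $y\neq x_t$ (as $\sq(y-x_t)$ is defined, so $y-x_t\neq 0$), and if $i_t=1$ then $u$ and $z_t$ lie in different copies. So $u$ is a legitimate $\alpha$-successor, not merely a vertex satisfying the sign constraints. Second, I should double-check that the clique condition on $X_n$ is not needed beyond locating the $z_t$ relative to $\infty_0$ and guaranteeing the $x_t$ are distinct — it is not, since property $\Prop{n,k}$ only quantifies over $X_n$ inducing a clique, which is exactly our hypothesis. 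The main obstacle is purely bookkeeping: getting the interplay between the ``internal'' signs $\alpha_t'$ (which fix the copies $i_t$) and the ``target'' signs $\alpha_t$ (which, via the twist $(-1)^{i_t}$, become the signs $\beta_t$ we feed to $\spq$) correct, and remembering that vertex-transitivity lets us fix $z_n=\infty_0$ while the anti-automorphism of Lemma~\ref{lem:antiauto} (or simply $\gamma_1$) disposes of the case $\alpha_n=-1$.
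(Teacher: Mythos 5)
Your proof is correct and follows essentially the same route as the paper's: both reduce to property $\Prop{n-1,k}$ of $\spq$ by singling out one clique vertex whose positive neighborhood is a copy of $\spq$ (you normalize it to $\infty_0$ via vertex-transitivity, the paper keeps an arbitrary $w$ and uses $N^+(w)\cong\spq$), recording which copy each remaining clique vertex lies in, twisting the sign vector accordingly, and pulling back the $k$ successors. The only blemish is cosmetic: the map you invoke for the case $\alpha_n=-1$ is neither the automorphism $\gamma_1$ of Lemma~\ref{lem:transitivity} (which preserves all signs) nor the anti-automorphism of Lemma~\ref{lem:antiauto}; the symmetry you want is simply $S^{\alphab}(X)=\twin\paren{S^{\alpha}(X)}$, which is exactly how the paper disposes of that case.
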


\begin{proof}
  Suppose that $\spq$ has property $\Prop{n-1,k}$ and let
  $\alpha=(\alpha_1,\alpha_2,\ldots,\alpha_n)$ be a given signed
  vector.  Let $X=(u_1,u_2,\ldots,u_{n-1},w)$ be $n$ distinct vertices
  inducing a clique of $\tr$.  We have to prove that $X$ admits $k$
  $\alpha$-successors.  By noticing that
  $S^{\alphab}(X)=\twin(S^{\alpha}(X))$, we restrict the proof to the
  case $\alpha_n=+1$.  We define $X'=(v_1,v_2,\ldots,v_{n-1},w)$ such
  that $v_i=u_i$ if $u_iw$ is a positive edge and $v_i=\twin(u_i)$ if
  $u_iw$ is a negative edge.  Hence, $X'$ is a set of $n$ distinct
  vertices of $\tr$ such that $\bigcup_i v_i\subseteq N^+(w)$. By
  Lemma~\ref{lem:transitivity}, $\tr$ is vertex-transitive and thus
  $N^+(w)\cong K_q\cong \spq$.  Therefore the $(n-1)$ vertices
  $X''=X'\setminus\set{w} = (v_1,v_2,\ldots,v_{n-1})$ form a subset of
  some $V(\spq)$. Then by Property $\Prop{n-1,k}$ of $\spq$, there
  exist $k$ $(\alpha'_1,\alpha'_2,\ldots,\alpha'_{n-1})$-successors
  $x_1,x_2,\ldots, x_k$ of $X''$ in $\spq$, with $\alpha'_i=\alpha_i$
  (resp. $\alpha'_i=-\alpha_i$) if $v_i=u_i$ (resp. if
  $v_i=\twin(u_i)$). The $x_i$'s are clearly positive neighbors of $w$
  and hence, they are
  $(\alpha'_1,\alpha'_2,\ldots,\alpha'_{n-1},\alpha_{n})$-successors
  of $X'$.  So $X$ has $k$ $\alpha$-successors.
\end{proof}

\begin{lemma}\label{lem:double-tromp}
 If $(G,\Sigma)$ is a signified graph and $Tr(G,\Sigma)$ has property
 $\Prop{n,k}$, then $AT(G,\Sigma)$ has property $\Prop{n,k-1}$. 
\end{lemma}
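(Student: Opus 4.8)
The plan is to exploit the fact that, by definition, $Tr(G,\Sigma)=AT(G^+,\Sigma)$ contains $AT(G,\Sigma)$ as an induced subgraph: deleting the two vertices $\infty_0$ and $\infty_1$ from $Tr(G,\Sigma)$ leaves exactly the two copies $(G^0,\Sigma^0)$ and $(G^1,\Sigma^1)$ together with the crossing edges $u_iv_{1-i}$ for $uv\in E(G)$ and their signs, which is precisely $AT(G,\Sigma)$. In particular every edge of $AT(G,\Sigma)$ keeps its sign in $Tr(G,\Sigma)$. Consequently, any sequence $X_n=(v_1,\ldots,v_n)$ of vertices inducing a clique in $AT(G,\Sigma)$ also induces a clique in $Tr(G,\Sigma)$, and for every vertex $x\in V(AT(G,\Sigma))$ and every signed vector $\alpha$ of size $n$, the vertex $x$ is an $\alpha$-successor of $X_n$ in $AT(G,\Sigma)$ if and only if it is an $\alpha$-successor of $X_n$ in $Tr(G,\Sigma)$. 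Hence $S^\alpha(X_n)$ computed in $AT(G,\Sigma)$ equals $S^\alpha(X_n)$ computed in $Tr(G,\Sigma)$ with the vertices $\infty_0$ and $\infty_1$ removed.

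First I would fix such an $X_n$ and such an $\alpha$. Since $Tr(G,\Sigma)$ has property $\Prop{n,k}$, the set $S^\alpha(X_n)$ taken in $Tr(G,\Sigma)$ has at least $k$ elements. By the observation above it therefore suffices to prove that at most one of $\infty_0$ and $\infty_1$ belongs to this set; then at least $k-1$ of its elements lie in $V(AT(G,\Sigma))$, and these are exactly the $\alpha$-successors of $X_n$ in $AT(G,\Sigma)$, giving $|S^\alpha(X_n)|\ge k-1$ in $AT(G,\Sigma)$ and hence property $\Prop{n,k-1}$.

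It remains to bound the number of $\infty$-vertices that can be $\alpha$-successors. In $Tr(G,\Sigma)=AT(G^+,\Sigma)$ the vertices $\infty_0$ and $\infty_1$ are anti-twins, so $\twin(\infty_0)=\infty_1$; moreover each is universal (adjacent to every other vertex), so both are candidate successors of $X_n$. By the defining property of anti-twin vertices, for every $i$ the edges $\infty_0v_i$ and $\infty_1v_i$ have opposite signs. Thus if $\infty_0$ were an $\alpha$-successor of $X_n$, then $\infty_1$ would be an $\alphab$-successor of $X_n$; since $\alpha$ and $\alphab$ differ in every coordinate, we cannot have $\alpha=\alphab$, so $\infty_1$ is not an $\alpha$-successor. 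Hence at most one of $\infty_0$ and $\infty_1$ lies in $S^\alpha(X_n)$, which completes the argument.

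I expect no serious obstacle here: the only points needing a little care are checking that deleting $\infty_0$ and $\infty_1$ really recovers $AT(G,\Sigma)$ with inherited signs (a direct comparison of the vertex, edge, and sign sets of the two $AT$ constructions), and noticing that these two $\infty$-vertices are the only extra vertices and, being anti-twins, cannot both serve as $\alpha$-successors. The loss of exactly one in the passage from $k$ to $k-1$ comes precisely from this last parity phenomenon.
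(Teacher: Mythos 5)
Your proof is correct and follows essentially the same route as the paper: restrict a clique of $AT(G,\Sigma)$ to $Tr(G,\Sigma)$, invoke $\Prop{n,k}$ there, and observe that $S^\alpha(X_n)$ cannot contain both $\infty_0$ and $\infty_1$, so at most one successor is lost. You in fact supply the justification (the anti-twin relation between $\infty_0$ and $\infty_1$ forces one to be an $\alphab$-successor rather than an $\alpha$-successor) that the paper leaves implicit.
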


\begin{proof}
 Recall that $Tr(G,\Sigma)$ is built from two isomorphic copies of
 $(G,\Sigma)$ plus two vertices $\infty_0$ and $\infty_1$. The graph
 $AT(G,\Sigma)$ is obtained from $Tr(G,\Sigma)$ by removing both
 $\infty_0$ and $\infty_1$. Now suppose $Tr(G,\Sigma)$ has property
 $\Prop{n,k}$. This means that any $n$ distinct vertices inducing a
 clique in $Tr(G,\Sigma)$ has $k$ $\alpha$-successors for any
 signed $n$-vector $\alpha$. Let $X\subseteq V(Tr(G,\Sigma))$ be
 any sequence of $n$ distinct vertices inducing a clique in
 $\tr$ such that both $\infty_0$ and $\infty_1$ do not belong to $X$.
 Then, for any signed $n$-vector $\alpha$, the set of $k$
 $\alpha$-successors $S^\alpha(X)$ cannot contains both $\infty_0$ and
 $\infty_1$. Then, it is clear that $X$ has at least $k-1$
 $\alpha$-successors in $AT(G,\Sigma)$.
\end{proof}

\begin{lemma}{\ }
\label{lem:ppt}
 \begin{enumerate}
 \item $\spq$ has properties $\Prop{1,\frac{q-1}{2}}$ and $\Prop{2,\frac{q-5}{4}}$.
 \item $\tr$ has properties $\Prop{1,q}$,
  $\Prop{2,\frac{q-1}{2}}$, and $\Prop{3,\frac{q-5}{4}}$.
 \item $AT(\spq)$ has properties $\Prop{1,q-1}$,
  $\Prop{2,\frac{q-3}{2}}$, and
  $\Prop{3,\max\paren{0,\frac{q-9}{4}}}$.
 \end{enumerate}
\end{lemma}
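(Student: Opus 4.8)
The plan is to establish the three items in sequence, using the earlier lemmas to lift properties from $\spq$ to $\tr$ and then to $AT(\spq)$, and to prove the $\spq$ statements directly from the strong regularity and field structure.

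\textbf{Item 1: properties of $\spq$.} The bound $\Prop{1,\frac{q-1}{2}}$ is immediate: for a single vertex $v$ and $\alpha=(+1)$, the $\alpha$-successors of $(v)$ are exactly the positive neighbors of $v$, and since $\spq=(K_q,\Sigma)$ with the positive edges forming the Paley graph $P_q$, which is $\frac{q-1}{2}$-regular, there are exactly $\frac{q-1}{2}$ of them; for $\alpha=(-1)$ the negative neighbors number $q-1-\frac{q-1}{2}=\frac{q-1}{2}$ as well. For $\Prop{2,\frac{q-5}{4}}$, take two distinct vertices $u,v$ (automatically a clique since $\spq$ is complete) and a signed vector $\alpha=(\alpha_1,\alpha_2)$. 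I would count the common neighbors of $u$ and $v$ of the appropriate signs. Using edge-transitivity of $\spq$ (so $uv$ may be assumed positive) and the strong-regularity parameters of $P_q$, namely $\lambda=\frac{q-5}{4}$ and $\mu=\frac{q-1}{4}$: the number of vertices positive to both $u$ and $v$ is $\lambda=\frac{q-5}{4}$; the number positive to $u$ and negative to $v$ (i.e. adjacent to $u$ in $P_q$ but not to $v$) is $\frac{q-1}{2}-1-\lambda=\frac{q-1}{4}$ (subtracting $v$ itself); by symmetry the number negative to $u$ and positive to $v$ is $\frac{q-1}{4}$; and the number negative to both is $\mu=\frac{q-1}{4}$ (here the two non-adjacent vertices of $P_q$ corresponding to $u,v$... but $u,v$ are adjacent in $P_q$, so one should instead count non-neighbors of both $u$ and $v$ in $P_q$, which is $q-2-2\cdot\frac{q-1}{2}+\lambda = \lambda = \frac{q-5}{4}$). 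The minimum over all four sign patterns is $\frac{q-5}{4}$, giving the claim; I'd double-check these counts carefully.

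\textbf{Item 2: properties of $\tr$.} These follow mechanically from Lemma~\ref{spq2tr} applied to Item 1: $\Prop{1,\frac{q-1}{2}}$ of $\spq$ would give $\Prop{2,\frac{q-1}{2}}$ of $\tr$, and $\Prop{2,\frac{q-5}{4}}$ of $\spq$ gives $\Prop{3,\frac{q-5}{4}}$ of $\tr$. For $\Prop{1,q}$ of $\tr$: by construction $|N^+_{\tr}(u)|=q$ for every vertex $u$ (stated in the excerpt), which handles $\alpha=(+1)$, and $|N^-_{\tr}(u)|=q$ handles $\alpha=(-1)$; alternatively this is Lemma~\ref{spq2tr} applied to the trivial property $\Prop{0,q}$ of $\spq$ (every empty clique has $q$ successors of the empty vector, vacuously, in a $q$-vertex graph) — I would phrase it via the $N^+$/$N^-$ observation to be safe.

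\textbf{Item 3: properties of $AT(\spq)$.} Here I use Lemma~\ref{lem:double-tromp}, noting $Tr(\spq)=\tr$, so each property $\Prop{n,k}$ of $\tr$ yields $\Prop{n,k-1}$ of $AT(\spq)$. From Item 2: $\Prop{1,q}$ gives $\Prop{1,q-1}$; $\Prop{2,\frac{q-1}{2}}$ gives $\Prop{2,\frac{q-3}{2}}$; $\Prop{3,\frac{q-5}{4}}$ gives $\Prop{3,\frac{q-9}{4}}$. Since a number of successors is nonnegative, the third bound can be written $\Prop{3,\max(0,\frac{q-9}{4})}$, matching the statement. The main obstacle in the whole proof is Item~1, specifically getting the strongly-regular-graph counting for $\Prop{2,\frac{q-5}{4}}$ exactly right for all four sign patterns of $\alpha$ (including the subtraction of $u,v$ themselves from the relevant neighborhoods); everything after that is a routine application of Lemmas~\ref{spq2tr} and~\ref{lem:double-tromp}. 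I would organize the write-up as: prove Item~1 from first principles, then derive Item~2 from Item~1 via Lemma~\ref{spq2tr} (plus the $N^+$ count for $\Prop{1,q}$), then derive Item~3 from Item~2 via Lemma~\ref{lem:double-tromp}.
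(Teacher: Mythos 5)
Your proof follows the same route as the paper's, which for item~1 simply cites self-complementarity, edge-transitivity and the strong-regularity parameters $(q,\frac{q-1}{2},\frac{q-5}{4},\frac{q-1}{4})$ of $P_q$, and then derives items~2 and~3 from Lemmas~\ref{spq2tr} and~\ref{lem:double-tromp} exactly as you do (including the $|N^+|=|N^-|=q$ observation for $\Prop{1,q}$); your item~1 merely spells out the counting the paper leaves implicit. One of those counts is off: for a positive edge $uv$, the number of common \emph{negative} neighbours is $(q-2)-2\left(\frac{q-1}{2}-1\right)+\lambda=\frac{q-1}{4}$ (it is $\mu$ of the complementary Paley graph, since $u,v$ are non-adjacent there), not $\lambda=\frac{q-5}{4}$ as your parenthetical claims; since $\frac{q-1}{4}\ge\frac{q-5}{4}$, the minimum over the four sign patterns is still attained at the $(+1,+1)$ pattern (and, via the anti-automorphism $x\mapsto nx$ for $n$ a non-square, at $(-1,-1)$ when $uv$ is negative), so the conclusion $\Prop{2,\frac{q-5}{4}}$ stands.
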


\begin{proof}
 \begin{enumerate}
 \item These properties follow from the fact that the signified Paley
  graph $\spq$ is built from the Paley graph $P_q$ which is
  self-complementary, edge transitive and strongly regular with
  parameters $(q,\frac{q-1}{2},\frac{q-5}{4},\frac{q-1}{4})$.
 \item $\tr$ has property $\Prop{1,q}$ since it is vertex transitive
  by Lemma~\ref{lem:transitivity} and the vertex $\infty$ has $q$
  positive and $q$ negative neighbors. The other properties follow
  from (1) and Lemma~\ref{spq2tr}.
 \item These properties follow from (2) and Lemma~\ref{lem:double-tromp}.
 \end{enumerate}
\end{proof}

\section{Results on signified homomorphisms}\label{sec:signified}

This section is devoted to study signified homomorphisms of planar
graphs and outerplanar graphs.

An \emph{acyclic $k$-coloring} is a proper vertex-coloring
such that each cycle has at least three colors. In other words, the
graph induced by any two color classes is a forest.

In 1998, Alon and Marshall~\cite{am98} proved the following (the
signified graph $ZS_k$ has $k\cdot 2^{k-1}$ vertices and has been
considered in Section~\ref{subsec:ask}):

\begin{theorem}[\cite{am98}]\label{thm:alon-marshall}
  Let $(G,\Sigma)$ be such that $G$ admits an acyclic $k$-coloring.
  Then $(G,\Sigma)$ admits a signified homomorphism to $ZS_k$ and thus
  $\chi_2(G)\le {k\cdot 2^{k-1}}$.
\end{theorem}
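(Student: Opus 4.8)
The plan is to turn the acyclic $k$-coloring into a signified homomorphism to $ZS_k$ directly, exploiting the very special shape of the vertices and of the edge-sign rule of $ZS_k$. So fix an acyclic $k$-coloring $c:V(G)\to\set{1,\ldots,k}$ of $G$ and fix an arbitrary $\Sigma\subseteq E(G)$. I will look for a homomorphism of the form $\varphi(v)=(c(v);\alpha_1(v),\ldots,\alpha_k(v))$, where $\alpha_{c(v)}(v)=0$ and the other coordinates are to be chosen. With this shape, $\varphi(v)$ automatically lies in $V(ZS_k)$, and since $c$ is a proper coloring, adjacent vertices get images with distinct first coordinates, hence adjacent vertices of $ZS_k$; so $\varphi$ is a graph homomorphism whatever the $\alpha_j(v)$ are. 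The remaining task is to choose the $\alpha_j(v)$ so that signs are preserved: for an edge $uv$ of $(G,\Sigma)$ with $c(u)=i$, $c(v)=j$ (so $i\ne j$), the sign of $\varphi(u)\varphi(v)$ in $ZS_k$ is by definition $\alpha_j(u)\times\alpha_i(v)$, and this must equal the sign of $uv$.

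The key observation is that this family of constraints \emph{separates over pairs of colors}: the product $\alpha_j(u)\alpha_i(v)$ only involves the $j$-th coordinate of a vertex of color class $i$ and the $i$-th coordinate of a vertex of color class $j$. Hence fix an unordered pair $\set{i,j}$, $i\ne j$, and let $G_{ij}$ be the subgraph of $G$ induced by color classes $i$ and $j$; by the definition of an acyclic coloring, $G_{ij}$ is a forest. On each tree $T$ of $G_{ij}$, root $T$ at an arbitrary vertex $r$, put $f(r)=+1$, and propagate along the edges away from the root: once $f$ is fixed at the parent end of an edge $e$, set $f$ at the child end to $(\text{sign of }e)\times f(\text{parent})$. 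Because $T$ is a tree there is no cycle along which two constraints could conflict, so we obtain $f:V(T)\to\set{+1,-1}$ with $f(x)f(y)=\text{sign of }xy$ for every edge $xy$ of $T$; doing this on every tree of $G_{ij}$ gives such an $f$ on all of $G_{ij}$. Now set $\alpha_j(u):=f(u)$ for $u$ in class $i$ and $\alpha_i(v):=f(v)$ for $v$ in class $j$.

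Performing this for all pairs of colors assigns to every vertex $v$ a sign $\alpha_j(v)\in\set{+1,-1}$ for each $j\ne c(v)$, hence a well-defined image $\varphi(v)\in V(ZS_k)$; and for every edge $uv$ with $c(u)=i$, $c(v)=j$, the labels $\alpha_j(u)$ and $\alpha_i(v)$ are exactly the values of the function $f$ built on the component of $G_{ij}$ containing $uv$, so $\alpha_j(u)\alpha_i(v)=f(u)f(v)=\text{sign of }uv$. Thus $\varphi$ is a signified homomorphism of $(G,\Sigma)$ to $ZS_k$. Since $ZS_k$ has $k\cdot 2^{k-1}$ vertices, this gives $\chi_2(G,\Sigma)\le k\cdot 2^{k-1}$, and as $\Sigma$ was arbitrary we conclude $\chi_2(G)\le k\cdot 2^{k-1}$.

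I expect the only genuine subtlety to be the one isolated in the middle paragraph: one must notice that the edge-sign rule of $ZS_k$ makes the sign of $\varphi(u)\varphi(v)$ depend on only one coordinate of each endpoint — the coordinate indexed by the \emph{other} endpoint's color — so the per-vertex data can be built independently pair-of-colors by pair-of-colors, and on each such pair the sign equations are consistent \emph{precisely because} the acyclic-coloring hypothesis forces the corresponding two-color subgraph to be a forest. Once this is seen, the rest is routine bookkeeping.
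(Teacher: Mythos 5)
Your proof is correct, and there is nothing in the paper to compare it against: Theorem~\ref{thm:alon-marshall} is quoted from Alon and Marshall~\cite{am98} without proof. Your argument --- observing that the sign rule of $ZS_k$ makes each edge constraint involve only the coordinate of each endpoint indexed by the other endpoint's color, so the coordinates can be fixed independently for each pair of color classes by propagating signs along the trees of the two-colored forests guaranteed by acyclicity --- is exactly the standard argument from that reference.
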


Note that Theorem~\ref{thm:alon-marshall} is actually tight as shown
by Huemer et al.~\cite{hfmm08} in 2008.

The \emph{girth} of a graph is the length of a shortest cycle. We
denote by $\mathcal{P}_g$ (resp. $\mathcal{O}_g$) the class of planar
graphs (resp. outerplanar graphs) with girth at
least $g$ (note that $\mathcal{P}_3$ is simply the class of planar
graphs). 

Borodin~\cite{b79} proved that every planar graph admits an acyclic
$5$-coloring. We thus get the following from
Theorem~\ref{thm:alon-marshall}:
\begin{corollary}\label{cor:main3-montejano}
  Every planar graph admits a signified homomorphism to $ZS_5$. We
  thus have $\chi_2(\mathcal{P}_3)\le 80$.
\end{corollary}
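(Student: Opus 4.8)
The plan is to combine Borodin's theorem on acyclic colorings of planar graphs with the Alon--Marshall theorem, exactly as the corollary's own phrasing suggests. The statement is essentially immediate once the two ingredients are in place, so the ``proof'' is really a two-line chain of implications rather than an argument with a genuine obstacle.

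Concretely, first I would invoke Borodin~\cite{b79}: every planar graph $G$ admits an acyclic $5$-coloring, i.e.\ a proper vertex-coloring with $5$ colors in which every cycle receives at least three colors (equivalently, every two color classes induce a forest). This places every planar graph in the hypothesis class of Theorem~\ref{thm:alon-marshall} with $k=5$. Then I would apply Theorem~\ref{thm:alon-marshall} directly: since $G$ admits an acyclic $5$-coloring, for any signature $\Sigma$ the signified graph $(G,\Sigma)$ admits a signified homomorphism to $ZS_5$, and hence $\chi_2(G,\Sigma)\le |V(ZS_5)|$.

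Finally I would substitute the order of $ZS_5$. Recall from Section~\ref{subsec:ask} that $ZS_k$ has $k\cdot 2^{k-1}$ vertices, so $|V(ZS_5)| = 5\cdot 2^{4} = 80$. Since $\Sigma$ was arbitrary, this gives $\chi_2(G)\le 80$ for every planar graph $G$, and therefore $\chi_2(\mathcal{P}_3)\le 80$ by the definition of the signified chromatic number of a graph class.

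There is no real difficulty here: the only thing to be careful about is matching the numerology ($k=5$ because of Borodin, and $k\cdot2^{k-1}=80$ at $k=5$), and noting that the acyclic-coloring hypothesis of Theorem~\ref{thm:alon-marshall} depends only on the underlying graph $G$, so the bound is uniform over all choices of signature and hence passes to $\chi_2(G)$ and then to $\chi_2(\mathcal{P}_3)$. If anything counts as the ``main obstacle,'' it is simply citing the right external result (Borodin's acyclic $5$-coloring theorem); everything after that is bookkeeping.
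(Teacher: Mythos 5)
Your proposal is correct and matches the paper exactly: the corollary is stated as an immediate consequence of Borodin's acyclic $5$-coloring theorem combined with Theorem~\ref{thm:alon-marshall}, with $|V(ZS_5)|=5\cdot 2^4=80$. Nothing further is needed.
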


In this same context, Montejano et al.~\cite{mopr10} obtained in 2010
the following results:

\begin{theorem}[\cite{mopr10}]\
  \begin{enumerate}\label{thm:main-montejano}
  \item Every planar graph of girth at least $5$ admits a signified
  homomorphism to $\trk{9}$. We thus have $\chi_2(\mathcal{P}_5)\le
  20$. \label{thm:main5-montejano}
\item Every planar graph of girth at least $6$ admits a signified
  homomorphism to $\trk{5}$. We thus have $\chi_2(\mathcal{P}_6)\le
  12$.\label{thm:main6-montejano}
  \end{enumerate}
\end{theorem}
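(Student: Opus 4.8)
The plan is to prove both statements by the discharging method, treating them in parallel and flagging where the two girths diverge. Note first that the numerical bounds are immediate once the homomorphisms are established, since $|V(\trk{9})|=2\cdot 9+2=20$ and $|V(\trk{5})|=2\cdot 5+2=12$. So it suffices to show that every signified graph $(G,\Sigma)$ with $G\in\mathcal{P}_5$ admits a signified homomorphism to $\trk{9}$, and that every $(G,\Sigma)$ with $G\in\mathcal{P}_6$ admits a signified homomorphism to $\trk{5}$. Suppose one of these fails, and let $(G,\Sigma)$ be a counterexample minimising $|V(G)|+|E(G)|$; fix a plane embedding of $G$.

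The first phase is to collect a list of \emph{reducible configurations}: local substructures that cannot occur in $(G,\Sigma)$, because if one did we could delete a small part, extend the homomorphism to the rest by minimality, and then complete it, contradicting the choice of $(G,\Sigma)$. The completion step is powered entirely by the coloring properties recorded in Lemma~\ref{lem:ppt}: $\trk{9}$ has $\Prop{1,9}$, $\Prop{2,4}$ and $\Prop{3,1}$, while $\trk{5}$ has $\Prop{1,5}$, $\Prop{2,2}$ and $\Prop{3,0}$. Concretely, to color a vertex $v$ of degree $d$ whose neighbours already carry colors $c_1,\dots,c_d$: if the $c_i$ are pairwise non-anti-twin they induce a clique of $\trk{q}$, and property $\Prop{d,l}$ supplies $l$ admissible colors for $v$ (in particular a $1$-vertex or a $2$-vertex can always be recolored, with slack to spare). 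Easy consequences are that $G$ is connected, $2$-connected, and has minimum degree at least $2$ (a $\le 1$-vertex is reducible via $\Prop{1,q}$). Beyond that one rules out long threads of $2$-vertices, $3$-vertices carrying too many $2$-neighbours, and more intricate configurations clustering a few vertices of degree $2$ and $3$ around short faces. The delicate point throughout is the \emph{anti-twin obstruction}: if two neighbours of $v$ are colored by an anti-twin pair $\{x,\twin(x)\}$, the clique hypothesis of $\Prop{d,l}$ fails and the extension may be blocked; one then reroutes by first recoloring one of those neighbours so as to break the clash, which is possible precisely because that neighbour still has several admissible colors. This is exactly where the two cases separate: the larger slacks $\Prop{2,4}$ and $\Prop{3,1}$ of $\trk{9}$ leave enough room under girth $5$, whereas $\trk{5}$ only offers $\Prop{2,2}$ and the useless $\Prop{3,0}$, so the reductions need the extra edge provided by girth $6$ to go through.

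The second phase is a standard Euler-formula discharging argument. For girth $5$, assign charge $\mu(v)=d(v)-4$ to each vertex and $\mu(f)=\ell(f)-4$ to each face, so that $\sum_v\mu(v)+\sum_f\mu(f)=-8$; for girth $6$ it is cleaner to use $\mu(v)=2d(v)-6$ and $\mu(f)=\ell(f)-6$ (or the weighting $d(v)-3$, $\tfrac12\ell(f)-3$), again with strictly negative total. In each case every face begins with nonnegative (for girth $6$) or positive (for girth $5$, where $\ell(f)\ge 5$) charge, and only the $2$-vertices and $3$-vertices (for girth $6$, only the $2$-vertices) begin negative. I would then set discharging rules that move charge, primarily from faces (and, if needed, from the high-degree vertices), to the nearby low-degree vertices — for instance each face donating a fixed amount to each incident $2$-vertex and a smaller amount to each incident $3$-vertex. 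Using the reducible configurations to bound how many low-degree vertices can crowd a single face and how they must be spaced along it, one checks that after discharging every vertex and every face has nonnegative charge, contradicting the strictly negative total. Hence no counterexample $(G,\Sigma)$ exists.

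The main obstacle is not the discharging bookkeeping but the design of the reducible configurations in the presence of the anti-twin structure of $\trk{q}$: a naive greedy extension fails exactly when two neighbours of the vertex being colored are mapped to anti-twins, and escaping this requires a neighbour with genuine recoloring freedom, i.e. property $\Prop{k,l}$ with $l\ge 2$ at that neighbour. Making this mechanism function within the thin margins of $\trk{5}$ is what pins the second statement to girth $6$, and finding a family of reducible configurations whose "cost" is exactly matched by the charge that faces of girth $5$ (resp. $6$) can afford to donate is the crux of the argument.
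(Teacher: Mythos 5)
First, a point of reference: the paper does not prove this theorem at all --- it is imported verbatim from Montejano et al.~\cite{mopr10}, so the only ``proof'' in the paper is the citation. Your outline does correctly identify the method used both in that source and in the present paper's own Theorem~\ref{thm:main-signified}: a minimal counterexample, reducible configurations whose reducibility is powered by the $\Prop{k,l}$ properties of Lemma~\ref{lem:ppt} (and your computations $\Prop{1,9},\Prop{2,4},\Prop{3,1}$ for $\trk{9}$ and $\Prop{1,5},\Prop{2,2},\Prop{3,0}$ for $\trk{5}$ are correct, as are the charge totals $-8$ and $-12$), followed by discharging against Euler's formula.

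However, what you have written is a plan, not a proof, and the gap is precisely where all the work lives. You never state the list of reducible configurations, never carry out a single reducibility argument beyond the trivial $^{\le}1$-vertex, never write down the discharging rules, and never verify unavoidability; each of these steps is replaced by ``I would then set\ldots'' or ``one checks\ldots''. This matters because the steps you defer are not routine. For $\trk{5}$ the property $\Prop{3,0}$ is vacuous, so a $3$-vertex with three precolored neighbours cannot be extended greedily at all; the entire girth-$6$ argument hinges on designing configurations in which every vertex to be colored has at most two constraining colored neighbours after deletions and recolorings, and on proving that girth~$6$ forces enough charge into the faces to make those configurations unavoidable. Similarly, the ``anti-twin obstruction'' you name (two neighbours colored by an anti-twin pair, breaking the clique hypothesis of $\Prop{k,l}$) is exactly the failure mode that the configuration design must preclude, and you assert without argument that a blocked neighbour ``still has several admissible colors'' --- that is a property of the specific configurations chosen, not something that holds in general. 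Until the configurations, their reducibility proofs, and the discharging verification are actually supplied, the theorem is not proved; as it stands your text is a correct description of how one would go about proving it.
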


In this section, we get the following new results. We obtain a first
result on outerplanar graph with girth at least $4$ (see
Theorem~\ref{thm:outer-signified}) and a second one on planar graph
with girth at least $4$ (see Theorem~\ref{thm:main-signified}). The
latter result gives a new upper bound on the signified chromatic
number. We then construct a planar graph with signified chromatic
number $20$ (see Theorem~\ref{thm:lower-planar}). We finally give
properties that must verify the target graphs for outerplanar and
planar graphs.

\begin{theorem}\label{thm:outer-signified}
  Every outerplanar graph with girth at least $4$ admits a signified
  homomorphism to $AT(K_4^*)$, where $K_4^*$ denotes the complete graph
  on $4$ vertices with exactly one negative edge.
\end{theorem}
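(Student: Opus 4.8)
The plan is to use the standard discharging-free approach for outerplanar graphs: first establish a short list of reducible configurations that cannot occur in a minimal counterexample, then invoke the structure of outerplanar graphs of girth at least $4$ to show that one of these configurations must always be present. For the target graph we will rely on the anti-twinned structure: by Lemma~\ref{lem:signed-signified} (or rather its signified analogue, since $AT(K_4^*)$ is anti-twinned by Observation~\ref{obs:doubled_twin}), it is enough to understand how $\alpha$-successors behave in $AT(K_4^*)$, i.e.\ to compute the relevant properties $\Prop{k,l}$ for this $8$-vertex graph. A direct check should give that $AT(K_4^*)$ has property $\Prop{1,3}$ and $\Prop{2,1}$ (every pair of adjacent vertices, for every prescribed sign pattern, has at least one common neighbor realising it) — the point being that $K_4^*$ has enough edges that its doubled version is ``dense'' with respect to small cliques. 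These two facts will be the engine driving every reduction.

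First I would set up a minimal counterexample $(G,\Sigma)$: an outerplanar graph with girth $\ge 4$, no signified homomorphism to $AT(K_4^*)$, minimum number of vertices (and then edges). The reducible configurations I expect to need are: (i) a vertex of degree $\le 1$ — removable since $AT(K_4^*)$ has no isolated vertices and property $\Prop{1,3}$ lets us extend; (ii) a $2$-vertex $v$ whose two neighbors are non-adjacent — contract or delete $v$, color the smaller graph, and re-extend using that a path of length $2$ only constrains $v$ through two independent edges, so property $\Prop{1,3}$ twice (with at least $3$ choices each) leaves room; and most importantly (iii) a $2$-vertex lying on a triangle would violate girth $\ge 4$, so we never see that — but we will see $2$-vertices whose neighbors are adjacent only via longer structure, and (iv) two adjacent $2$-vertices, or a short ``thread'' of $2$-vertices, handled by deleting the thread and re-extending edge by edge. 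The key numeric check is that $3 \ge 2$ (for $\Prop{1,3}$ after one forbidden color from a previously-colored neighbor of the re-inserted vertex) and $1 \ge 1$ (for $\Prop{2,1}$) always give at least one valid choice.

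Then I would invoke the classical fact that every outerplanar graph has a vertex of degree at most $2$, and more precisely that a $2$-connected outerplanar graph of girth $\ge 4$ has two non-adjacent vertices of degree $2$, or equivalently (looking at the weak dual, a tree) there is an internal face incident to at least two consecutive edges on the outer boundary both of whose middle vertices have degree $2$. Combined with the reductions, this forces one of configurations (i)–(iv) into $G$, contradicting minimality. (If $G$ is not $2$-connected, a cut vertex or a pendant block gives an even easier reduction, coloring the blocks separately and gluing — here I use that $AT(K_4^*)$ is vertex-transitive-enough, or simply that we can permute to align the shared vertex's color, using an automorphism or the anti-twin symmetry.)

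The main obstacle will be pinning down exactly which configurations around $2$-vertices are forced by girth $\ge 4$ in an outerplanar graph and verifying that in each case the number of available extensions (coming from $\Prop{1,3}$ and $\Prop{2,1}$) strictly exceeds the number of constraints imposed by already-colored neighbors. In particular, when a re-inserted $2$-vertex $v$ has both neighbors $x,y$ already colored and $xy \in E(G)$, we need $v$ to be a common neighbor of $x$ and $y$ with the two prescribed signs; here girth $\ge 4$ guarantees $xy \notin E(G)$ unless $v$ was part of a triangle, which is excluded — so in fact we only ever need $\Prop{1,l}$ type arguments for the $2$-vertex case, and $\Prop{2,1}$ is needed only to handle the base step or possible chords in a non-minimal block decomposition. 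Getting this case analysis clean and confirming $AT(K_4^*) \models \Prop{1,3}$ by inspecting its $8$ vertices is the crux; everything else is routine.
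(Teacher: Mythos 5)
Your overall architecture (minimal counterexample, reducible configurations, then the structure of outerplanar graphs of girth at least $4$) is the same as the paper's, but the engine you propose to drive the reductions is broken. The graph $AT(K_4^*)$ does \emph{not} have property $\Prop{2,1}$ in the sense you need. Write $K_4^*$ on $\{a,b,c,d\}$ with $ab$ the unique negative edge. In $AT(K_4^*)$ one computes $N^+(c_0)=\{a_0,b_0,d_0\}$ and $N^-(d_0)=\{a_1,b_1,c_1\}$, so $N^+(c_0)\cap N^-(d_0)=\emptyset$: there is no vertex joined positively to $c_0$ and negatively to $d_0$. Hence your configuration (ii) is not reducible as stated: if a deleted $2$-vertex $v$ has its two neighbors colored $c_0$ and $d_0$ and its two edges carry signs $+$ and $-$, no color for $v$ exists. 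The step ``$\Prop{1,3}$ twice, with at least $3$ choices each, leaves room'' is a non sequitur --- the two sign constraints must be met by a \emph{single} common neighbor, and two $3$-element sets in an $8$-vertex graph can be disjoint. The same failure kills ``re-extending edge by edge'' along a thread of $2$-vertices: the last reinserted vertex again needs a common signed neighbor of two already-colored vertices.

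What actually makes the theorem work, and what the paper checks, is a sufficient fact one level up: for every pair of not-necessarily-distinct vertices $x,y$ of $AT(K_4^*)$, all $8$ possible signed $3$-paths from $x$ to $y$ exist. Accordingly the reducible configuration is a pair of \emph{adjacent} $2$-vertices (delete both, then reinsert them simultaneously as the two interior vertices of a $3$-path), not a single $2$-vertex. The structural input must match: not ``two non-adjacent vertices of degree $2$'', but the Pinlou--Sopena lemma that every outerplanar graph of girth at least $k$ and minimum degree at least $2$ has a face of length $l\ge k$ with at least $l-2$ consecutive $2$-vertices, which for $k=4$ yields two adjacent $2$-vertices and closes the argument. (A minor further point: $AT(K_4^*)$ is not vertex-transitive --- $a_0$ and $c_0$ have non-isomorphic signed positive neighborhoods --- so the gluing-at-cut-vertices remark would also need repair, though vertex-minimality makes that case unnecessary.)
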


\begin{proof}
  Assume by contradiction that there exists a counterexample to the
  result and let $(H,\Lambda)$ be a minimal counterexample in term of
  number of vertices. 

  Suppose $(H)$ contains a vertex $u$ of degree at most $1$. By
  minimality of $(H)$, the graph $(H')=(H\setminus\set{u})$ admits
  a signified homomorphism to $AT(K_4^*)$. Since every vertex of   
  $AT(K_4^*)$ is incident to a positive and a negative edge, we can
  extend the signified homomorphism to $(H)$, a contradiction. 
  
  Suppose that $(H)$ contains two adjacent vertices $u$ and $v$ of
  degree $2$. By minimality of $(H)$, the graph
  $(H')=(H\setminus\set{u,v})$ admits a signified homomorphism to
  $AT(K_4^*)$. One can check that for every pair of (non necessarily
  distinct) vertices $x$ and $y$ of $AT(K_4^*)$, there exist the $8$
  possible signified $3$-paths.  We can therefore extend the signified
  homomorphism to $(H)$, a contradiction.
  
  Pinlou and Sopena~\cite{ps06c} showed that every
  outerplanar graph with girth at least $k$ and minimum degree at
  least $2$ contains a face of length $l \geq k$ with at least $(l-2)$
  consecutive vertices of degree $2$. Therefore, the counterexample
  $(H)$ is not an outerplanar graph of girth $4$, a contradiction,
  that completes the proof.
\end{proof}

\begin{theorem}\label{thm:main-signified}
  Every planar graph with girth at least $4$ admits a signified
  homomorphism to $AT(\spk{25})$. We thus have $\chi_2(\mathcal{P}_4)\le 50$.
\end{theorem}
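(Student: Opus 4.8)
The plan is to follow the standard discharging strategy used throughout this area: assume a minimal counterexample $(H,\Lambda)$ (minimum number of vertices) admitting no signified homomorphism to $AT(\spk{25})$, establish a list of \emph{reducible configurations} — small substructures that cannot appear in $(H)$ — and then derive a contradiction by a discharging argument on a planar embedding of $H$, using Euler's formula in the form appropriate for girth $4$ (namely $\sum_v (d(v)-4) + \sum_f (\ell(f)-4) = -8$). The key input from Section~\ref{sec:target} is Lemma~\ref{lem:ppt}(3): $AT(\spk{25})$ has properties $\Prop{1,24}$, $\Prop{2,11}$, and $\Prop{3,4}$. These say, respectively, that every vertex has $24$ positive and $24$ negative neighbors, every clique-edge has at least $11$ common $\alpha$-neighbors for each sign pattern $\alpha$, and every clique-triangle has at least $4$ common $\alpha$-neighbors for each of the $8$ sign patterns $\alpha$. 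Since $\chi_2(\mathcal{P}_4)\le |V(AT(\spk{25}))|/2 = 25$ would follow... wait, no: $|V(AT(\spk{25}))| = 50$, and Corollary~\ref{cor:bound_signed}(1) gives $\chi_s \le 25$; here we want the \emph{signified} bound $\chi_2(\mathcal{P}_4)\le 50$ directly, which is just the order of the target, so it suffices to exhibit the homomorphism.

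First I would record the reducible configurations. A $^{\le 2}$-vertex is reducible: a $1$-vertex can always be extended since every vertex of the target has both a positive and a negative neighbor ($\Prop{1,24}$); a $2$-vertex $u$ with neighbors $x,y$ can be extended once its neighbors are colored provided the target has, for any ordered pair of images and any sign pattern $(\alpha_1,\alpha_2)$, a common neighbor — if $x,y$ get the same color this needs a vertex with prescribed signs to that color, handled by $\Prop{1,\cdot}$ applied carefully, and if $x,y$ get distinct colors we need them adjacent in the target; since $AT(\spk{25})$ is \emph{not} complete, a $2$-vertex is reducible only when we can also control that its two neighbors map to adjacent (in the target) or equal vertices — this is the subtlety that forces girth-$4$ configurations with $3$-paths. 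So more precisely the reducible configurations will be: (i) a $^{\le 1}$-vertex; (ii) a $2$-vertex adjacent to a $^{\le ?}$-vertex, or two/three consecutive $2$-vertices (a $k$-thread), reducible because a short path between two colored endpoints can always be routed — using that between any two vertices $a,b$ of $AT(\spk{25})$ and any prescribed sign sequence there is a signified path of length $2$ or $3$, which follows from $\Prop{1,24}$ and $\Prop{2,11}$ since the large neighborhood sizes guarantee the needed intermediate vertices exist; (iii) richer configurations around low-degree vertices where, after deleting a bounded set of vertices and coloring the rest, the deleted vertices are colored greedily in a good order, each time needing a common $\alpha$-successor of an already-colored clique of size $1$, $2$, or $3$ — always available by $\Prop{1,24}$, $\Prop{2,11}$, $\Prop{3,4}$ respectively, as long as the number of "used up" colors stays below these thresholds.

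Then I would carry out the discharging: assign charge $d(v)-4$ to each vertex and $\ell(f)-4$ to each face, total $-8$. Since girth is $\ge 4$, every face has nonnegative charge. The only negative charges sit on $2$- and $3$-vertices (charge $-2$ and $-1$). Discharging rules would send charge from big faces and high-degree vertices to nearby $2$- and $3$-vertices; the reducibility of threads of consecutive $2$-vertices and of $3$-vertices with too many $2$-neighbors bounds how much charge each face must give away, and one shows every face and vertex ends with nonnegative charge, contradicting the total $-8$. Concretely I expect the rules to be of the familiar type "each $(\ge 5)$-face gives $\tfrac13$ (or $\tfrac12$) to each incident $2$-vertex" etc., calibrated so that a face of length $\ell$ gives away at most $\ell-4$.

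The main obstacle, as usual, is the combinatorial bookkeeping of the reducible configurations: I must identify \emph{enough} of them that the discharging closes, and for each one verify extendability. The verification itself is not hard in principle — it always reduces to "color the rest by minimality, then extend the few missing vertices one at a time, each step being a common-$\alpha$-successor query answered by $\Prop{1,24}$, $\Prop{2,11}$, or $\Prop{3,4}$" — but one must be careful that when extending several vertices in sequence, the clique on which we query is \emph{already entirely colored} and has size $\le 3$, and that the successors chosen earlier do not clash with later constraints; here the generosity of the bounds ($24$, $11$, $4$) versus the small number of simultaneous constraints imposed by a girth-$4$ planar configuration (where each uncolored vertex has few colored neighbors) is exactly what makes it work. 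Designing the discharging rules so that the girth-$4$ reducible set suffices — in particular handling $3$-vertices, which only barely carry negative charge — is where the real effort lies, but it parallels the argument for $\trk{9}$ in Theorem~\ref{thm:main-montejano} with one extra unit of girth slack traded for a larger (non-complete, but anti-twinned) target.
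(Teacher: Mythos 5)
Your overall architecture (minimal counterexample, reducible configurations certified by $\Prop{1,24}$, $\Prop{2,11}$, $\Prop{3,4}$, discharging with charges $d(v)-4$ and $\ell(f)-4$) matches the paper's, but your plan as stated has a gap that would sink it. You take minimality by number of vertices and propose to leave $3$-vertices in the graph, paying for their charge $-1$ by discharging. This cannot work: a $3$-regular triangle-free planar graph (e.g.\ the cube) has every vertex at charge $-1$ and every face at charge $0$, so there is no positive charge anywhere to move, yet none of your listed configurations ($^{\le 1}$-vertices, $2$-threads, vertices with many $2$-neighbors) occurs in it. The paper must, and does, prove that a $3$-vertex is itself reducible (configuration C5), and the reduction is not a deletion: one removes the $3$-vertex $v$ and inserts, between each pair of its neighbors, new $2$-paths realizing both sign patterns, which forces the images of the three neighbors to be distinct, non-anti-twin, and hence to induce a clique in $AT(\spk{25})$, so that $\Prop{3,4}$ applies. (You cannot invoke $\Prop{3,4}$ on three already-colored neighbors otherwise, since the target is not complete.) This gadget \emph{increases} the vertex and edge count, so it is illegal under fewest-vertices minimality; the paper instead orders counterexamples first by the number of $^{\ge 3}$-vertices and then by $|V|+|E|$. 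Your proposal is missing both the gadget and the refined order that legitimizes it.

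A second, smaller but real problem is your claim that every reducibility check ``reduces to a common-$\alpha$-successor query'' answered greedily by the three properties. For the last configuration the discharging needs (C8, a $4$-face flanked by two weak $7$-vertices), the count is exactly tight: the relevant triple can have exactly $4$ $\alpha$-successors, all of which may be forbidden. The paper resolves this by resigning to normalize local edge signs, showing the bad triples lie in one copy of $\spk{25}$, explicitly tabulating in $\mathbb{F}_{25}$ the five triples $(0_0,1_0,x_0)$ with exactly $4$ successors, and then \emph{recoloring} three already-colored vertices using the fact that no two such triples share three successors. So the margin of $\Prop{3,4}$ against the local constraints is not generous enough for a purely greedy extension, and a genuine computation in the target is required. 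Finally, note that the paper does not redo the discharging either: it certifies the eight configurations and imports the discharging of~\cite{op13} wholesale, so the ``real effort'' you locate in designing the rules is, in the paper, located instead in the reducibility proofs.
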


Let $n_3(G)$ be the number of $^\ge 3$-vertices in the graph
$G$.  Let us define the partial order $\preceq$.  Given two 
graphs $G_1$ and $G_2$, we have $G_1\prec G_2$ if and only
if one of the following conditions holds:
\begin{itemize}
 \item $n_3(G_1)<n_3(G_2)$.
 \item $n_3(G_1)=n_3(G_2)$ and $|V(G_1)|+|E(G_1)|<|V(G_2)|+|E(G_2)|$.
\end{itemize}

Note that the partial order $\preceq$ is well-defined and is a partial
linear extension of the minor poset.

Let $(H)$ be a signified graph that
does not admit a homomorphism to the signified graph $AT(\spk{25})$ and
such that its underlying graph $H$ is a triangle-free
planar graph which is minimal with respect to $\preceq$. In the
following, $H$ is given with its embedding in the plane. A weak
$7$-vertex $u$ in $H$ is a $7$-vertex adjacent to four $2$-vertices
$v_1,\cdots, v_4$ and three $^\ge 3$-vertices $w_1, w_2, w_3$ such
that $v_1$, $w_1$, $v_2$, $w_2$, $v_3$, $w_3$, and $v_4$ are
consecutive.

\begin{lemma}\label{con4}
 The graph $H$ does not contain the following configurations:
 \begin{enumerate}[(C1)]
 \item a $^\le 1$-vertex;\label{item:1}
 \item a $k$-vertex adjacent to $k$ $2$-vertices for $2\le k\le 49$;\label{item:2}
 \item a $k$-vertex adjacent to $(k-1)$ $2$-vertices for $2\le k\le 24$;\label{item:3}
 \item a $k$-vertex adjacent to $(k-2)$ $2$-vertices for $3\le k\le 12$;\label{item:4}
 \item a $3$-vertex; \label{item:5b}
 \item a $k$-vertex adjacent to $(k-3)$ $2$-vertices for $4\le k\le 6$;\label{item:5}
 \item two vertices $u$ and $v$ linked by two distinct $2$-paths,
  both paths having a $2$-vertex as internal vertex;\label{item:7}
 \item a $4$-face $wxyz$ such that $x$ is 2-vertex, $w$ and $y$ are
  weak $7$-vertices, and $z$ is a $k$-vertex adjacent to $(k-4)$
  2-vertices for $4\le k\le 9$;\label{item:9}
 \end{enumerate}
\end{lemma}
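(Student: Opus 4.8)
The plan is to prove Lemma~\ref{con4} by the discharging method, in two phases: a \emph{reducibility} phase establishing that none of (C1)--(C8) occurs in the minimal counterexample $H$, and (implicitly, for the surrounding proof of Theorem~\ref{thm:main-signified}) a \emph{discharging} phase deriving a contradiction. Here only the reducibility phase is needed. For each configuration, I would assume $H$ contains it, delete or contract a suitable small piece to obtain a graph $H'$ with $H'\prec H$ (so $H'$ admits a signified homomorphism $\varphi$ to $AT(\spk{25})$ by minimality), and then show $\varphi$ extends to all of $H$, contradicting the choice of $H$. The arithmetic engine throughout is Lemma~\ref{lem:ppt}(3): $AT(\spk{25})$ has properties $\Prop{1,24}$, $\Prop{2,11}$, and $\Prop{3,4}$, i.e.\ every vertex has $24$ successors of each prescribed sign, every edge has $11$ common successors of each prescribed sign-pair, and every triangle has $4$ common successors of each prescribed sign-triple. (Note $H$ is triangle-free, so for (C7) and (C8) the relevant cliques in $H$ have size at most $2$, and we only ever invoke $\Prop{1,\cdot}$ and $\Prop{2,\cdot}$ for neighbours of a common vertex; $\Prop{3,\cdot}$ is available should a triangle appear in the target-side analysis.)

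I would organize the reductions by how many ``free'' $2$-vertices hang off a high-degree hub. For (C1), a vertex $u$ of degree $\le 1$: delete $u$; since every vertex of $AT(\spk{25})$ is incident to both a positive and a negative edge (in fact has $24$ of each), $\varphi$ extends. For (C2)--(C6) the pattern is uniform: let $v$ be a $k$-vertex with $j\in\{k,k-1,k-2,k-3\}$ pendant $2$-neighbours $x_1,\dots,x_j$ and $k-j$ other neighbours $w_1,\dots,w_{k-j}$. Delete $v$ together with $x_1,\dots,x_j$ (each $x_i$ then has its other neighbour as sole remaining neighbour, which only decreases degrees), obtaining $H'\prec H$ by the $n_3$-count or the size tiebreak. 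Extend $\varphi$ as follows: the $k-j$ vertices $w_1,\dots,w_{k-j}$ must have a common successor of the required sign-vector; this needs $\Prop{k-j,1}$, which holds because $k-j\le 3$ and $\Prop{3,4}$, $\Prop{2,11}$, $\Prop{1,24}$, $\Prop{0,\ge1}$ all give at least one successor (for (C5), the $3$-vertex, $k-j=0$, pick any vertex). Having coloured $v$, each $x_i$ now needs a colour adjacent to $\varphi(v)$ with the correct sign, and to the colour of its other neighbour (if that neighbour survived) with the correct sign: that is a $\Prop{2,1}$ request on an edge of $AT(\spk{25})$, or $\Prop{1,1}$ if $x_i$'s other neighbour is also deleted — both hold. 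Crucially, distinct $x_i$'s can receive the same colour, so there is no conflict among them. The slight subtlety is that for (C2), when $j=k$ the hub $v$'s neighbours are \emph{all} deleted $2$-vertices, so $v$ may be coloured arbitrarily; this is why (C2) reaches up to $k\le 49$ while (C3) only to $k\le 24$ and (C4) only to $k\le 12$ — the bounds are exactly where the relevant $\Prop{\cdot,\cdot}$ value (respectively $24$, $11$, $4$ from Lemma~\ref{lem:ppt}(3)) first fails to cover the number of $2$-vertices that must be coloured \emph{after} the hub, since each such $2$-vertex is constrained by one edge incident to $\varphi(v)$ and the $\Prop{1,\cdot}$/$\Prop{2,\cdot}$ bound must exceed the count minus the trivial slack. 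I would write out one representative case carefully and remark that the others are identical up to these parameters.

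For (C7), two vertices $u,v$ joined by two internally-disjoint $2$-paths $u\,a\,v$ and $u\,b\,v$ with $a,b$ of degree $2$: delete $a$ and $b$ (or, if $uv\in E(H)$ would be created as a multi-edge, argue directly), giving $H'\prec H$; then $a$ needs a colour adjacent to $\varphi(u)$ and $\varphi(v)$ with prescribed signs, and so does $b$ — two independent $\Prop{2,1}$ requests on the (possibly degenerate) edge $\varphi(u)\varphi(v)$, satisfiable since $\Prop{2,11}$ gives eleven choices, so $a$ and $b$ can even be coloured distinctly if needed (and if $\varphi(u)=\varphi(v)$, use that $H$ triangle-free forces the two $2$-paths plus any hypothetical edge to form a short cycle, which cannot collapse — or simply note $a,b$ then need a common neighbour of a single vertex, i.e.\ $\Prop{1,24}$). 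For (C8), the $4$-face $wxyz$ with $x$ a $2$-vertex, $w,y$ weak $7$-vertices, and $z$ a $k$-vertex ($4\le k\le 9$) with $k-4$ pendant $2$-neighbours: here I would delete $x$ together with $z$ and all $k-4$ of $z$'s pendant $2$-neighbours, and additionally delete the four $2$-neighbours of $w$ and of $y$ that make them ``weak'' — unwinding the weak-$7$-vertex definition, after these deletions the remaining graph $H'$ has strictly smaller $n_3$ (the weak $7$-vertices and $z$ drop below degree $3$, or at least $n_3$ drops) so $H'\prec H$; then recolour in order: first the three $^\ge3$-neighbours $w_1,w_2,w_3$ of $w$ and those of $y$ are already coloured in $H'$, so colour $w$ and $y$ via $\Prop{3,1}$ (they have $3$ surviving $^\ge3$-neighbours), colour $z$'s $k-4$ surviving neighbours' common successor via $\Prop{4-?,\cdot}$ — wait: $z$ keeps $4$ neighbours $w,x$(being recoloured),$y$, and one more, so after fixing $w,y$ one colours $z$ using a $\Prop{2,1}$ or $\Prop{3,1}$ request on $\{\varphi(w),\varphi(y),\text{third neighbour}\}$, then the $k-4\le5$ pendant $2$-vertices of $z$ each via $\Prop{2,1}$ against $\varphi(z)$ and their outside neighbour, then finally $x$ via $\Prop{2,11}$ against $\varphi(w)$ and $\varphi(z)$ (or $\varphi(y)$), and the four weak-inducing $2$-vertices of $w$ and $y$ each via $\Prop{2,1}$. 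All requests are covered by Lemma~\ref{lem:ppt}(3) since $k-4\le 5\le 11$ and $5\le 24$.

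The main obstacle, and where I would spend the most care, is (C8): getting the deletion right so that $H'\prec H$ (one must verify the $n_3$ count genuinely drops, using the precise weak-$7$-vertex structure and the fact that $w,x,y,z$ lie on a common $4$-face, which controls how the pieces interact), and then sequencing the extension so that each vertex is coloured only after all its already-coloured neighbours are fixed and with enough remaining freedom. The bookkeeping — checking that the number of ``late'' $2$-vertices attached to any single recoloured hub never exceeds the corresponding $\Prop{\cdot,\cdot}$ capacity ($24$, $11$, or $4$), accounting for the one unit of slack from the trivial successor — is routine but must be done for each configuration; I would present (C2) in full as the template and then just state the parameter checks for (C3)--(C6), handle (C7) briefly, and give (C8) its own paragraph with the deletion set made explicit.
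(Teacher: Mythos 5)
Your overall architecture (minimal counterexample, delete or modify a small piece to get $H'\prec H$, extend the homomorphism using $\Prop{1,24}$, $\Prop{2,11}$, $\Prop{3,4}$) matches the paper, and your C1 is fine, but the extension mechanism you describe for C2--C7 has a genuine hole. The target $AT(\spk{25})$ is $K_{50}$ minus a perfect matching of anti-twin pairs, and the properties $\Prop{k,l}$ apply only to sequences of vertices \emph{inducing a clique}. Your step ``each $x_i$ now needs \dots a $\Prop{2,1}$ request on an edge of $AT(\spk{25})$'' presupposes that $\varphi(v)$ and $\varphi(v'_i)$ form an edge; if $\varphi(v)=\varphi(v'_i)$ and the two edges of the path $v\,x_i\,v'_i$ have different signs (or $\varphi(v)=\twin(\varphi(v'_i))$ and they have equal signs), the request is unsatisfiable. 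The correct mechanism --- and the actual source of the bounds $k\le 49,24,12,6$ --- is that each far endpoint $v'_i$ forbids exactly \emph{one} colour for the hub $v$, which must therefore be chosen from its $\Prop{k-j,\cdot}$-guaranteed pool (of size $50$, $24$, $11$, $4$) minus those $j$ forbidden colours; your closing sentence gestures at this but your written colouring order (hub first via ``any'' successor, then the $x_i$ via $\Prop{2,1}$) does not implement it, and your claim that for C2 the hub ``may be coloured arbitrarily'' is false. The same clique issue breaks C4--C7 at the hub itself: after you delete $v$ (resp.\ $v_1,v_2$ in C7), nothing forces $\varphi(v_1),\varphi(v_2),\varphi(v_3)$ (resp.\ $\varphi(u),\varphi(w)$) to induce a clique, so $\Prop{2,11}$ and $\Prop{3,4}$ need not apply and the extension can fail outright, e.g.\ when $\varphi(v_1)=\varphi(v_2)$ with $\alpha_1\ne\alpha_2$. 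The paper circumvents this by keeping $v$ in $H'$ and only \emph{recolouring} it (C4, C6), and by adding gadgets --- the six $2$-paths $v_iv_{ij}v_j$ for C5 (which is also why $H'$ is not a minor of $H$ there and the $n_3$-based order $\preceq$ is needed), and the added edge $uw$ or a second signed $2$-path for C7 --- precisely to force the relevant images to be adjacent in the target. Your fallback for C7 when $\varphi(u)=\varphi(v)$ (``use $\Prop{1,24}$'') does not work: a $2$-path with edges of different signs between two vertices of the same colour admits no image.

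The gap is fatal for C8. A weak $7$-vertex has exactly $\Prop{3,4}=4$ candidate colours (the $\alpha$-successors of its three $^\ge3$-neighbours) and exactly $4$ colours forbidden by the far ends of its four pendant $2$-paths, so the greedy count is $4-4=0$; your step ``colour $w$ and $y$ via $\Prop{3,1}$'' is precisely the step that fails, and this tightness is the reason C8 exists as a configuration rather than the weak $7$-vertex alone being reducible. The paper's proof of C8 is of a different nature: after resigning so that the edges $da,ab,bc,cc_1,cc_2$ are positive, it shows the extension can only get stuck if $f(c)=f(a)$ and $dc$ is negative, and then modifies $f$ globally using the explicit representation of $\mathbb{F}_{25}$ as $\set{a+b\sqrt{2}}$, the computed table of triples $(0_0,1_0,x_0)$ having exactly four $(+1,+1,+1)$-successors, and the observation that no two such triples share three successors. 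None of this is recoverable from the abstract counting properties $\Prop{n,k}$ alone, so C8 in your proposal is essentially unproved.
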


\begin{proof}
  The drawing conventions for a configuration C$k$ contained in a
  signed graph $(H)$ are the following. First note that, in
  Figures~\ref{fig:config1} and~\ref{fig:config2}, we only draw the
  underlying graph $H$ of $(H)$, i.e. we do not distinguish positive
  and negative edges. The neighbors of a white vertex in $H$ are
  exactly its neighbors in C$k$, whereas a black vertex may have other
  neighbors in $H$. Two or more black vertices in C$k$ may coincide in
  a single vertex in $H$, provided they do not share a common white
  neighbor.  Configurations C2 - C8 are depicted in
  Figures~\ref{fig:config1} and~\ref{fig:config2}.
  
  For each configuration, we suppose that $H$ contains the
  configuration and we consider a signified triangle-free graph $(H')$
  such that $H'\prec H$.  We only argue that $H'\prec H$ for
  configuration C5.  For every other configuration, we have that $H'$
  is a minor of $H$ and thus $H'\prec H$.  Therefore, by minimality of
  $(H)$, $(H')$ admits a signified homomorphism $f$ to $AT(\spk{25})$.
  Then we modify and extend $f$ to obtain a signified homomorphism of
  $(H)$ to $AT(\spk{25})$, contradicting the fact that $(H)$ is a
  counterexample.
  
  By Lemma~\ref{lem:ppt}, $AT(\spk{25})$ satisfies properties
  $\Prop{1,24}$, $\Prop{2,11}$, and $\Prop{3,4}$.

\begin{figure}
 \subfigure[\label{subfig:a}C\ref{item:2}]{\includegraphics[scale=1.2]{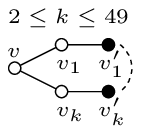}} \hfill
 \subfigure[\label{subfig:b}C\ref{item:3}]{\includegraphics[scale=1.2]{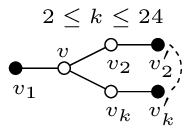}} \hfill
 \subfigure[\label{subfig:c}C\ref{item:4}]{\includegraphics[scale=1.2]{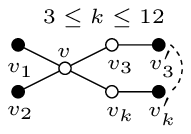}} \hfill
 \caption{Configurations C\ref{item:2}--C\ref{item:4}.\label{fig:config1}}
\end{figure}

\textbf{\emph{Proof of configuration C1:}} Trivial.

\textbf{\emph{Proof of configuration C2:}} Suppose that $(H)$ contains
the configuration depicted in Figure~\ref{subfig:a} and $f$ is a signified
homomorphism of $(H')=(H)\setminus\set{v,v_1,\cdots,v_k}$ to $AT(\spk{25})$.
For every $i$, if the edges $vv_i$ and $v_iv'_i$ have the same sign
(resp. different signs), then $v$ must get a color distinct from
$\twin(f(v'_i))$ (resp. $f(v'_i)$). So, each $v'_i$ forbids at most
one color for $v$. Thus there remains an available color for
$v$. Then we extend $f$ to the vertices $v_i$ using property $\Prop{2,11}$.

\textbf{\emph{Proof of configuration C3:}} Suppose that $(H)$ contains
the configuration depicted in Figure~\ref{subfig:b} and $f$ is a signified
homomorphism of $(H')=(H)\setminus\set{v,v_2,\cdots,v_k}$ to $AT(\spk{25})$.
As shown in the proof of Configuration C2, each $v'_i$ forbids at most
one color for $v$. So, we have at most $23$ forbidden colors for $v$
and by property $\Prop{1,24}$, there remains at least one available color for $v$. 
Then we extend $f$ to the vertices $v_i$ ($2\le i\le k$) using property $\Prop{2,11}$.

\textbf{\emph{Proof of configuration C4:}} Suppose that $(H)$ contains
the configuration depicted in Figure~\ref{subfig:c} and $f$ is a signified
homomorphism of $(H')=(H)\setminus\set{v_3,\cdots,v_k}$ to $AT(\spk{25})$.
As shown in the proof of Configuration C2, each $v'_i$ forbids at most one color for $v$.
So, we have at most $10$ forbidden colors for $v$ and by property
$\Prop{2,11}$, there remains at least one available color in order to recolor $v$.
Then we extend $f$ to the vertices $v_i$ ($3\le i\le k$) using property
$\Prop{2,11}$.

\textbf{\emph{Proof of configuration C5:}} Suppose that $(H)$ contains
the configuration depicted in Figure~\ref{subfig:db}.  Let $(H')$ be
the graph obtained from $(H)$ by deleting the vertex $v$ and by
adding, for every $1\le i < j\le 3$, a new vertex $v_{ij}$ and the
edges $v_iv_{ij}$ and $v_{ij}v_j$. Each of the $6$ edges $v_iv_{ij}$
gets the sign $\alpha_i$ of the edge $v_iv$ in $(H)$.
As configuration C4 is forbidden in $H$,
we have $d_H(v_i)\ge 3$ for $i\in\set{1,2,3}$.  We have $H'\prec H$
since $n_3(H')<n_3(H)$. Clearly, $H'$ is triangle free.  Hence, there
exists a signified homomorphism $f$ of $(H')$ to $AT(\spk{25})$.
By $\Prop{3,4}$, we can find an $\alpha$-successor $u$ of $(f(v_1),
f(v_2), f(v_3))$ in $AT(\spk{25})$ with
$\alpha=(\alpha_1,\alpha_2,\alpha_3)$. Now fix $f(v)=u$.  Note that
$f$ restricted to $V(H)$ is a homomorphism of $(H)$ to $AT(\spk{25})$.

\textbf{\emph{Proof of configuration C6:}} Suppose that $(H)$ contains
the configuration depicted in Figure~\ref{subfig:d} and $f$ is a
signified homomorphism of $(H')=(H)\setminus\set{v_4,\ldots,v_k}$ to
$AT(\spk{25})$.  As shown in the proof of Configuration C2, each $v'_i$
forbids at most one color for $v$. So, we have at most $3$ forbidden
colors for $v$ and by property $\Prop{3,4}$, there remains at least
one available color for $v$.  Then we extend $f$ to the vertices $v_i$
($4\le i\le k$) using property $\Prop{2,11}$.

\begin{figure}
 \subfigure[\label{subfig:db}C\ref{item:5b}]{\includegraphics[scale=1.2]{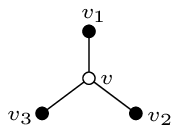}} \hfill
 \subfigure[\label{subfig:d}C\ref{item:5}]{\includegraphics[scale=1.2]{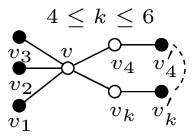}}\hfill
 \subfigure[\label{subfig:f}C\ref{item:7}]{\includegraphics[scale=1.2]{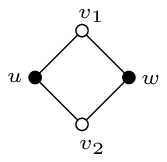}}\hfill
 \subfigure[\label{subfig:h}C\ref{item:9}]{\includegraphics[scale=1.2]{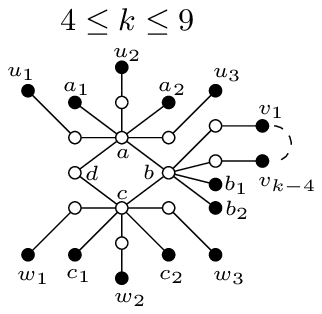}}
 \caption{Configurations C\ref{item:5b}--C\ref{item:9}.\label{fig:config2}}
\end{figure}

\textbf{\emph{Proof of configuration C7:}} 
Suppose that $(H)$ contains
the configuration depicted in Figure~\ref{subfig:f}.

If $u$ and $w$ have no common neighbor other than $v_1$ and $v_2$,
then we consider the graph $(H')$ obtained from
$(H)\setminus\set{v_1,v_2}$ by adding the edge $uw$.

If $u$ and $w$ have at least one other common neighbor $v_3$, then
consider the graph $(H')$ obtained from $(H)\setminus\set{v_1,v_2}$ by
adding a vertex $v$ adjacent to $u$ and $w$ such that $uv$ is negative
and the sign of $vw$ is the product of the signs of $uv_3$ and
$v_3w$. Therefore, we have at least two $2$-paths linking $u$ and $w$,
one whose both edges have the same sign and one whose edges have
different sign.

In both cases, $H'$ is triangle free and is a minor of $H$, so that
$(H')$ admits a signified homomorphism $f$ to $AT(\spk{25})$.  Also, in
both cases, $f(u)$ and $f(w)$ form an edge in $AT(\spk{25})$ since
$f(u)\neq f(v)$ and $f(u)\neq f(\twin(v))$.  Thus,
the coloring of $(H)\setminus\set{v_1,v_2}$ induced by $f$ can be
extended to $(H)$ using property $\Prop{2,11}$.

\textbf{\emph{Proof of configuration C8:}} Suppose that $(H)$ contains
the configuration depicted in Figure~\ref{subfig:h}.  By
Corollary~\ref{cor:bound_signed}(\ref{cor:bound_signed-2}), $(H)$
admits a signified homomorphism to $AT(\spk{25})$ if and only if every
equivalent signature of $(H)$ admits a signified homomorphism to
$AT(\spk{25})$.  So, by resigning a subset of vertices in
$\set{a,b,c,c_1,c_2}$, we can assume that the edges $da$, $ab$, $bc$,
$cc_1$, and $cc_2$ are positive.  Consider a signified homomorphism
$f$ of $(H')=(H\setminus\set{d})$ to $AT(\spk{25})$.  The edge $dc$ in
$(H)$ has to be negative, since otherwise $f$ would be extendable to
$(H)$ by setting $f(d)=f(b)$.  Also, we must have $f(c)=f(a)$, since
otherwise we could color $d$ using property $\Prop{2,11}$.  Now, we
show in the remaining of the proof that we can modify $f$ such that
$f(c)\ne f(a)$.  Let us define the signed vector $\alpha=(+1,+1,+1)$.
We assume that $f(b)$, $f(c_1)$ and $f(c_2)$ are distinct, since the
case when they are not distinct is easier to handle.

For $1\le i\le 3$, let $k_i$ denote the color that is forbidden for
$c$ by $w_i$, that is, $k_i=f(w_i)$ if the edges of the 2-path linking
$c$ and $w_i$ have distinct signs and $k_i=\twin(f(w_i))$ otherwise.
By property $\Prop{3,4}$, the sequence $X=(f(c_1),f(c_2),f(b))$ has at
least 4 $\alpha$-successors.  Assume that $X$ has at least 5
$\alpha$-successors. Then we can give to $c$ a color distinct from $k_1$,
$k_2$, $k_3$, and $f(a)$ that leads to $f(c)\ne f(a)$.

Assume now that $X$ has exactly $4$ $\alpha$-successors. The graph
$AT(\spk{25})$ contains two copies of $\spk{25}$, namely $\spk{25}^0$
and $\spk{25}^1$. Suppose that $X$ is not contained in one copy of
$\spk{25}$.  We consider the graph $Tr(\spk{25})$ obtained by adding
the anti-twin vertices $\infty_0$ and $\infty_1$ to $AT(\spk{25})$.  By
Lemma~\ref{lem:ppt}, $Tr(\spk{25})$ satisfies $\Prop{3,5}$, so $X$
admits at least 5 $\alpha$-successors in $Tr(\spk{25})$.  Since $X$ is
not contained in one copy of $\spk{25}$ of the subgraph $AT(\spk{25})$,
the extra vertices $\infty_0$ and $\infty_1$ are not
$\alpha$-successors of $X$.  This means that $X$ has at least 5
$\alpha$-successors in $AT(\spk{25})$, contradicting the hypothesis.
So, without loss of generality, $X$ is necessarily contained in $\spk{25}^0$.

We represent the field $\mathbb{F}_{25}$ by the numbers $a+b\sqrt{2}$,
where $a$ and $b$ are integers modulo 5.
Without loss of generality, we can assume that $f(c_1)=0$ and
$f(c_2)=1$ since $\spk{25}$ is edge-transitive. Table~\ref{tab:succ}
gives the sequences $X$ having exactly 4 $\alpha$-successors together
with their 4 $\alpha$-successors.  

\def\figurename{Tab.}
\begin{figure}
  \centering
    $$\begin{array}{|l|cccc|} \hline
      (0_0,1_0,2_0) & 3_0, & 4_0, & (1+2\sqrt{2})_1, & (1+3\sqrt{2})_1\\ \hline
      (0_0,1_0,3_0) & 2_0, & 4_0, & (3+\sqrt{2})_1, & (3+4\sqrt{2})_1\\ \hline
      (0_0,1_0,4_0) & 2_0, & 3_0, & (2\sqrt{2})_1, & (3\sqrt{2})_1\\ \hline
      (0_0,1_0,(3+2\sqrt{2})_0) & (3+\sqrt{2})_1, & (3\sqrt{2})_1, & (1+3\sqrt{2})_1, & (3+4\sqrt{2})_1\\ \hline
      (0_0,1_0,(3+3\sqrt{2})_0) & (3+\sqrt{2})_1, & (2\sqrt{2})_1, & (1+2\sqrt{2})_1, & (3+4\sqrt{2})_1\\ \hline
    \end{array}$$
    \caption{Sets of the form $(0_0,1_0,x_0)$ having exactly 4
      $(+1,+1,+1)$-successors in $AT(\spk{25})$.\label{tab:succ}}
\end{figure}
\def\figurename{Fig.}

We are now ready to modify $f$. We decolor the vertices $a$, $b$, and
$c$. By property $\Prop{2,11}$, there exist at least two colors
$\beta$ and $\beta'$ for $b$ that are distinct from the colors
forbidden by the $k$ vertices $v_1,\cdots,v_{k-4},a_1,a_2,c_1,c_2$.
By previous discussions, the sequences $(0,1,\beta)$ and
$(0,1,\beta')$ must have exactly 4 $\alpha$-successors, so we can
assume that
$\set{\beta,\beta'}\subset\set{2,3,4,3+2\sqrt{2},3+3\sqrt{2}}$.  Let
us set $f(b)=\beta$. By property $\Prop{3,4}$, we can color $a$ such
that $f(a)$ is distinct from the colors forbidden by $u_1$, $u_2$,
$u_3$.  Now, $f$ is not extendable to $c$ and $d$ only if the 4
$\alpha$-successors of $(0,1,\beta)$ are $k_1$, $k_2$, $k_3$, and
$f(a)$. In particular, $k_1$, $k_2$, and $k_3$ have to be
$\alpha$-successors of $(0,1,\beta)$. Similarly, we can set
$f(b)=\beta'$ and obtain that $k_1$, $k_2$, and $k_3$ have to be
$\alpha$-successors of $(0,1,\beta')$ as well. This is a
contradiction, since we can observe that no two distinct sequences in
Table~\ref{tab:succ} have three common $\alpha$-successors.
\end{proof}

\begin{proofof}{Theorem~\ref{thm:main-signified}}
  Let $(H)$ be a minimal counterexample which is minimal with respect
  to $\preceq$. By Lemma~\ref{con4}, $(H)$ does not contain
  Configurations $C1$ to $C8$. There remains to show that every
  triangle-free planar graph contains at least one these $8$
  configurations. This has been already done using a discharging
  procedure in the proof of Theorem~2 in~\cite{op13}, where slightly
  weaker configurations were used.
\end{proofof}

We now exhibit the following lower bound for the signified chromatic
number of planar graphs:

\begin{theorem}\label{thm:lower-planar} 
  There exist planar graphs with signified chromatic number $20$.
\end{theorem}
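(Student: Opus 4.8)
To prove that there exist planar graphs with signified chromatic number exactly $20$, I would establish the matching lower bound $\chi_2(\mathcal P_3)\ge 20$ by exhibiting an explicit planar graph $G$ and a signature $\Sigma$ such that $(G,\Sigma)$ admits no signified homomorphism to any signified graph on $19$ vertices. Since we already know from Corollary~\ref{cor:main3-montejano} that $\chi_2(\mathcal P_3)\le 80$, and from Theorem~\ref{thm:main-montejano} that girth-$5$ planar graphs need only $20$ colors, the natural target value here is $20$, matching the fact that $ZS_5$-type constructions and the Tromp graph $\trk{9}$ have order $20$; so the theorem says the girth-$5$ upper bound of $20$ is already forced by planarity alone (in the signified setting). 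The construction should therefore be a triangle-containing planar graph that ``simulates'' the obstruction making $20$ colors necessary.

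**Key steps.** First I would fix a candidate target-graph order $n=19$ and argue structurally: if $(G,\Sigma)\to(H,\Lambda)$ with $|V(H)|\le 19$, then in particular $\chi_2(G,\Sigma)\le 19$. Second, I would build $G$ by a gadget-amalgamation argument: take a ``hub'' planar graph (e.g.\ repeated copies of a small planar piece sharing a common clique or a common edge) so that many vertices must receive pairwise-distinct colors or else a forbidden local pattern appears. Concretely, one standard route is to use a large planar graph containing many copies of $K_4$ (or near-$K_4$ gadgets) glued along shared triangles, forcing a long sequence of vertices that pairwise see edges of prescribed signs; by a counting/pigeonhole argument on how few ``successor-types'' a $19$-vertex signified graph can realize, one derives a contradiction. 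Third, I would verify planarity of $G$ explicitly (giving an embedding, or noting it is built from planar pieces glued along a single edge or triangle, hence planar). Fourth, I would exhibit the witnessing signature $\Sigma$ — possibly by exploiting that signed homomorphisms allow resigning the source, so one needs the bound to hold for all signatures in the equivalence class, which makes the argument cleaner to phrase via anti-twinned targets and Corollary~\ref{cor:bound_signed}.

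**The main obstacle.** The hard part will be the combinatorial core: showing that no signified graph on $19$ vertices can absorb the gadget. One cannot simply count vertices — one must track the interaction of the clique structure with the sign pattern, essentially showing that a $19$-vertex signified graph lacks enough ``$\alpha$-successor diversity'' to extend a coloring of the gadget's forced clique. This is likely where an explicit, carefully designed planar gadget (probably with several triangles, since girth-$3$ graphs give more leverage than the girth-$5$ examples) and a finite but nontrivial case analysis — possibly computer-assisted — comes in, in the same spirit as the Table~\ref{tab:succ} analysis used in the proof of Lemma~\ref{con4}. I would expect the construction to be small enough to state by an explicit picture but to require checking that every would-be $19$-vertex target fails, which is the crux of the argument.
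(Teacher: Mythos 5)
There is a genuine gap: your proposal correctly guesses the overall shape of the argument (an amalgamation of planar gadgets forcing many distinct colors, followed by a structural obstruction ruling out every $19$-vertex target), but it leaves unspecified precisely the two ideas that constitute the proof. First, the bootstrapping is concrete and quite different from a $K_4$-gluing: the paper starts from two signed $6$-paths each with $\chi_2=4$, attaches a vertex positively dominating one and negatively dominating the other to force $4+4+1=9$ colors, doubles this with another dominating vertex to force $19$ colors, and only then amalgamates $27$ further copies onto the vertices of a base copy. Second, and decisively, the obstruction against a $19$-vertex target $(H)$ is not a pigeonhole or ``$\alpha$-successor diversity'' case analysis in the spirit of Table~\ref{tab:succ}; it is a one-line parity argument. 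The amalgamation forces every vertex of $(H)$ to have $9$ distinct positive and $9$ distinct negative neighbors, so the subgraph of $(H)$ induced by the positive edges would be a $9$-regular graph on $19$ vertices, which cannot exist since a graph must have an even number of odd-degree vertices. Your plan explicitly defers ``the combinatorial core'' to an unspecified, possibly computer-assisted analysis of all $19$-vertex targets; without the regularity-forcing construction and the handshake-lemma contradiction, no such analysis is actually carried out, so the lower bound $\chi_2\ge 20$ is not established.

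A secondary point: you propose to ``exploit resigning'' and anti-twinned targets via Corollary~\ref{cor:bound_signed}, but that machinery is about \emph{signed} homomorphisms and is irrelevant here; the statement concerns the signified chromatic number, for which a single explicit signature of the constructed planar graph suffices, and the paper's gadgets come with their signatures built in (e.g.\ $\Sigma=\set{bc,de}$ on the $6$-path). Note also that triangles are not where the leverage comes from in the paper's construction --- the forcing is done entirely by signed paths and dominating vertices --- so the intuition that one must use ``near-$K_4$ gadgets glued along triangles'' points in a direction the actual proof does not take.
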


\begin{proof}
  Let $(G_1,\Sigma_1)$ be the $6$-path $abcdef$ with
  $\Sigma=\set{bc,de}$ and let $(G_2,\Sigma_2)$ be the $6$-path
  $abcdef$ with $\Sigma=\set{ab,cd,ef}$.  Note that
  $\chi_2(G_1,\Sigma_1) =$ $\chi_2(G_2,\Sigma_2)$ $= 4$.

  Let $(G_3)$ be the outerplanar graph obtained from $(G_1,\Sigma_1)$,
  $(G_2,\Sigma_2)$ and a vertex $u$ such that $u$ is positively (resp.
  negatively) linked to the six vertices of $(G_1)$ (resp. $(G_2)$).
  For any signified coloring of $(G_3)$, we need $4$ colors for the
  vertices of $(G_1)$, $4$ other colors for the vertices of $(G_2)$,
  and a ninth color for $u$. We therefore have $\chi_2(G_3) = 9$.

  \begin{figure}
    \begin{center}
      \includegraphics[scale=0.5]{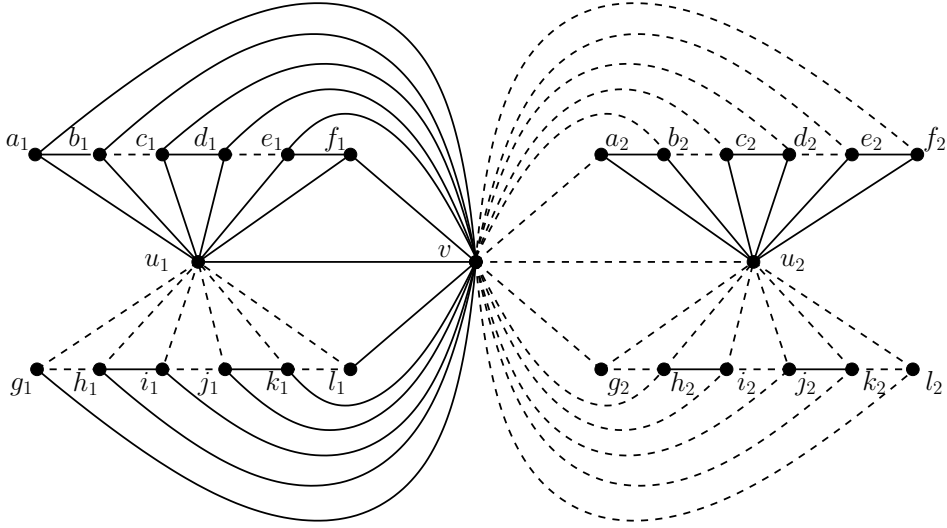}
      \caption{A planar graph with signified chromatic number 19\label{fig:lower}}
    \end{center}
  \end{figure}
  
  Let $(G_4)$ be the graph obtained from two copies of
  $(G_3)$ plus a vertex $v$ such that $v$ is positively linked to
  $13$ vertices of the first copy of $(G_3)$ and negatively linked
  to the $13$ vertices of the second one. This graph is depicted in
  Figure~\ref{fig:lower}. Once again, it is easy to check that
  $\chi_2(G_4) = 19$.
  
  Finally, let $(G_5)$ be the graph obtained from 28 copies
  $(G_4)_0,(G_4)_1,\ldots,(G_4)_{27}$ of $(G_4)$ as follows: we glue
  on each of the 27 vertices of $(G_4)_0$ the vertex $v$ of a copy
  $(G_4)_i$. Since $\chi_2(G_4) = 19$, we have $\chi_2(G_5) \ge 19$.
  Suppose, $\chi_2(G_5) = 19$. Therefore, there exists a graph $(H)$
  on $19$ vertices such that $(G_5)$ admits a signified homomorphism
  to $(H)$.  Moreover, since we glued a copy of $(G_4)$ on each
  vertices of $(G_4)_0$, then each color (i.e. each vertex of $(H)$)
  must have $9$ distinct positive neighbors and $9$ distinct negative
  neighbors. The subgraph of $(H)$ induced by the positive edges is a
  $9$-regular graph on $19$ vertices. Such a graph does not exist
  since, in every graph, the number of vertices of odd degree must be
  even.
\end{proof}

Montejano et al.~\cite{mopr10} proved that any outerplanar graph
admits a signified homomorphism to $SP_9$, that gives $\chi_2(G)\le 9$
whenever $G$ is an outerplanar graph. They also proved that this bound is
tight. We prove here that $SP_9$ is the only suitable target graph on
$9$ vertices.

\begin{theorem}\label{thm:outerplanar}
  The only graph of order $9$ to which every outerplanar graph admits
  a signified homomorphism is $SP_9$.
\end{theorem}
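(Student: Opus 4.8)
The plan is to take an arbitrary signified graph $(H,\Lambda)$ on $9$ vertices to which every outerplanar graph admits a signified homomorphism, and to prove $(H,\Lambda)\cong\spk{9}$; since $\spk{9}$ is one such target by~\cite{mopr10}, this yields uniqueness. The argument splits into two parts: (i) show that $(H,\Lambda)$ must have the two properties $\Prop{1,4}$ and $\Prop{2,1}$ of $\spk{9}$ recorded in Lemma~\ref{lem:ppt}; and (ii) show that, on $9$ vertices, these two properties force $(H,\Lambda)\cong\spk{9}$.

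For the first property of (i), consider the outerplanar signified graph $(G_3)$ from the proof of Theorem~\ref{thm:lower-planar}: it has an apex vertex positively joined to a signified path of signified chromatic number $4$ and negatively joined to another such path, and $\chi_2(G_3)=9$. I would form the outerplanar graph $G^{*}$ by gluing, at each of the $13$ vertices $w$ of a copy of $(G_3)$, a fresh copy of $(G_3)$ whose apex is identified with $w$ (a one-vertex amalgam of outerplanar graphs is outerplanar). Fix a signified homomorphism $f\colon G^{*}\to(H,\Lambda)$. Each base vertex $w$ is the apex of a copy of $(G_3)$, so $N^{+}_{(H)}(f(w))$ contains the image of a path of signified chromatic number $4$, hence has size at least $4$, and likewise for $N^{-}_{(H)}(f(w))$; since these two sets are disjoint and avoid $f(w)$ and $|V(H)|=9$, both have size exactly $4$. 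Running the same count inside the base copy shows its two paths occupy exactly $4$ colours each, so the $13$ base vertices already realise all $9$ colours of $H$; hence every vertex of $H$ has exactly $4$ positive and $4$ negative neighbours. In particular $H=K_9$, property $\Prop{1,4}$ holds, and the positive edges of $(H,\Lambda)$ form a $4$-regular graph $\Gamma$ on $9$ vertices. I come back to the (harder) property $\Prop{2,1}$ of part (i) below.

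For part (ii), assume $(H,\Lambda)$ is a balanced $K_9$ as just established, with property $\Prop{2,1}$, and write $\lambda_{uv}$, $\mu_{uv}$ for the number of common $\Gamma$-neighbours of two adjacent, resp.\ non-adjacent, vertices of $\Gamma$. Double counting triples $(v,\{x,y\})$ with $x,y\in N_\Gamma(v)$ gives $\sum_{uv\in E(\Gamma)}\lambda_{uv}+\sum_{uv\notin E(\Gamma)}\mu_{uv}=9\binom{4}{2}=54$, with $18$ summands of each kind. Property $\Prop{2,1}$ (taking the sign pair $(+1,+1)$ on positive edges) yields $\lambda_{uv}\ge 1$ for every edge of $\Gamma$; and applying $\Prop{2,1}$ to $(H,\Lambda)$ with all signs reversed — which has property $\Prop{2,1}$ if and only if $(H,\Lambda)$ does — yields that every negative edge of $(H,\Lambda)$ has a common negative neighbour, that is, $\mu_{uv}\ge 2$ for every non-edge of $\Gamma$. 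Substituted into the identity, these force $\lambda_{uv}=1$ and $\mu_{uv}=2$ throughout, so $\Gamma$ is strongly regular with parameters $(9,4,1,2)$; that strongly regular graph is unique and equals $P_9$, and an isomorphism $\Gamma\cong P_9$ of positive-edge graphs is automatically a sign-preserving isomorphism $(H,\Lambda)\cong\spk{9}$.

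The main obstacle is the remaining piece of (i): a valid $9$-vertex target has property $\Prop{2,1}$. The difficulty is that $\spk{9}$ — and a priori any candidate target — is vertex- and edge-transitive enough that one cannot force a prescribed pair of vertices to be the image of a given edge of a gadget; hence the counterexample must be tailored to the explicit balanced $K_9$ assumed to violate $\Prop{2,1}$ at some edge $uv$ and sign pair $(\alpha_1,\alpha_2)$. I would handle this by reproducing the reducibility analysis behind Montejano et al.'s bound $\chi_2(\mathcal{O}_3)\le 9$: in a minimal outerplanar graph admitting no homomorphism to $(H,\Lambda)$, the configuration that must be absent is a $2$-vertex lying on a $2$-path between two already-coloured vertices, whose colour-extension step needs exactly an $(\alpha_1,\alpha_2)$-successor of the edge joining their images — and the failure of such a successor for the specific edge $uv$ lets one combine the surjectivity-forcing amalgam $G^{*}$ above with a small attached path-gadget to produce an outerplanar graph with no homomorphism to $(H,\Lambda)$, contradicting the hypothesis.
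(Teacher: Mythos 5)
Your overall architecture is sound, and your part (ii) is a genuinely different and arguably cleaner finish than the paper's: where the paper enumerates the $16$ connected $4$-regular graphs on $9$ vertices (via Meringer's generator) and checks a matching condition on each, you deduce from $\lambda_{uv}\ge 1$ on edges, $\mu_{uv}\ge 2$ on non-edges, and the count $\sum\lambda_{uv}+\sum\mu_{uv}=54$ over $18+18$ pairs that the positive graph is the unique strongly regular graph with parameters $(9,4,1,2)$, namely $P_9$. The regularity part of your (i) is essentially the paper's own argument with the same one-vertex amalgam (the paper's $(G'_4)$).

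The gap is the step you yourself flag: you never actually establish property $\Prop{2,1}$, nor even the two special cases your part (ii) uses (every positive edge has a common positive neighbour; every negative edge has a common negative neighbour). Your proposed fix --- a reducibility analysis plus ``combining $G^{*}$ with a path-gadget'' tailored to the offending edge $uv$ --- does not work as described, for precisely the localization reason you raise: no outerplanar gadget forces a prescribed edge of the gadget onto the prescribed pair $\{u,v\}$ with prescribed sign, and since other edges of the same sign may well admit the required successor, the gadget colours without obstruction. The paper closes exactly this gap using information already present in your own construction but which you discard: the two $6$-paths inside each glued copy of $(G_3)$ have signified chromatic number $4$, and in any signified $4$-colouring of $(G_1,\Sigma_1)$ two of its three positive edges receive four distinct colours (dually for $(G_2,\Sigma_2)$); these are the paper's Properties~1 and~2. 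Since in $G^{*}$ each such path is forced to occupy exactly the four positive (resp.\ negative) neighbours of the image of its apex, this yields a positive perfect matching inside $N^{+}(c)$ and a negative one inside $N^{-}(c)$ for every colour $c$. Matching $v$ within $N^{+}(u)$ then produces a common positive neighbour for every positive edge $uv$, and similarly on the negative side, which is all your counting argument needs. With that substitution your proof goes through; as written, the $\Prop{2,1}$ step is a genuine hole.
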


\begin{proof}
  Let $(G_1)$, $(G_2)$ and $(G_3)$ be the graphs constructed in the
  proof of Theorem~\ref{thm:lower-planar}.
  
  Note that since $\chi_2(G_1,\Sigma_1) = 4$, then for any signified
  $4$-coloring of $(G_1,\Sigma_1)$, among the three positive edges
  $ab$, $cd$ and $ef$, two of them will use $4$ distinct colors. We
  will later refer to this property in the remainder of this proof as
  Property 1. Using the same arguments as the previous
  paragraph, we have that, for any signified $4$-coloring of
  $(G_2,\Sigma_2)$, among the three negative edges $ab$, $cd$ and
  $ef$, two of them will use $4$ distinct colors. We will later refer
  to this property in the remainder of this proof as Property 2.
  
  Let $(G'_4)$ be the outerplanar graph obtained from 14 copies
  $(G_3)_0,(G_3)_1,\ldots,$ $(G_3)_{14}$ of $(G_3)$ as follows: we glue
  on each of the 13 vertices of $(G_3)_0$ the vertex $u$ of a copy
  $(G_3)_i$. Since $\chi_2(G) \le 9$ whenever $G$ is outerplanar, we
  have $\chi_2(G'_4) = 9$. Therefore, there exists a graph
  $(H_9)$ on $9$ vertices such that $(G'_4)$ admits a signified
  homomorphism to $(H_9)$. Each of the nine colors appears on the
  vertices of the copy $(G_3)_0$. Since we glued a copy of $(G_3)$
  on each vertex of $(G_3)_0$, then each color $c$ (i.e. each vertex
  $c$ of $(H_9)$) must have 4 positive neighbors
  $c^+_1,c^+_2,c^+_3,c^+_4$ and 4 negative neighbors
  $c^-_1,c^-_2,c^-_3,c^-_4$. Moreover, by Property~1 (resp.
  Property~2), we must have a positive (resp. negative) matching in
  the subgraph induced by $c^+_1,c^+_2,c^+_3,c^+_4$ (resp.
  $c^-_1,c^-_2,c^-_3,c^-_4$).
  
  Meringer~\cite{m99} provided an efficient algorithm to generate
  regular graphs with a given number of vertices and vertex degree. In
  particular, there exist $16$ connected $4$-regular graphs on $9$
  vertices (see Figure~\ref{fig:4reg}). Replacing edges by positive
  edges and non-edges by negative edges, these are the $16$ signified
  graphs such that each vertices have $4$ positive and $4$ negative
  neighbors. It is then easy to check that, among these $16$ graphs,
  there is only one graph with a positive (resp. negative) matching in
  the subgraph induced by the positive (resp. negative) neighbors of
  each vertex (see Figure~\ref{fig:4-reg-ok}): this is $SP_9$.
\end{proof}

\begin{figure}
  \subfigure[\label{fig:4reg1}]{\includegraphics[scale=0.3]{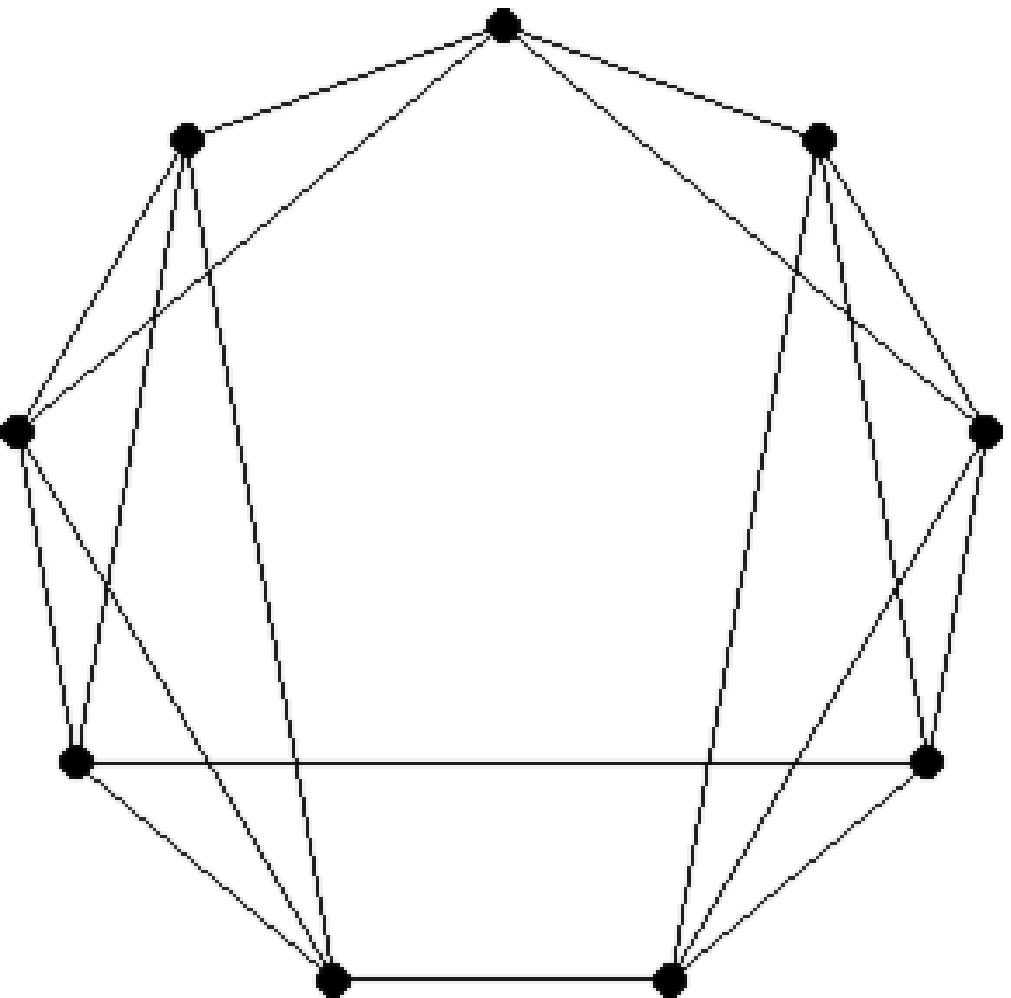}}\hfill
  \subfigure[\label{fig:4reg2}]{\includegraphics[scale=0.3]{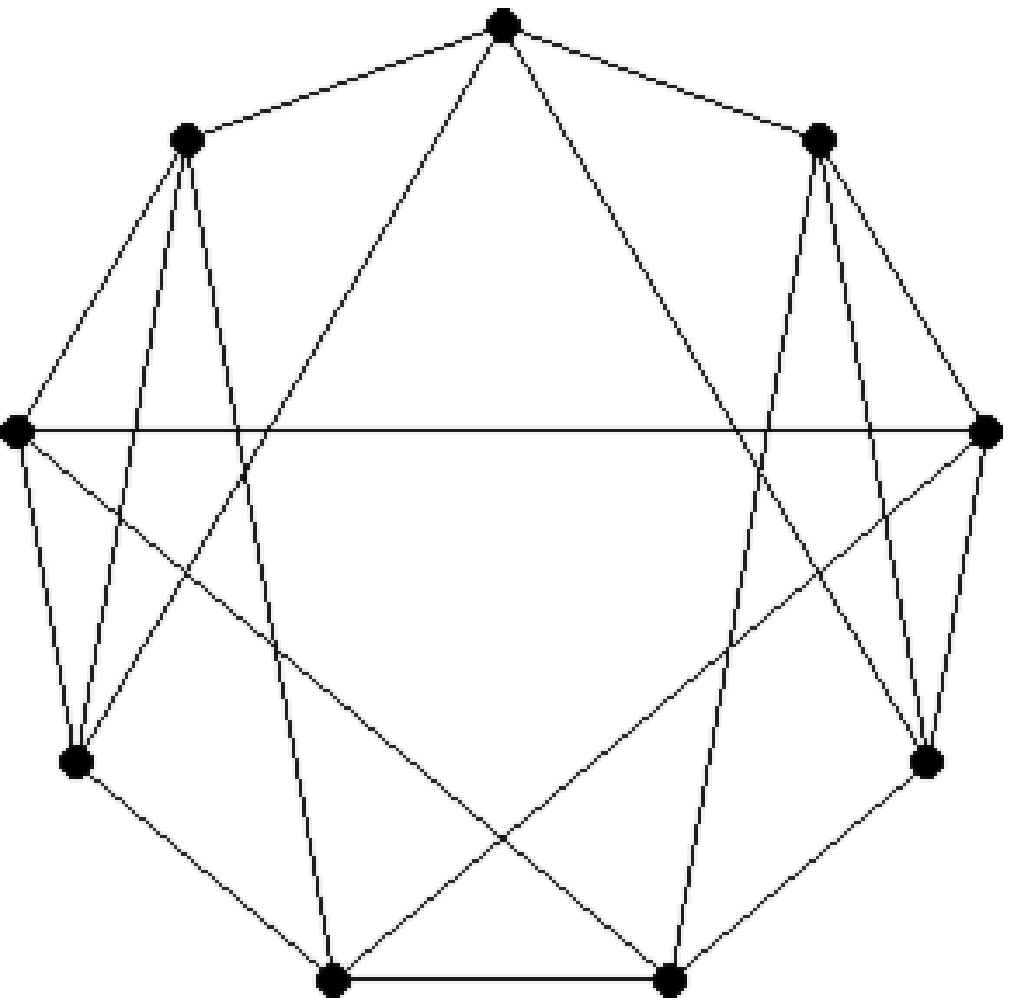}}\hfill
  \subfigure[\label{fig:4reg3}]{\includegraphics[scale=0.3]{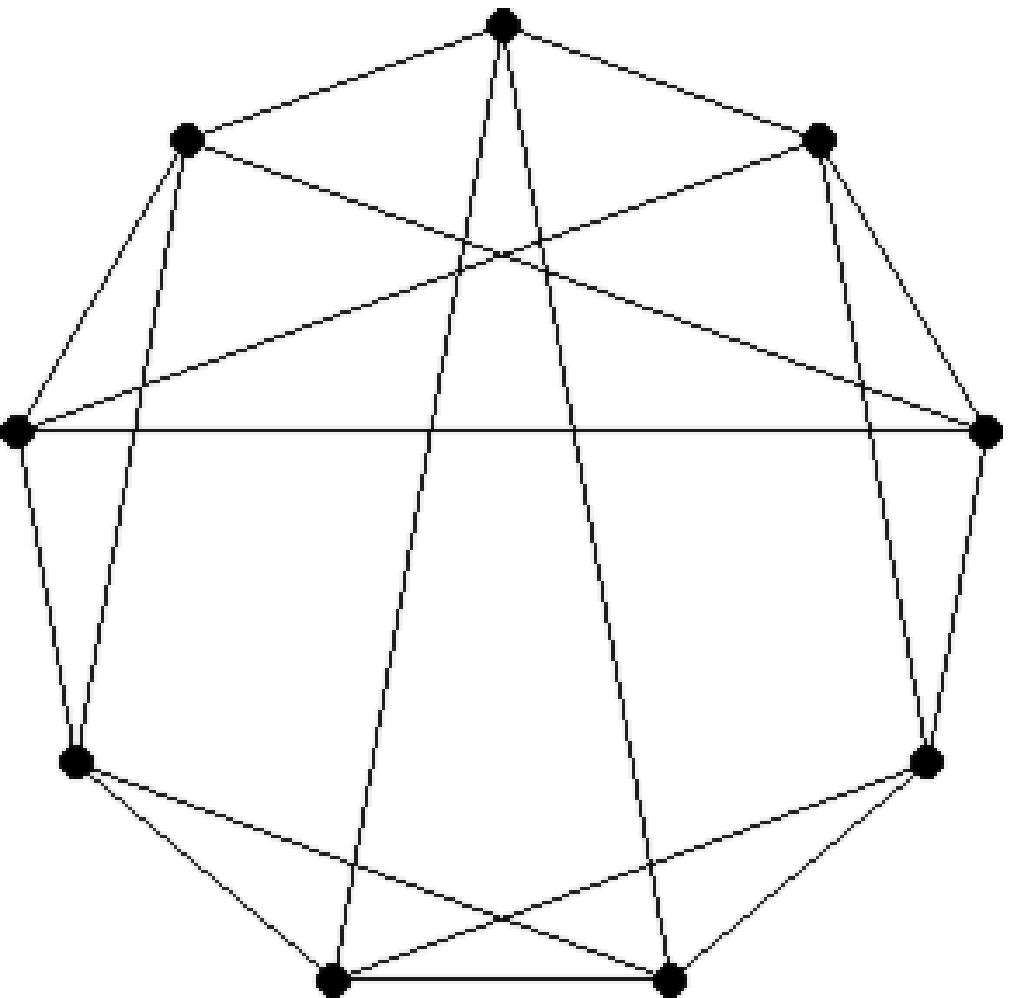}}\hfill
  \subfigure[\label{fig:4reg4}]{\includegraphics[scale=0.3]{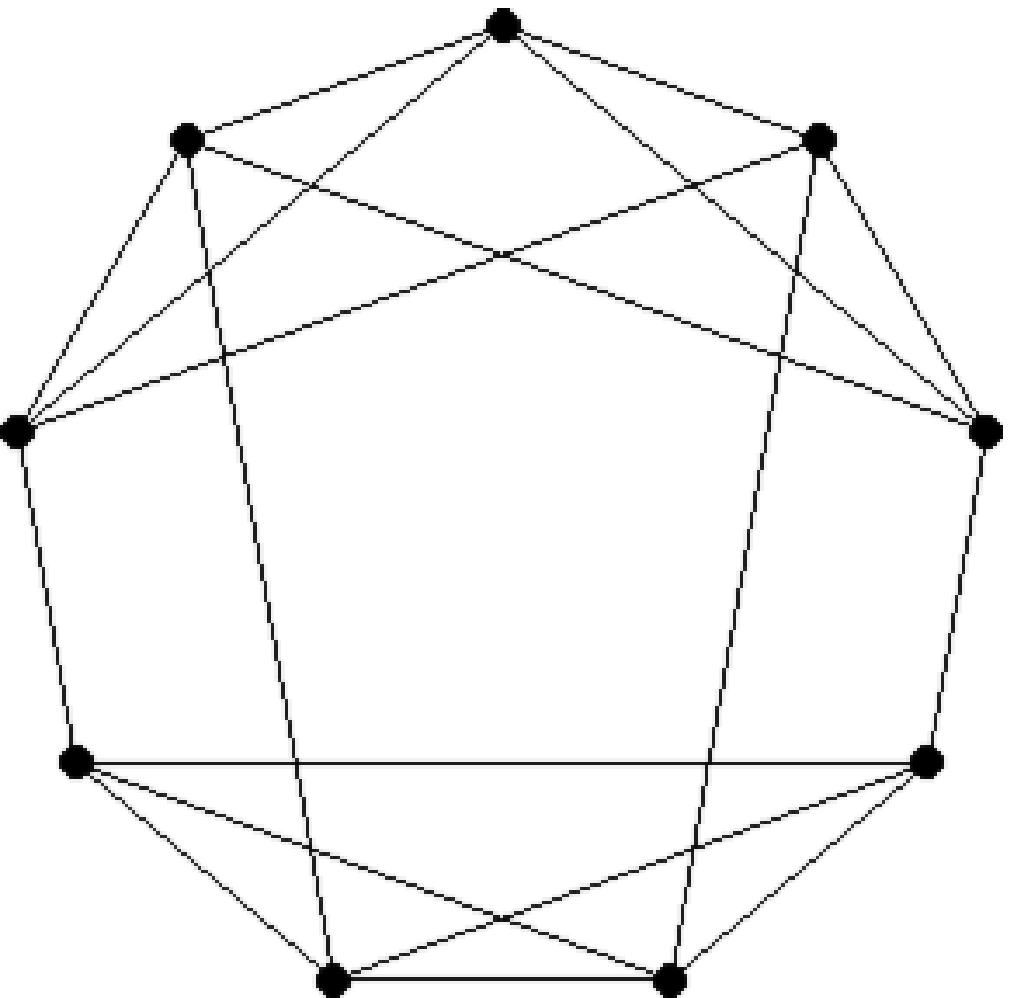}}

  \subfigure[\label{fig:4reg5}]{\includegraphics[scale=0.3]{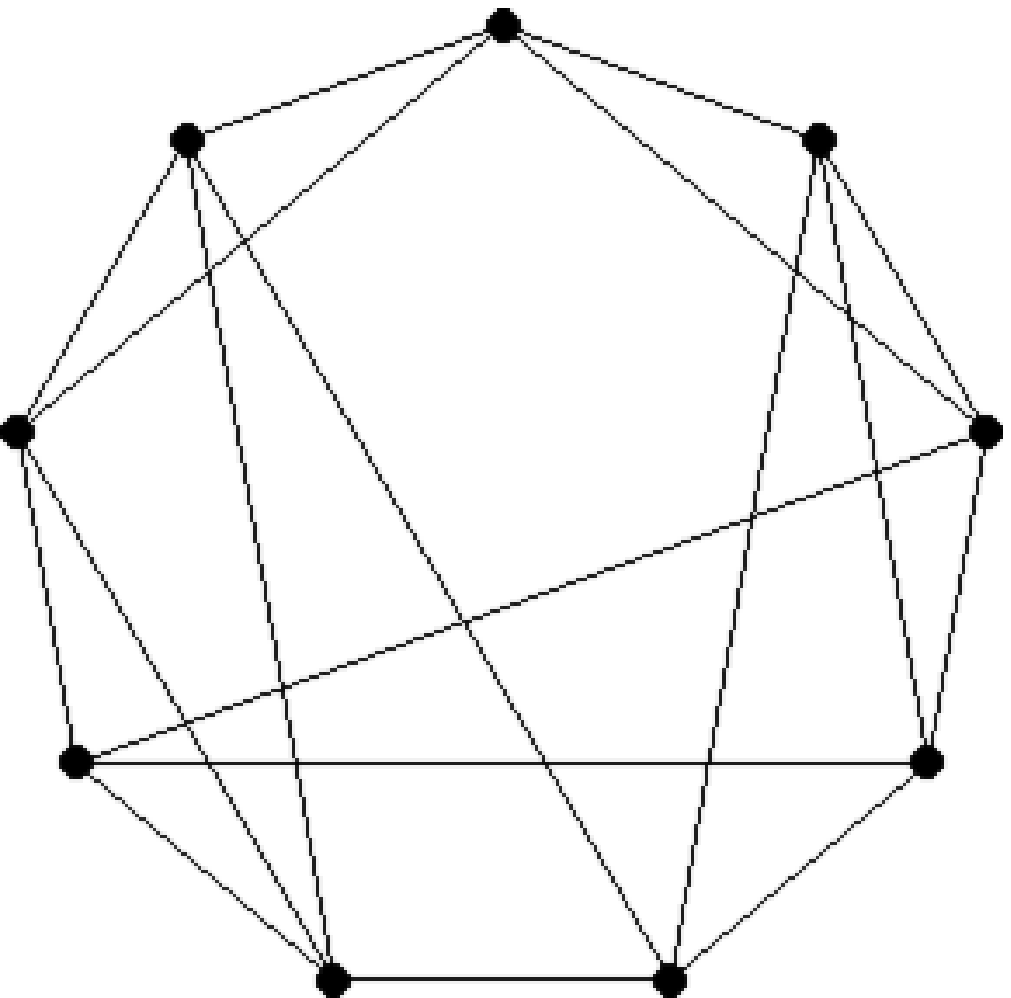}}\hfill
  \subfigure[\label{fig:4reg6}]{\includegraphics[scale=0.3]{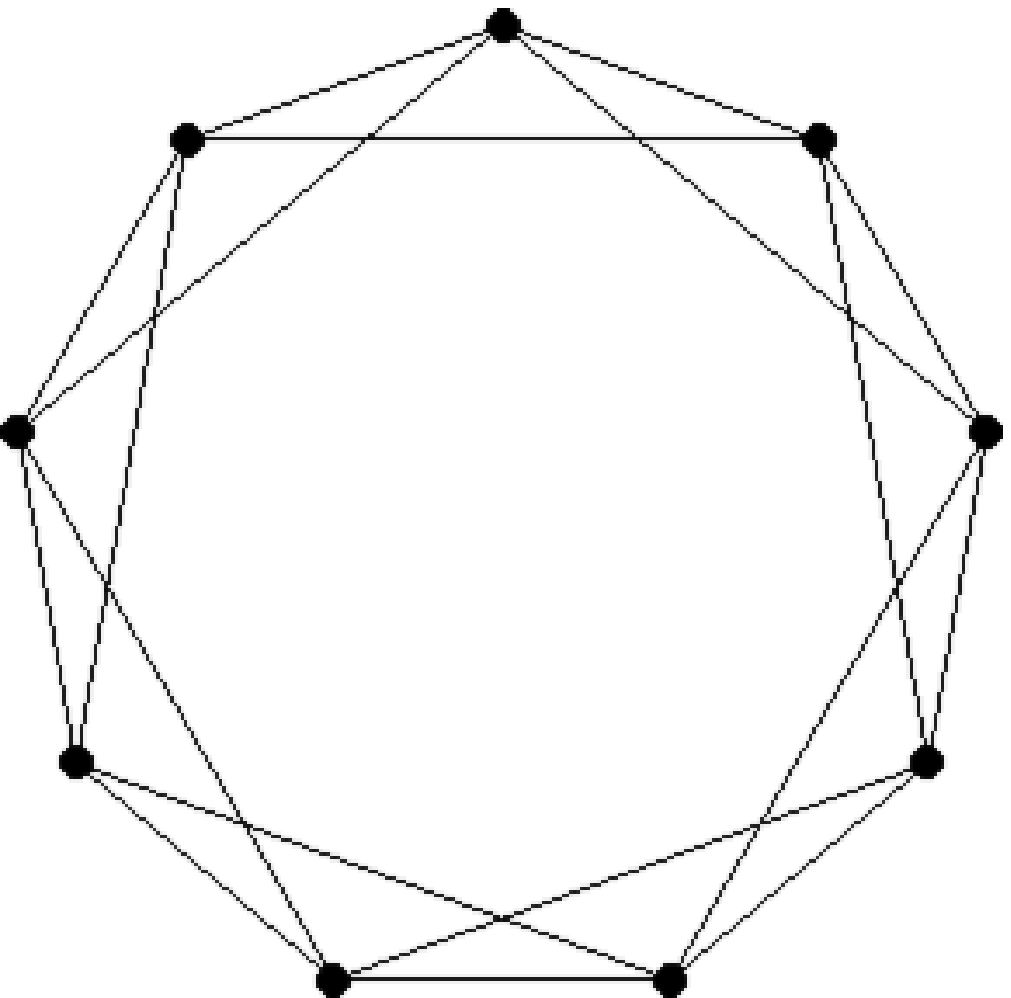}}\hfill
  \subfigure[\label{fig:4reg7}]{\includegraphics[scale=0.3]{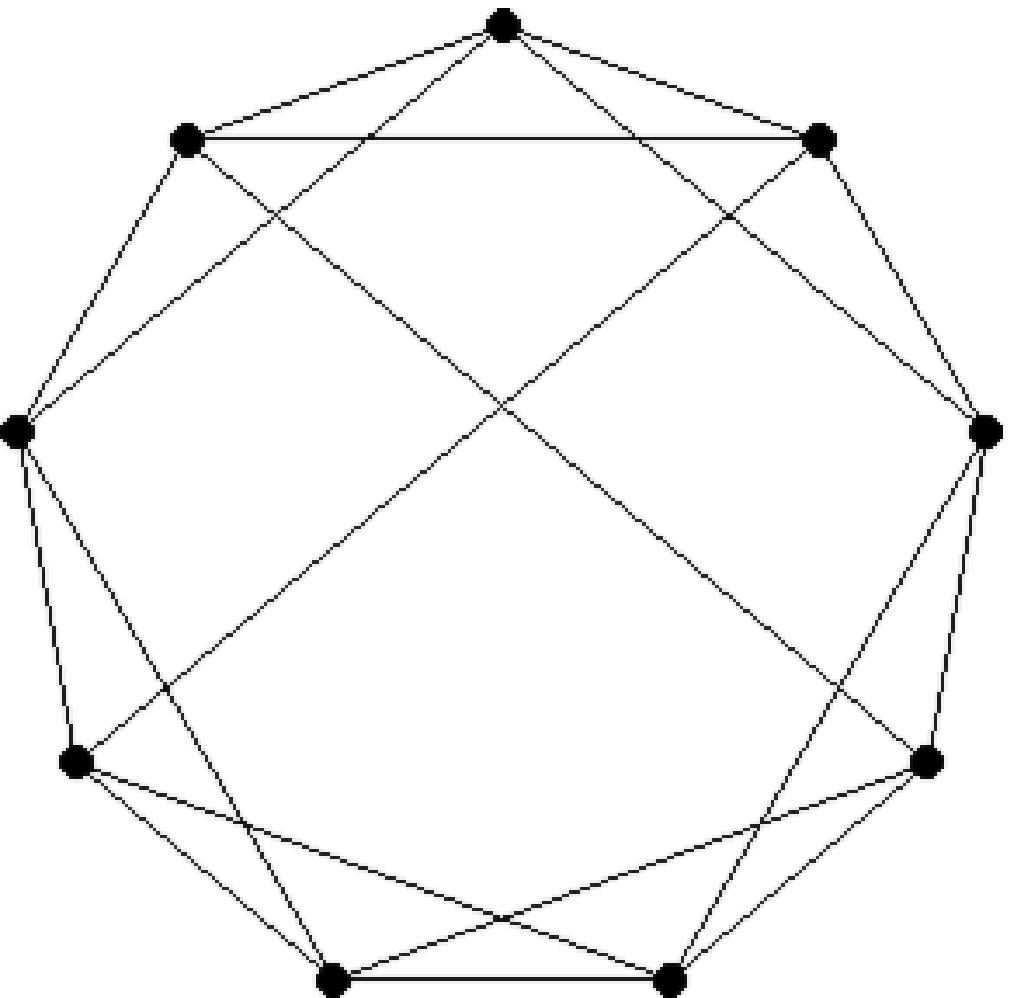}}\hfill
  \subfigure[\label{fig:4reg8}]{\includegraphics[scale=0.3]{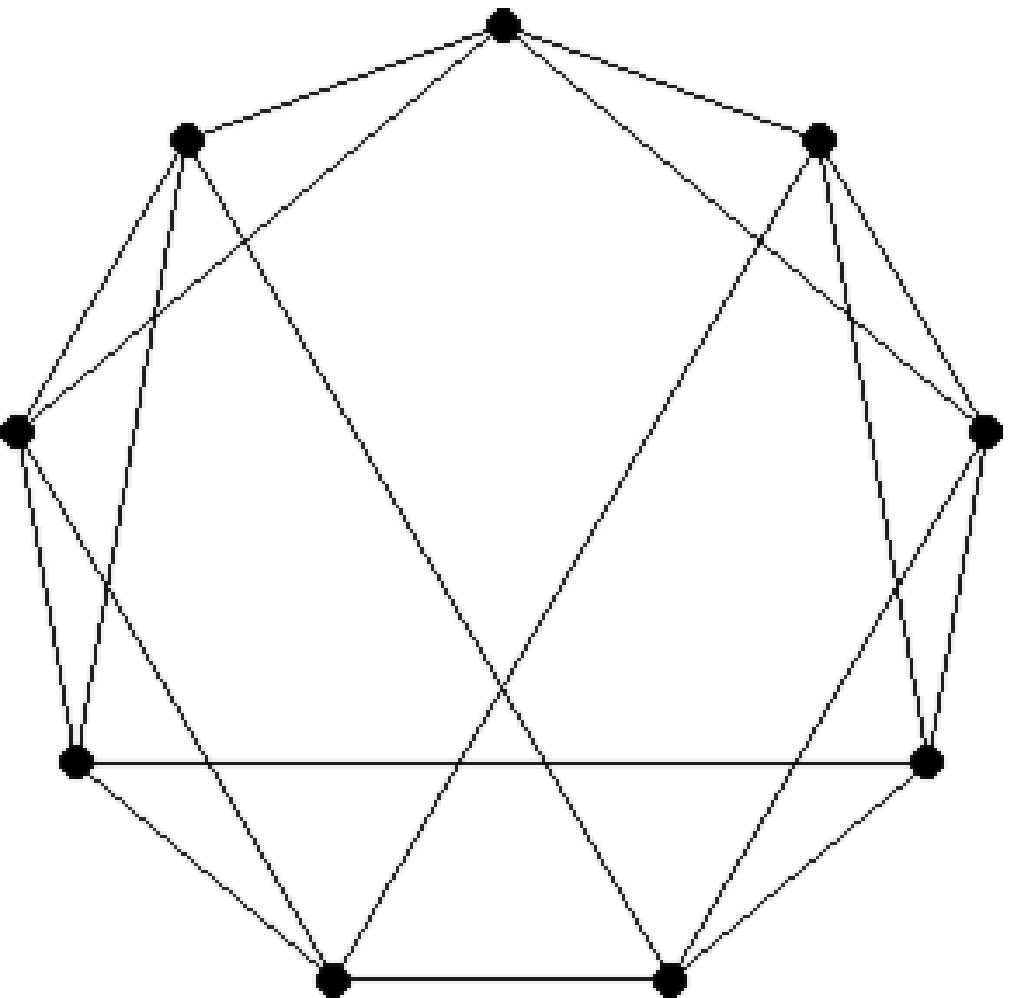}}

  \subfigure[\label{fig:4reg9}]{\includegraphics[scale=0.3]{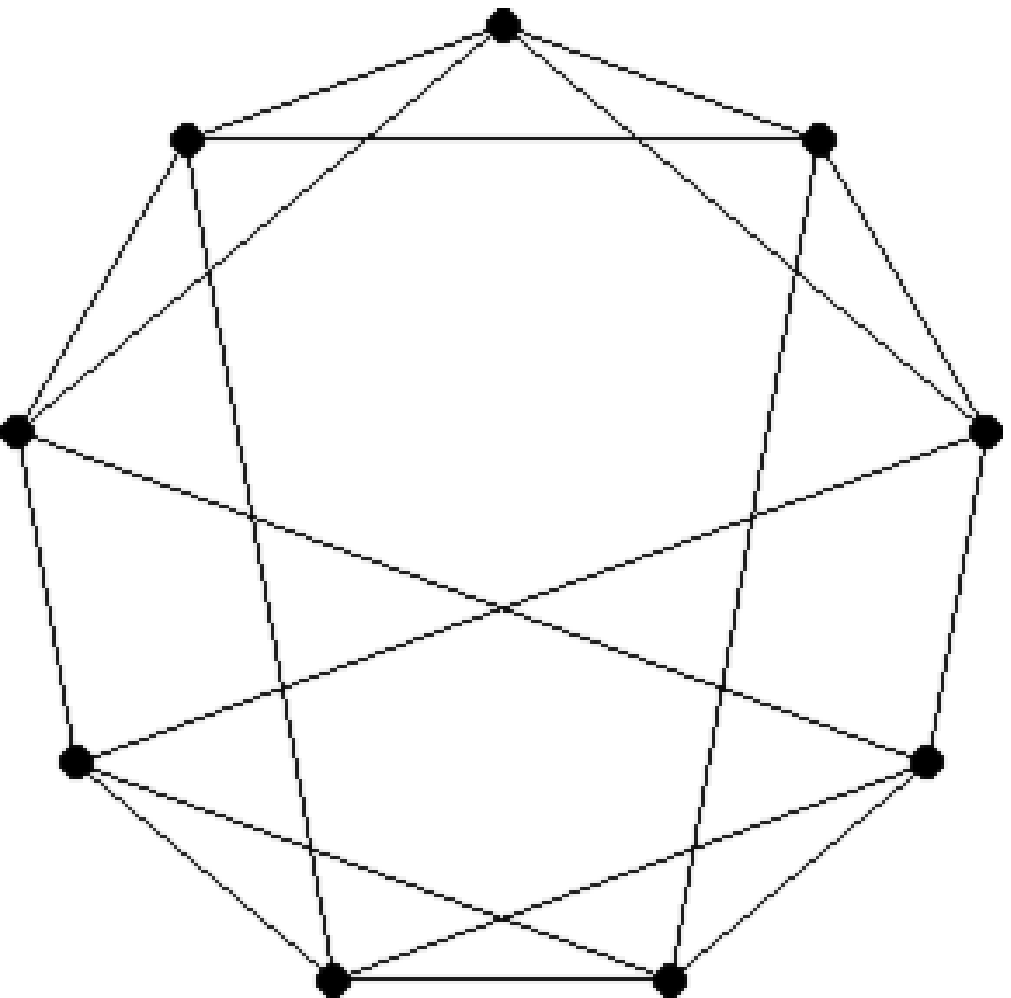}}\hfill
  \subfigure[\label{fig:4reg10}]{\includegraphics[scale=0.3]{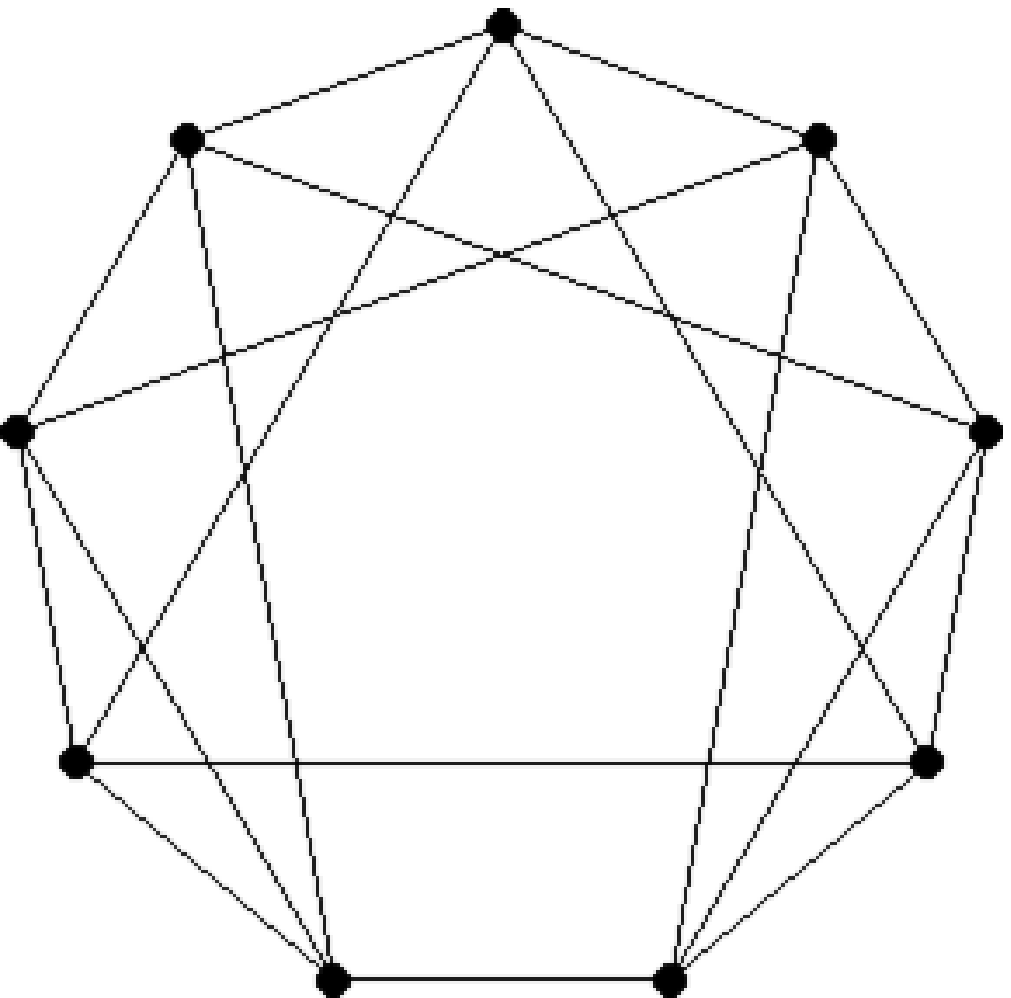}}\hfill
  \subfigure[\label{fig:4reg11}]{\includegraphics[scale=0.3]{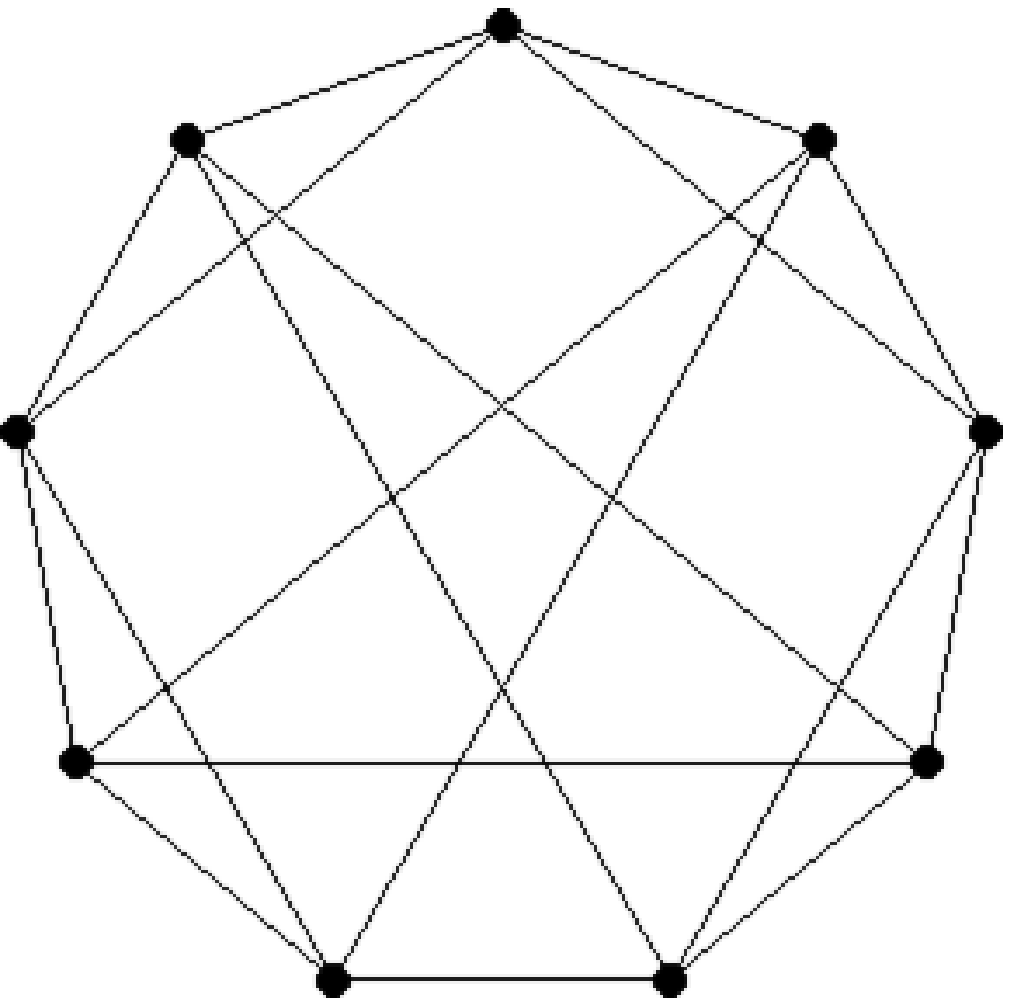}}\hfill
  \subfigure[\label{fig:4reg12}]{\includegraphics[scale=0.3]{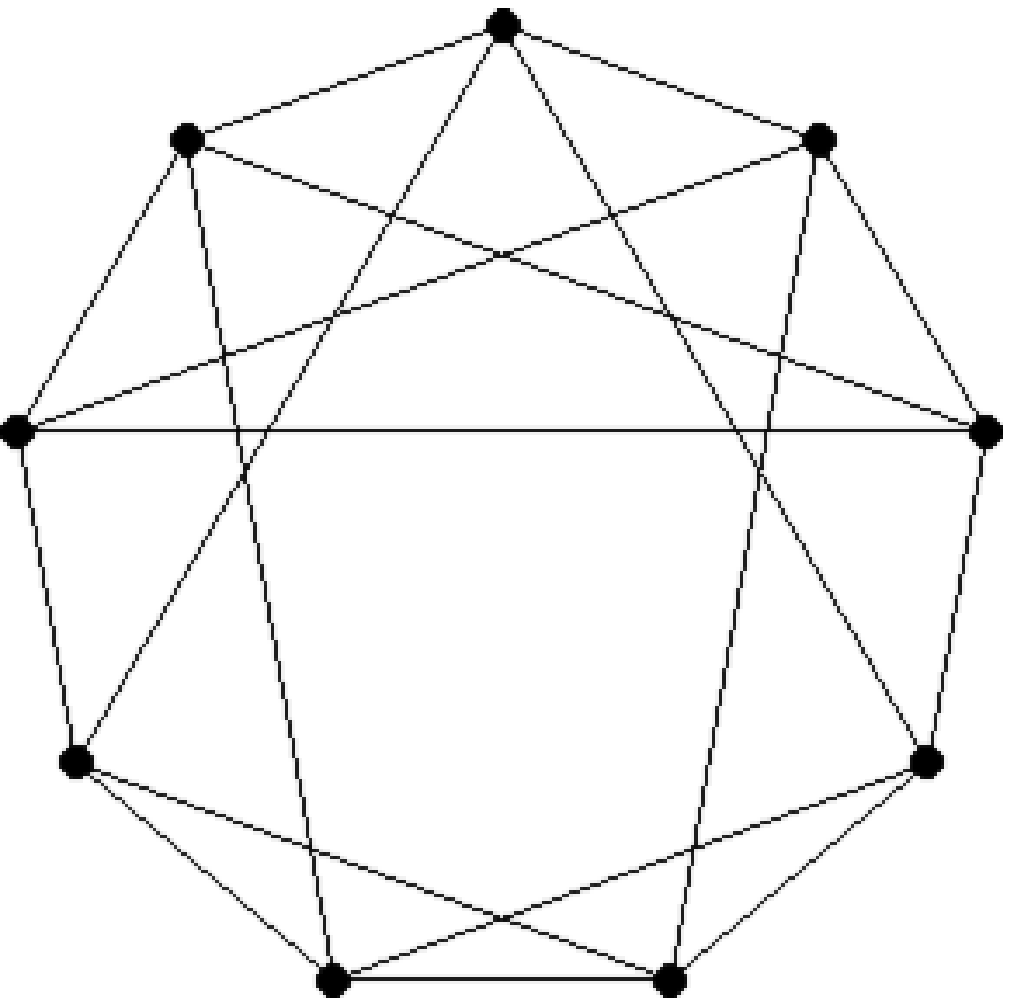}}

  \subfigure[\label{fig:4reg13}]{\includegraphics[scale=0.3]{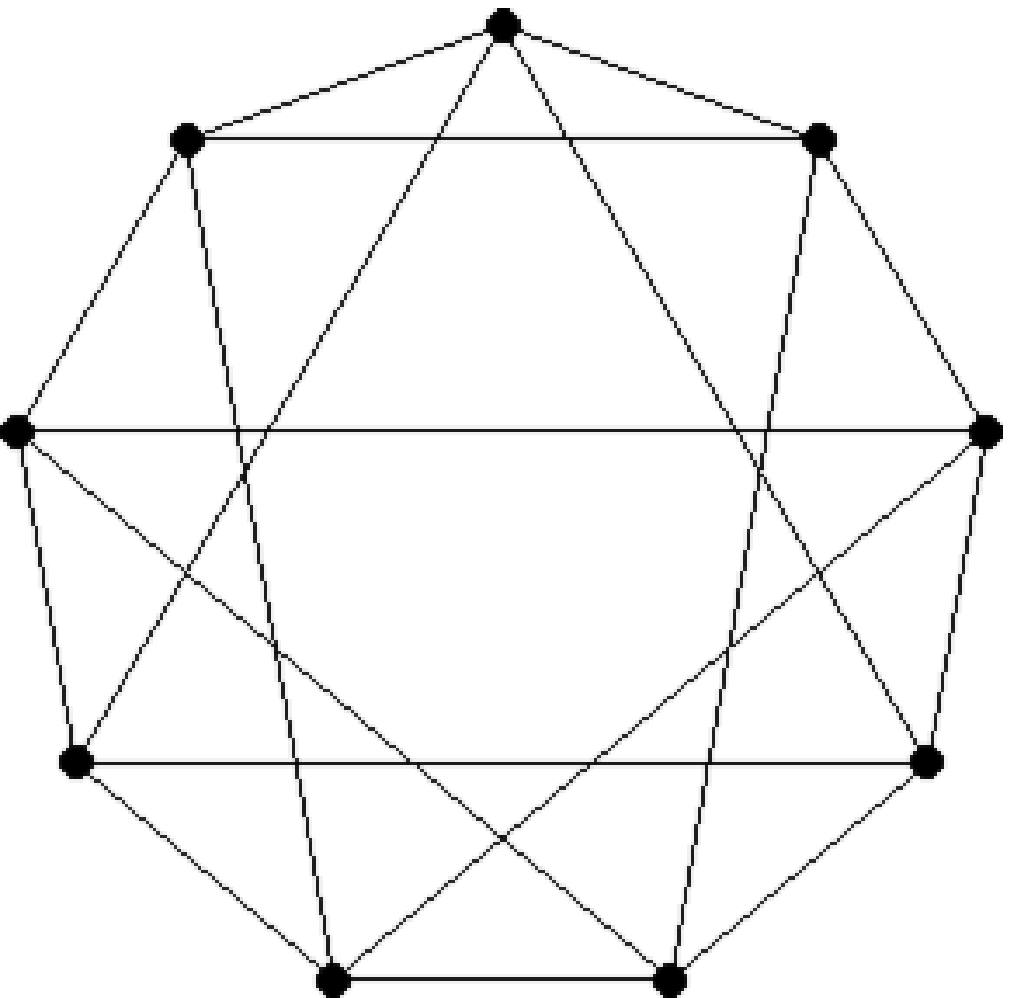}}\hfill
  \subfigure[\label{fig:4reg14}\label{fig:4-reg-ok}]{\includegraphics[scale=0.3]{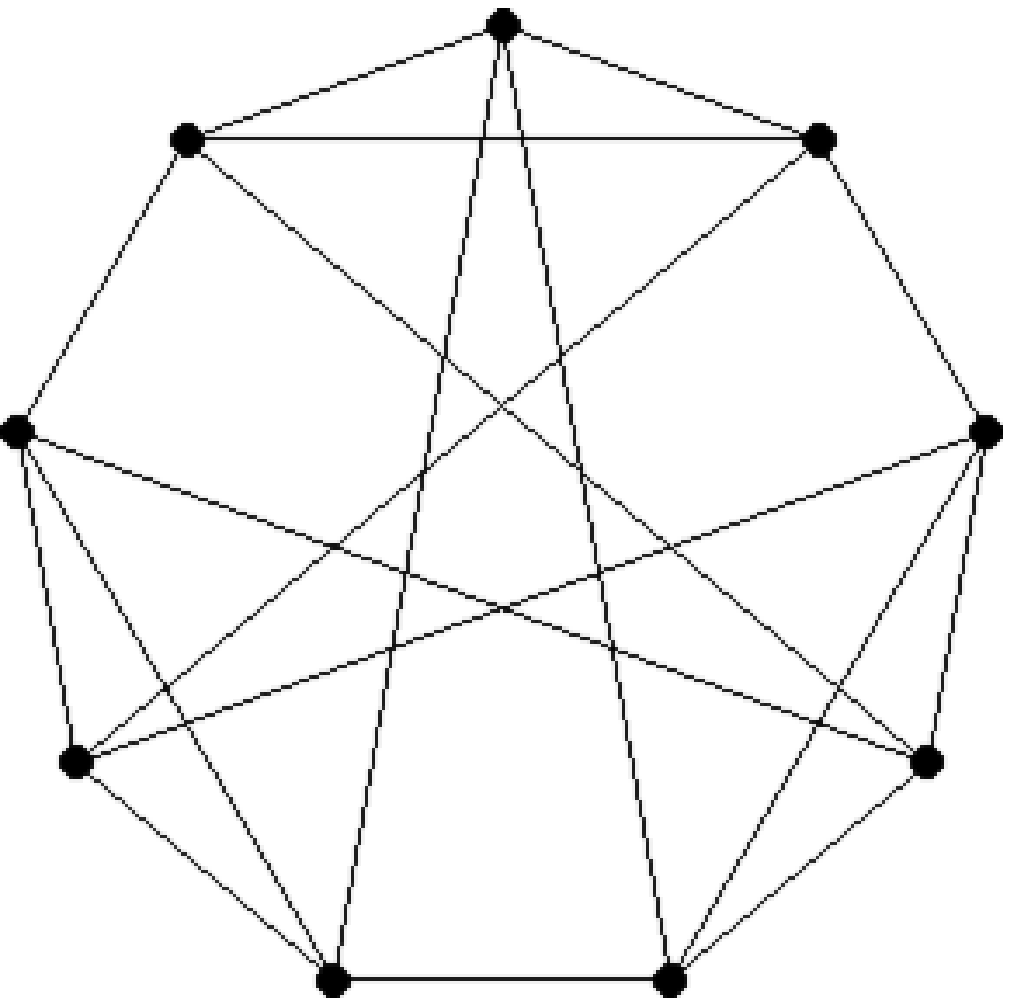}}\hfill
  \subfigure[\label{fig:4reg15}]{\includegraphics[scale=0.3]{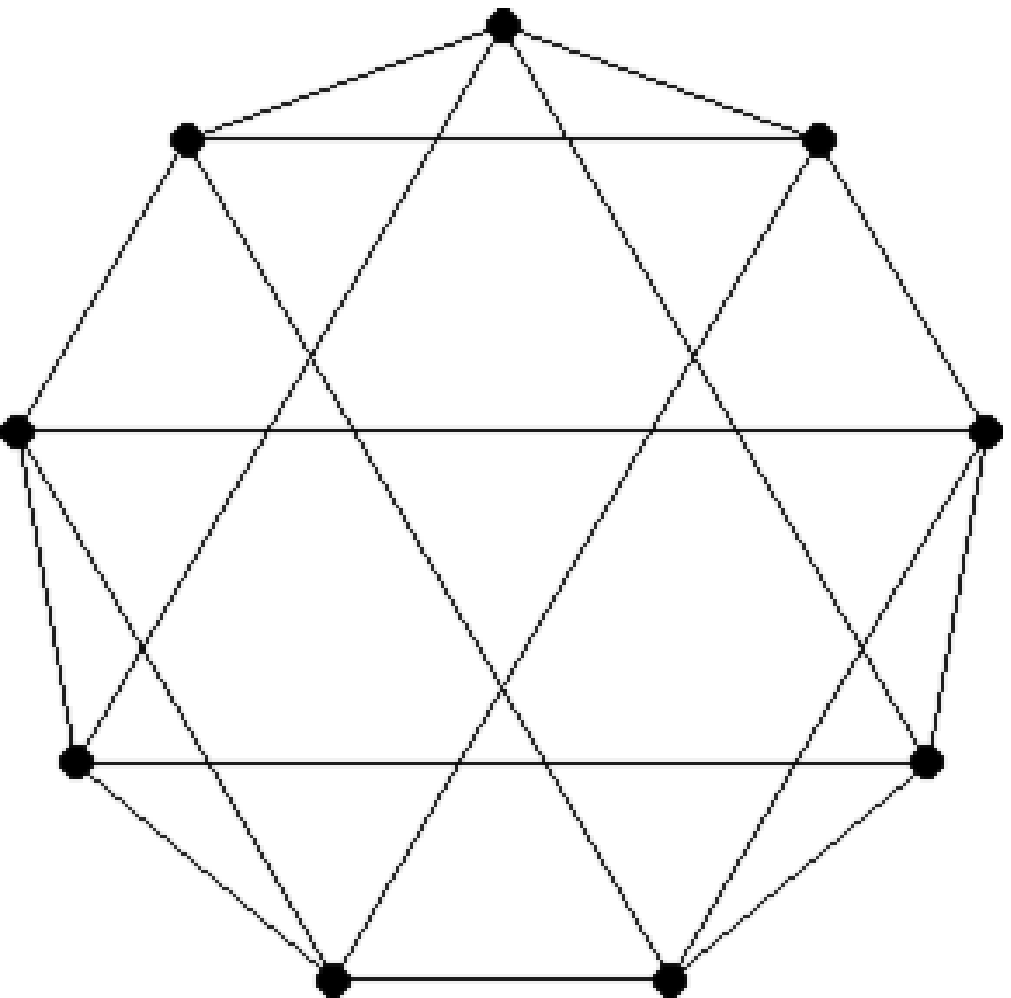}}\hfill
  \subfigure[\label{fig:4reg16}]{\includegraphics[scale=0.3]{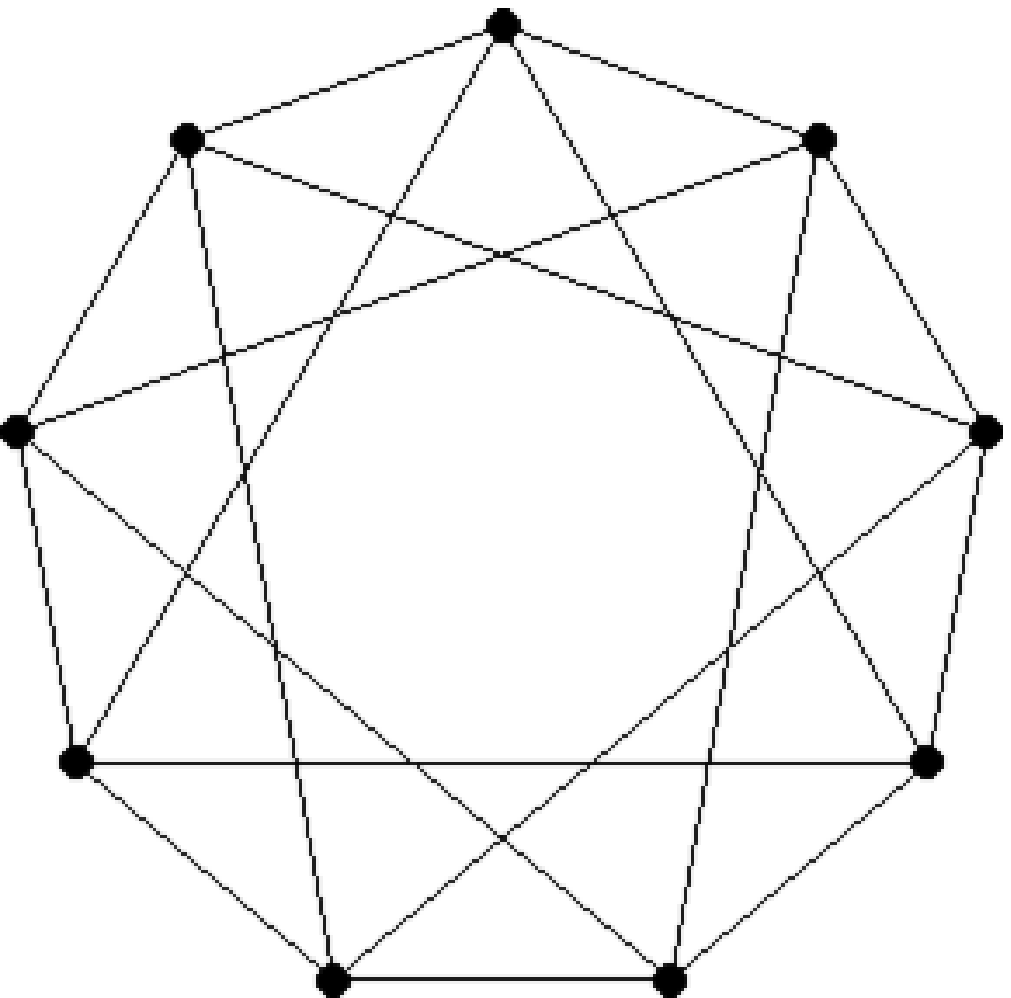}}

\caption{The $16$ connected $4$-regular graphs on $9$ vertices.\label{fig:4reg}}
\end{figure}

We finally give the same kind of result as
Theorem~\ref{thm:outerplanar} for planar graphs:

\begin{theorem}\label{thm:planar-20}
  If every planar graphs admits a signified homomorphism to an anti-twinned
  graph $H_{20}$ of order $20$, then $H_{20}$ is isomorphic to $\trk{9}$.
\end{theorem}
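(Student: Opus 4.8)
The plan is to mimic the structure of the proof of Theorem~\ref{thm:outerplanar}, but working with anti-twinned target graphs of order $20$ and using the degree-based matching conditions forced by the lower-bound gadgets. Let $H_{20}$ be an anti-twinned graph of order $20$ such that every planar graph admits a signified homomorphism to $H_{20}$. By Corollary~\ref{cor:bound_signed}(\ref{cor:bound_signed-1}), we necessarily have $\chi_s(\mathcal{P}_3)\le 10$; in fact the point is to pin down $H_{20}$ exactly. Because $H_{20}$ is anti-twinned, $H_{20}=AT(H',\Lambda')$ for some signified graph $(H',\Lambda')$ on $10$ vertices, and $\trk{9}=AT(\spk{9}^+)$ is exactly such a graph with $\spk{9}^+$ on $10$ vertices; so it suffices to understand which $10$-vertex signified graphs $H'$ can serve, and to show $H'\cong\spk{9}^+$.

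The first step is to set up the gadgets. Using the planar graphs $(G_3)$, $(G_4)$, $(G_5)$ constructed in the proof of Theorem~\ref{thm:lower-planar} (and the auxiliary Properties~1 and~2 isolated in the proof of Theorem~\ref{thm:outerplanar}), I would build a planar graph $G^\star$ by gluing many copies of $(G_4)$ along the vertices of one fixed copy, analogously to $(G'_4)$ in the outerplanar case but one ``level'' higher. The effect of such a construction is: in any signified homomorphism of $G^\star$ to $H_{20}$, each of the $20$ colors must have at least $9$ distinct positive neighbors and $9$ distinct negative neighbors, and — by pushing the gadget one further level — among those positive (resp. negative) neighbors there is a prescribed matching/clique-structure forced by Properties~1 and~2. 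Since $H_{20}$ is anti-twinned, $|N^+(x)|=|N^-(x)|$ automatically; combined with $|N^+(x)|\ge 9$ and $|N^+(x)|+|N^-(x)|\le 18$ (no edge joins $x$ to its anti-twin), we get $|N^+(x)|=|N^-(x)|=9$, so the positive subgraph of $H_{20}$ is $9$-regular on $20$ vertices.

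The second step is the structural classification. Writing $H_{20}=AT(H')$, the $9$-regularity of the positive part translates into a condition on $H'$ (on $10$ vertices): every vertex of $H'$ together with the ``$\infty$-type'' behaviour must have balanced positive/negative degree, forcing $H'$ to be obtained from a $4$-regular-type signified graph on $9$ vertices by adding one universal positive vertex — i.e.\ $H'=(H'')^+$ for $H''$ a signified graph on $9$ vertices in which every vertex has $4$ positive and $4$ negative neighbours. Now the matching conditions extracted from the gadgets (Properties~1 and~2, exactly as in the proof of Theorem~\ref{thm:outerplanar}) say that in $H''$ the positive neighbours of each vertex contain a positive perfect matching and the negative neighbours contain a negative perfect matching. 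By the computer search of Meringer~\cite{m99} referenced there, the unique such $H''$ is $\spk{9}$. Hence $H'=\spk{9}^+$ and $H_{20}=AT(\spk{9}^+)=\trk{9}$.

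The main obstacle I expect is the bookkeeping of the gadget construction: one must make sure the planar graph $G^\star$ really forces all $20$ colours to appear with the full matching structure on both their positive and negative neighbourhoods simultaneously, rather than just on one ``side'', and that gluing copies of $(G_4)$ (themselves planar) keeps the whole graph planar — this is the analogue of the step in Theorem~\ref{thm:lower-planar} where copies of $(G_4)$ are glued on the $27$ vertices of $(G_4)_0$, iterated once more. Once the $9$-regularity of the positive part and the two matching conditions on neighbourhoods are in hand, reducing to $H''=(H')$ minus a universal vertex and invoking the $16$-graph case analysis of Figure~\ref{fig:4reg} is a direct transcription of the outerplanar argument, with ``$SP_9$'' replaced by its Tromp extension.
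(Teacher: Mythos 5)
There is a genuine gap, and it sits exactly where you flag your ``main obstacle.'' The paper's proof does not classify $9$-regular positive subgraphs of $20$-vertex anti-twinned graphs at all; its key idea is a localization trick that your proposal is missing. The paper takes the outerplanar gadget $(G'_4)$ from the proof of Theorem~\ref{thm:outerplanar} and adds a single \emph{universal positive vertex} to obtain a planar graph $(G'_5)$. In any signified homomorphism of $(G'_5)$ to $H_{20}$, the universal vertex maps to some $v$ and all of $(G'_4)$ is forced into $N^+(v)$; since $H_{20}$ is anti-twinned, $|N^+(v)|=|N^-(v)|$ and $v$ is not adjacent to its anti-twin, so $|N^+(v)|=9$, and Theorem~\ref{thm:outerplanar} applies verbatim to the $9$-vertex graph induced by $N^+(v)$, giving $\spk{9}$. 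Then $v\cup N^+(v)$ induces the clique $\spk{9}^+$, which contains no anti-twin pair, so $H_{20}\cong AT(\spk{9}^+)=\trk{9}$. Your construction instead glues copies of $(G_4)$, which only places a copy of $(G_3)$ (with its $9$ colors) inside each positive neighborhood; this yields the degree condition $|N^+(x)|=|N^-(x)|=9$ and some matching information one level down, but it does not place the full outerplanar gadget $(G'_4)$ inside a single vertex's positive neighborhood, which is what is needed to invoke Theorem~\ref{thm:outerplanar}.

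Consequently, your second step does not go through as written. Knowing that the positive part of $H_{20}=AT(H')$ is $9$-regular only tells you that $H'$ is a complete signified graph on $10$ vertices; the assertion that $H'$ must be ``a $4$-regular-type signified graph on $9$ vertices plus one universal positive vertex'' is unjustified (no vertex of $H'$ need be universally positive a priori, and even after resigning one vertex to be all-positive you still know nothing about the signs on the remaining $K_9$ without the localization step). The claimed ``direct transcription'' of the $16$-graph case analysis of Figure~\ref{fig:4reg} therefore does not apply: that analysis classifies $4$-regular graphs on $9$ vertices, whereas your route would require classifying a far larger family of signed complete graphs on $10$ vertices (equivalently, $9$-regular positive parts on $20$ vertices) subject to matching conditions you have not actually pinned down. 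The fix is precisely the paper's one-line construction $(G'_5)=(G'_4)^+$.
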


\begin{proof}
  Consider the outerplanar graph $(G'_4)$ of the proof of
  Theorem~\ref{thm:outerplanar}. Let $(G'_5)$ be the planar graph
  obtained from $(G'_4)$ plus a universal vertex positively linked
  to all the vertices of $(G'_4)$.
  
  The subgraph of $(G'_5)$ induced by the vertices of $(G'_4)$
  necessarily maps to the positive neighborhood of some vertex $v$ of
  $(H_{20})$. Since $(H_{20})$ is anti-twinned, every vertex has exactly
  $9$ positive neighbors. Therefore, by Theorem~\ref{thm:outerplanar},
  the positive neighborhood of $v$ is isomorphic to $SP_9$. Then the
  subgraph of $(H_{20})$ induced by $v$ and its positive neighborhood
  is $SP_9^+$. Since $SP_9^+$ is a clique of order $10$, it does not
  contain a pair of anti-twin vertices and thus $(H_{20})$ is isomorphic to
  $AT(SP_9^+)$. Then, by definition of Tromp signified Paley graphs,
  $(H_{20})$ is isomorphic to $\trk{9}$.
\end{proof}

\section{Results on signed homomorphisms}\label{sec:signed}

The upper bounds on signified chromatic number given in
Theorems~\ref{thm:alon-marshall},~\ref{thm:main-montejano},~\ref{thm:outer-signified}
and~\ref{thm:main-signified} are obtained by showing that the
considered graph class admits a signified homomorphism to some target
graph. As mentioned in
Corollary~\ref{cor:bound_signed}(\ref{cor:bound_signed-1}), when the
target graph of a signified homomorphism is an anti-twinned graph, this
gives a bound on the signed chromatic number. The upper bounds on
signed chromatic number given in this section are a direct consequence
of theses above-mentioned results.

Naserasr et al.~\cite{nrs12} proved the following:

\begin{theorem}[\cite{nrs12}]\label{thm:nrs-planar}
Let $G$ be a graph that admits an acyclic $k$-coloring. We have
$\chi_s(G)\le\ceil{\frac{k}{2}}\cdot 2^{k-1}$.
\end{theorem}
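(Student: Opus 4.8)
The plan is to reduce the bound on $\chi_s(G)$ to the earlier bound on $\chi_2$ via the anti-twinned construction, exploiting Corollary~\ref{cor:bound_signed}(\ref{cor:bound_signed-1}). Recall that by Theorem~\ref{thm:alon-marshall}, if $G$ admits an acyclic $k$-coloring then $(G,\Sigma)$ admits a signified homomorphism to $ZS_k$, which has $k\cdot 2^{k-1}$ vertices; and by Proposition~\ref{prop:ZSk}, $ZS_k$ is anti-twinned. Applying Corollary~\ref{cor:bound_signed}(\ref{cor:bound_signed-1}) directly would only give $\chi_s(G)\le \frac{k\cdot 2^{k-1}}{2}=k\cdot 2^{k-2}$, which is weaker than $\ceil{\frac{k}{2}}\cdot 2^{k-1}$ only when... actually $k\cdot 2^{k-2} = \frac{k}{2}\cdot 2^{k-1}\le \ceil{\frac{k}{2}}\cdot 2^{k-1}$, so this crude bound is in fact already stronger for even $k$ and essentially matches the claim. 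So the first thing I would do is simply record that $\chi_s(G)\le k\cdot 2^{k-2}$, and then observe that for even $k$ this is exactly $\frac{k}{2}\cdot 2^{k-1}=\ceil{k/2}\cdot 2^{k-1}$, so the claimed bound holds (and is even weaker than necessary).

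For odd $k$ the situation is tighter: $k\cdot 2^{k-2}=\frac{k}{2}\cdot 2^{k-1}$, but $\ceil{k/2}=\frac{k+1}{2}$, so $\ceil{k/2}\cdot 2^{k-1}=\frac{k+1}{2}\cdot 2^{k-1}=k\cdot 2^{k-2}+2^{k-2}$; thus the crude bound $k\cdot 2^{k-2}$ is actually \emph{better} than what is claimed, so the claimed inequality follows trivially here too. Hence the whole statement follows at once from Theorem~\ref{thm:alon-marshall}, Proposition~\ref{prop:ZSk}, and Corollary~\ref{cor:bound_signed}(\ref{cor:bound_signed-1}): one shows $(G,\Sigma)\to ZS_k$ as a signified homomorphism into an anti-twinned graph on $k\cdot 2^{k-1}$ vertices, and then halves. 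I would present it in exactly this order: state the acyclic $k$-coloring hypothesis, invoke the signified homomorphism into $ZS_k$, note $ZS_k$ is anti-twinned, and conclude $\chi_s(G)\le\frac{1}{2}|V(ZS_k)|=k\cdot 2^{k-2}\le\ceil{\frac{k}{2}}\cdot 2^{k-1}$.

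The only subtlety — and the one place where I would be careful — is the arithmetic comparison $k\cdot 2^{k-2}\le\ceil{\frac{k}{2}}\cdot 2^{k-1}$, i.e. $\frac{k}{2}\le\ceil{\frac{k}{2}}$, which is immediate; so there is genuinely no obstacle here. The "hard part," such as it is, is purely bookkeeping: making sure the reader sees that Corollary~\ref{cor:bound_signed}(\ref{cor:bound_signed-1}) applies because $ZS_k$ is anti-twinned (Proposition~\ref{prop:ZSk}), and that the order of $ZS_k$ is $k\cdot 2^{k-1}$ as recalled just before Theorem~\ref{thm:alon-marshall}. I would therefore write a short three-line proof: apply Theorem~\ref{thm:alon-marshall} to get a signified homomorphism of $(G,\Sigma)$ (for an arbitrary signature $\Sigma$) to $ZS_k$; since $ZS_k$ is anti-twinned of order $k\cdot 2^{k-1}$, Corollary~\ref{cor:bound_signed}(\ref{cor:bound_signed-1}) gives $\chi_s(G,\Sigma)\le\frac{k\cdot 2^{k-1}}{2}$; taking the maximum over all $\Sigma$ yields $\chi_s(G)\le k\cdot 2^{k-2}$, which is at most $\ceil{\frac{k}{2}}\cdot 2^{k-1}$.
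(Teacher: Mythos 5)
Your proof is correct, but note that the paper does not actually prove this statement: Theorem~\ref{thm:nrs-planar} is quoted from~\cite{nrs12} without proof, and the original argument there is a different one (it does not pass through the anti-twinned structure of $ZS_k$, which is why it only reaches $\ceil{\frac{k}{2}}\cdot 2^{k-1}$ rather than $\frac{k}{2}\cdot 2^{k-1}$). What you have written is, almost verbatim, the paper's own derivation of the \emph{stronger} Theorem~\ref{thm:main-acyclic}, namely $\chi_s(G)\le k\cdot 2^{k-2}$, obtained from Theorem~\ref{thm:alon-marshall}, Proposition~\ref{prop:ZSk}, and Corollary~\ref{cor:bound_signed}(\ref{cor:bound_signed-1}); the paper explicitly presents that chain as an \emph{improvement} of the cited bound. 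So your argument is valid and does establish the stated inequality (since $k\cdot 2^{k-2}\le\ceil{\frac{k}{2}}\cdot 2^{k-1}$, with equality for even $k$ and strict inequality for odd $k$), but it proves more than the theorem claims and is not a reconstruction of the proof in~\cite{nrs12}. One small slip in your write-up: you say the crude bound is ``stronger for even $k$''; in fact it coincides with the claimed bound for even $k$ and is strictly stronger only for odd $k$. That does not affect the conclusion.
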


By Theorem~\ref{thm:alon-marshall}, $(G,\Sigma)$ admits a signified
homomorphism to $ZS_k$ whenever $G$ admits an acyclic $k$-coloring. By
Proposition~\ref{prop:ZSk} and
Corollary~\ref{cor:bound_signed}(\ref{cor:bound_signed-1}), we get the
following new upper bound that improves Theorem~\ref{thm:nrs-planar}.

\begin{theorem}\label{thm:main-acyclic}
Let $G$ be a graph that admits an acyclic $k$-coloring. We have
$\chi_s(G)\le k\cdot 2^{k-2}$. 
\end{theorem}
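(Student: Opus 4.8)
The plan is to combine the chain of results already assembled in the excerpt. By Theorem~\ref{thm:alon-marshall}, if $G$ admits an acyclic $k$-coloring, then every signed graph $(G,\Sigma)$ admits a signified homomorphism to the signified Zielonka graph $ZS_k$. By Proposition~\ref{prop:ZSk}, the graph $ZS_k$ is anti-twinned, so by Corollary~\ref{cor:bound_signed}(\ref{cor:bound_signed-1}) we obtain $\chi_s(G) \le \frac{|V(ZS_k)|}{2}$. Since $ZS_k$ has $k\cdot 2^{k-1}$ vertices (as recalled just before Theorem~\ref{thm:alon-marshall}), this yields $\chi_s(G) \le \frac{k\cdot 2^{k-1}}{2} = k\cdot 2^{k-2}$, which is exactly the claimed bound.

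So the proof is essentially a one-line assembly, and there is no real obstacle: the work has all been done in Section~\ref{sec:target}. The only thing worth being slightly careful about is that Corollary~\ref{cor:bound_signed}(\ref{cor:bound_signed-1}) is stated for a fixed signature $(G,\Sigma)$ admitting a signified homomorphism to an anti-twinned target $T$, whereas $\chi_s(G)$ is a maximum over all signatures $\Sigma \subseteq E(G)$; but Theorem~\ref{thm:alon-marshall} provides a signified homomorphism of $(G,\Sigma)$ to $ZS_k$ for \emph{every} $\Sigma$, so applying the corollary to each signature and taking the maximum gives the uniform bound $\chi_s(G) \le k\cdot 2^{k-2}$.

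It is also worth noting, as the surrounding text does, that this improves Theorem~\ref{thm:nrs-planar} of Naserasr et al., whose bound $\ceil{\frac{k}{2}}\cdot 2^{k-1}$ equals $k\cdot 2^{k-2}$ when $k$ is even but is strictly larger (by $2^{k-2}$) when $k$ is odd; the improvement comes precisely from exploiting that $ZS_k$ is anti-twinned and hence half of its vertices suffice as a signed target, rather than bounding $\chi_s$ by a construction of size $\ceil{\frac{k}{2}}\cdot 2^{k-1}$.
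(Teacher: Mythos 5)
Your proof is correct and is exactly the argument the paper uses: Theorem~\ref{thm:alon-marshall} gives a signified homomorphism of every $(G,\Sigma)$ to $ZS_k$, Proposition~\ref{prop:ZSk} shows $ZS_k$ is anti-twinned, and Corollary~\ref{cor:bound_signed}(\ref{cor:bound_signed-1}) yields $\chi_s(G)\le\frac{k\cdot 2^{k-1}}{2}=k\cdot 2^{k-2}$. Your extra remark about taking the maximum over all signatures is a harmless (and correct) clarification of a point the paper leaves implicit.
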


Recall that Huemer et al.~\cite{hfmm08} proved that
Theorem~\ref{thm:alon-marshall} is tight, i.e. there exists a planar
graph $(G)$ such that $\chi_2(G) = k\cdot 2^{k-1}$. By
Lemma~\ref{lem:signed-signified}, we can deduce that $\chi_s(G) \ge
k\cdot 2^{k-2}$, showing that Theorem~\ref{thm:main-acyclic} is actually
tight.

We also get the following upper bounds:

\begin{theorem}\ 
 \begin{enumerate}\label{thm:main}
 \item $\chi_s(\mathcal{O}_4)\le 4$.\label{thm:outer3}
 \item $\chi_s(\mathcal{P}_3)\le 40$.\label{thm:main3}
 \item $\chi_s(\mathcal{P}_4)\le 25$.\label{thm:main4}
 \item $\chi_s(\mathcal{P}_5)\le 10$.\label{thm:main5}
 \item $\chi_s(\mathcal{P}_6)\le 6$.\label{thm:main6}
 \end{enumerate}
\end{theorem}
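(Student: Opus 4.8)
The plan is to derive each of the five bounds directly from a previously established signified-homomorphism result together with Corollary~\ref{cor:bound_signed}(\ref{cor:bound_signed-1}), which states that a signified homomorphism of $(G,\Sigma)$ to an anti-twinned graph $T$ yields $\chi_s(G)\le|V(T)|/2$. Thus the whole argument is a matter of checking, for each item, that the target graph supplied by the corresponding signified result is anti-twinned, and then halving its order.

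For item~(\ref{thm:outer3}), Theorem~\ref{thm:outer-signified} gives a signified homomorphism of every outerplanar graph of girth at least $4$ to $AT(K_4^*)$; this target is anti-twinned by construction and has $8$ vertices, so $\chi_s(\mathcal{O}_4)\le 4$. For item~(\ref{thm:main3}), Corollary~\ref{cor:main3-montejano} gives a signified homomorphism of every planar graph to $ZS_5$, which is anti-twinned by Proposition~\ref{prop:ZSk} and has $5\cdot 2^{4}=80$ vertices; hence $\chi_s(\mathcal{P}_3)\le 40$. (Equivalently, this is the $k=5$ instance of Theorem~\ref{thm:main-acyclic}, using Borodin's theorem that planar graphs are acyclically $5$-colorable.) For item~(\ref{thm:main4}), Theorem~\ref{thm:main-signified} gives a signified homomorphism of every planar graph of girth at least $4$ to $AT(\spk{25})$, anti-twinned by definition with $2\cdot 25=50$ vertices, so $\chi_s(\mathcal{P}_4)\le 25$. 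For items~(\ref{thm:main5}) and~(\ref{thm:main6}), Theorem~\ref{thm:main-montejano} gives signified homomorphisms of planar graphs of girth at least $5$ (resp.\ $6$) to $\trk{9}$ (resp.\ $\trk{5}$); by the definition of Tromp signified Paley graphs as $AT((SP_q)^+)$ these are anti-twinned, with $2q+2$ vertices, i.e.\ $20$ and $12$ respectively, yielding $\chi_s(\mathcal{P}_5)\le 10$ and $\chi_s(\mathcal{P}_6)\le 6$.

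Since every target graph invoked above is either visibly an $AT(\cdot)$ graph or is certified anti-twinned by an earlier proposition, there is essentially no obstacle: the proof is a short enumeration of five applications of Corollary~\ref{cor:bound_signed}(\ref{cor:bound_signed-1}). The only point requiring a word of care is making explicit, in each case, why the relevant target is anti-twinned — immediate for $AT(K_4^*)$, $AT(\spk{25})$, $\trk{9}$, $\trk{5}$ (Tromp graphs are anti-twinned graphs of $(SP_q)^+$), and given by Proposition~\ref{prop:ZSk} for $ZS_5$ — after which each bound follows by dividing the order of the target by $2$.

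\begin{proof}
 Each item follows from a signified homomorphism result of the previous sections together with
 Corollary~\ref{cor:bound_signed}(\ref{cor:bound_signed-1}): whenever $(G,\Sigma)$ admits a signified
 homomorphism to an anti-twinned graph $T$, one has $\chi_s(G)\le|V(T)|/2$. It therefore suffices to
 check, for each item, that the target graph is anti-twinned and to halve its order.
 \begin{enumerate}
 \item By Theorem~\ref{thm:outer-signified}, every outerplanar graph with girth at least $4$ admits a
  signified homomorphism to $AT(K_4^*)$, which is anti-twinned by construction and has $8$ vertices.
  Hence $\chi_s(\mathcal{O}_4)\le 4$.
 \item By Corollary~\ref{cor:main3-montejano}, every planar graph admits a signified homomorphism to
  $ZS_5$, which is anti-twinned by Proposition~\ref{prop:ZSk} and has $5\cdot 2^{4}=80$ vertices.
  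Hence $\chi_s(\mathcal{P}_3)\le 40$.
 \item By Theorem~\ref{thm:main-signified}, every planar graph with girth at least $4$ admits a
  signified homomorphism to $AT(\spk{25})$, which is anti-twinned by construction and has $50$
  vertices. Hence $\chi_s(\mathcal{P}_4)\le 25$.
 \item By Theorem~\ref{thm:main-montejano}(\ref{thm:main5-montejano}), every planar graph with girth
  at least $5$ admits a signified homomorphism to $\trk{9}$. Since $\trk{9}\cong AT((SP_9)^+)$ is a
  Tromp signified graph, it is anti-twinned, and it has $2\cdot 9+2=20$ vertices. Hence
  $\chi_s(\mathcal{P}_5)\le 10$.
 \item By Theorem~\ref{thm:main-montejano}(\ref{thm:main6-montejano}), every planar graph with girth
  at least $6$ admits a signified homomorphism to $\trk{5}$. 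As above, $\trk{5}\cong AT((SP_5)^+)$ is
  anti-twinned and has $2\cdot 5+2=12$ vertices. Hence $\chi_s(\mathcal{P}_6)\le 6$.
 \end{enumerate}
\end{proof}
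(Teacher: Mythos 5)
Your proof is correct and follows essentially the same route as the paper: each item is obtained by combining the corresponding signified-homomorphism result (Theorem~\ref{thm:outer-signified}, Corollary~\ref{cor:main3-montejano} via Proposition~\ref{prop:ZSk}, Theorem~\ref{thm:main-signified}, and Theorem~\ref{thm:main-montejano}) with Corollary~\ref{cor:bound_signed}(\ref{cor:bound_signed-1}). The only difference is that you spell out explicitly why each target is anti-twinned, which the paper leaves implicit.
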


\begin{proof}
  \begin{enumerate}
  \item The result follows from Theorem~\ref{thm:outer-signified}
    and Corollary~\ref{cor:bound_signed}(\ref{cor:bound_signed-1}).
  \item By Theorem~\ref{thm:alon-marshall} and Borodin's
    result~\cite{b79}, every planar graph admits a signified
    homomorphism to $ZS_5$, a graph on 80 vertices, which is anti-twinned by
    Proposition~\ref{prop:ZSk}. The result then follows from
    Corollary~\ref{cor:bound_signed}(\ref{cor:bound_signed-1}). 
  \item The result follows from Theorem~\ref{thm:main-signified}
    and Corollary~\ref{cor:bound_signed}(\ref{cor:bound_signed-1}).
  \item The result follows from
    Theorem~\ref{thm:main-montejano}(\ref{thm:main5-montejano})
    and
    Corollary~\ref{cor:bound_signed}(\ref{cor:bound_signed-1}).
  \item The result follows from
    Theorem~\ref{thm:main-montejano}(\ref{thm:main6-montejano}) and
    Corollary~\ref{cor:bound_signed}(\ref{cor:bound_signed-1}).
  \end{enumerate}
\end{proof}

Concerning lower bounds, Naserasr et al.~\cite{nrs12} constructed a
planar graph with signed chromatic number $10$. Note that this result
also follows from Theorem~\ref{thm:lower-planar} and
Corollary~\ref{cor:bound_signed}(\ref{cor:bound_signed-1}). If $10$ is
the tight bound for signed chromatic number of planar graphs, we then
get the following from Theorem~\ref{thm:planar-20} and
Lemma~\ref{lem:signed-signified}: 

\begin{theorem}\label{thm:planar-10}
  If every planar graphs admits a signed homomorphism to a
  graph $H_{10}$ of order $10$, then $H_{10}$ is isomorphic to $SP_9^+$.
\end{theorem}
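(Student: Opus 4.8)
The plan is to reduce the statement about signed homomorphisms to the already-established statement about signified homomorphisms to anti-twinned graphs, namely Theorem~\ref{thm:planar-20}. Suppose every planar graph admits a signed homomorphism to some graph $H_{10}$ of order $10$. The first step is to pass to the anti-twinned target: by Lemma~\ref{lem:signed-signified}, a planar graph $(G,\Sigma)$ admits a signed homomorphism to $H_{10}$ if and only if it admits a signified homomorphism to $AT(H_{10})$, which has order $20$. Hence every planar graph admits a signified homomorphism to the anti-twinned graph $AT(H_{10})$ of order $20$.

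Now I would invoke Theorem~\ref{thm:planar-20} with $H_{20}=AT(H_{10})$: since $AT(H_{10})$ is by construction an anti-twinned graph of order $20$ to which every planar graph admits a signified homomorphism, Theorem~\ref{thm:planar-20} forces $AT(H_{10})\cong \trk{9}$. The remaining step is to deduce the shape of $H_{10}$ itself from the shape of its anti-twinned double. We know $\trk{9}=AT(SP_9^+)$ by the definition of the Tromp signified Paley graph. So $AT(H_{10})\cong AT(SP_9^+)$, and I need to argue that this isomorphism of anti-twinned doubles descends to an isomorphism $H_{10}\cong SP_9^+$. For this, observe that $H_{10}$ is recovered from $AT(H_{10})$ as the subgraph induced by one of the two isomorphic copies used in the anti-twinned construction; any isomorphism $AT(H_{10})\to AT(SP_9^+)$ either maps the copy $H_{10}^0$ onto a copy of $SP_9^+$ directly, or onto its anti-twin, which is again an isomorphic copy of $SP_9^+$ (since resigning an entire graph by flipping the roles of the two halves yields an isomorphic signified graph — indeed $SP_9^+$ is a clique whose anti-twin inside $\trk{9}$ is another clique of order $10$, and all these cliques are equivalent). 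In either case $H_{10}\cong SP_9^+$, which is the claim.

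The one subtlety worth spelling out is why $AT$ is essentially injective on the relevant graphs, i.e. why $AT(H_{10})\cong AT(SP_9^+)$ implies $H_{10}\cong SP_9^+$ rather than merely $H_{10}\sim SP_9^+$ up to resigning. Here the key fact is that $SP_9^+$ is a clique of order $10$: inside $\trk{9}$ the two copies $SP_9^{+0}$ and $SP_9^{+1}$ are precisely the two maximum cliques, and an order-$20$ anti-twinned graph contains no pair of anti-twin vertices within such a clique, so the clique structure pins down the two halves up to the automorphism $\gamma_1$ (Lemma~\ref{lem:transitivity}) that swaps them. Thus the decomposition of $\trk{9}$ into two copies of $SP_9^+$ is canonical, and any isomorphism from $AT(H_{10})$ must carry $H_{10}^0$ to one of these canonical halves, giving $H_{10}\cong SP_9^+$. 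I expect this canonicity argument — showing the two "copies" are intrinsically recognizable, so that an abstract isomorphism of doubles restricts to an isomorphism of halves — to be the main obstacle, though it is short given that $SP_9^+$ is a clique and Theorem~\ref{thm:planar-20} has already done the heavy lifting of identifying $AT(H_{10})$.
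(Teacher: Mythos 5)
Your overall route is exactly the paper's: the paper obtains this theorem by combining Lemma~\ref{lem:signed-signified} (a signed homomorphism to $H_{10}$ is the same as a signified homomorphism to the anti-twinned graph $AT(H_{10})$ of order $20$) with Theorem~\ref{thm:planar-20}, and it leaves the final descent from $AT(H_{10})\cong\trk{9}$ back to $H_{10}$ implicit. You correctly single out that descent as the step needing justification, but the justification you give is flawed.

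The problem is your canonicity claim. The underlying unsigned graph of $\trk{9}=AT(SP_9^+)$ is $K_{20}$ minus the perfect matching formed by the anti-twin pairs (anti-twins are the only non-adjacent pairs, because $SP_9^+$ is a clique). Consequently \emph{every} transversal picking one vertex from each of the ten anti-twin pairs induces a clique of order $10$; there are $2^{10}$ maximum cliques, not two, so the halves $SP_9^{+0}$ and $SP_9^{+1}$ are not intrinsically recognizable as you assert. What is true is that any signified isomorphism preserves the anti-twin relation (it is characterized by $N^+(u)=N^-(v)$ and $N^-(u)=N^+(v)$), so the image of $H_{10}^0$ under an isomorphism $AT(H_{10})\to\trk{9}$ is some transversal $T$, and the signified graph induced by $T$ is the graph obtained from $SP_9^+$ by resigning the set of positions where $T$ selects the copy-$1$ vertex. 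This yields only that $H_{10}$ is isomorphic to a resigning of $SP_9^+$, i.e., equivalent to $SP_9^+$ as a signed graph. That is also the most one can hope for: resigning $SP_9^+$ at $\infty$ gives a graph with no vertex having nine positive neighbors, hence not isomorphic to $SP_9^+$ as a signified graph, yet it is an equally valid order-$10$ target for signed homomorphisms. So the word ``isomorphic'' in the statement has to be read up to resigning, and your concluding claim of a strict isomorphism $H_{10}\cong SP_9^+$ cannot be established by (or salvaged within) the argument you propose.
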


We can easily construct a planar graph of girth $4$ with signed
chromatic number $6$ (see Figure~\ref{fig:borneinf4}). Finally, for
higher girths, note that every even cycle with exactly one negative
edge needs $4$ colors for any signed coloring.

\section{Conclusion}\label{sec:conclusion}

One of our aims was to introduce and study some relevant target graphs
for signified homomorphisms. We studied the anti-twinned graph
$AT(G,\Sigma)$, the signified Zielonka graph $ZS_k$, the signified
Paley graph $\spq$, and the signified Tromp Paley graph $\tr$.
Theorems~\ref{thm:outerplanar},~\ref{thm:planar-20}
and~\ref{thm:planar-10} suggest that such target graphs are indeed
significant.

We proved that there exist planar graphs with signified chromatic
number $20$ and ask whether this bound is tight:

\begin{openprob}
  Does every planar graph admit a signified homomorphism to $\trk{9}$ ?
\end{openprob}

We have checked by computer that every $4$-connected planar
triangulation with at most $15$ vertices admits a homomorphism to
$\trk{9}$. The restriction to $4$-connected triangulations (i.e.
triangulations without separating triangles) is justified by
Lemma~\ref{lem:triangle-transitif}. For the $2^{25}$ non-equivalent
signatures of each of the $6244$ $4$-connected planar triangulations
with $15$ vertices, our computer check took 150 CPU-days. Checking
$4$-connected triangulations with more vertices would require too much
computing power.

Finally, it would be nice to drop the condition ``anti-twinned" in
Theorem~\ref{thm:planar-20}.

\begin{figure}
  \begin{center}
    \includegraphics[scale=0.8]{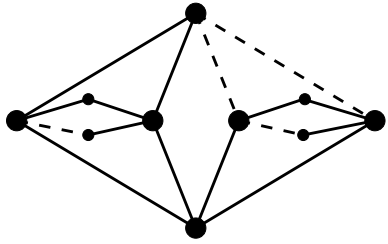}
    \caption{A planar graph of girth 4 with signed chromatic number
      6.\label{fig:borneinf4}} 
  \end{center}
\end{figure}

\end{document}